
\documentclass[11pt,draftclsnofoot,onecolumn]{IEEEtran}
%


%

%
\usepackage[noadjust]{cite}


\usepackage[pdftex]{graphicx}
\usepackage{epstopdf}

\ifCLASSINFOpdf
\else
\fi
%
%

%
\usepackage[cmex10]{amsmath}
\usepackage{amsfonts}
\usepackage{amssymb,amsthm}
\usepackage{extarrows}
%
\interdisplaylinepenalty=2500

%

%
\usepackage{array}

\usepackage{enumerate}
\usepackage{mathtools,empheq}


\usepackage{eqparbox}


\usepackage[font=footnotesize]{caption}


%

\usepackage[caption=false,font=footnotesize,labelfont=bf]{subfig}
\usepackage{setspace, color, bbm}

\hyphenation{op-tical net-works semi-conduc-tor}

\DeclareMathOperator*{\argmin}{arg\,min}

\def\mathbi#1{\textbf{\em #1}}
\newtheorem{prop}{\bf{Proposition}}
\newtheorem{lem}{\bf {Lemma}}

\theoremstyle{definition}
\newtheorem{defn}{\bf Definition}

\theoremstyle{remark}
\newtheorem{remk}{\bf Remark}

\begin{document}
%
\title{Joint Filter and Waveform Design for Radar STAP in Signal Dependent Interference}
%
%

%

\author{Pawan Setlur, \emph{Member, IEEE}, Muralidhar Rangaswamy, \emph{Fellow, IEEE}
\thanks{P. Setlur is affiliated with the Wright State Research Inst., and as a research contractor with the US AFRL, WPAFB, OH, email:pawan.setlur.ctr@wpafb.af.mil.}
\thanks{M. Rangaswamy is with Sensors Directorate, U.S. AFRL, WPAFB, OH,
email:muralidhar.rangaswamy@wpafb.af.mil.} 
\thanks{{Approved for Public Release No.: 88ABW-2014-3392}.} }%

%
%

\markboth{\tiny{P. Setlur and M. Rangaswamy: AFRL Sensors Directorate Tech. Report, 2014.}}%
{AFRL, Sensors Directorate Tech. Report, 2014: Approved for Public release}
%



\newcommand{\nrd}[1]{\textcolor{red}{#1}}
\newcommand{\pas}[1]{\textcolor{magenta}{#1}}
\newcommand{\nbl}[1]{\textcolor{blue}{#1}}

\maketitle

\begin{abstract}
Waveform design is a pivotal component of {\it the fully adaptive radar} construct. In this paper we consider  waveform design for radar space time adaptive processing (STAP), accounting for the waveform dependence of the clutter correlation matrix. Due to this dependence,  in general, the joint problem of receiver filter optimization and radar waveform design becomes an intractable, non-convex optimization problem, Nevertheless, it is however shown to be individually convex either in the filter or in the waveform variables. We derive constrained versions of: a) the alternating minimization algorithm, b) proximal alternating minimization, and c) the constant modulus  alternating minimization, which, at each step, iteratively optimizes either the STAP filter or the waveform independently. A fast and slow time model permits waveform design in radar STAP but the primary bottleneck is the computational complexity of the algorithms.

\end{abstract}

\begin{IEEEkeywords}
Waveform design, waveform scheduling, space time adaptive radar, Capon beamformer, constant modulus, convex optimization, alternating minimization, regularization, proximal algorithms.
\end{IEEEkeywords}

%
\IEEEpeerreviewmaketitle

\section{Introduction}
%
%
%


\IEEEPARstart{T}{he} objective of this report is to address waveform design in radar space time adaptive processing (STAP) \cite{klemm2002,ward1994,guerci2003,Brennan1973}. An air-borne radar is assumed with an array of sensor elements observing a moving target on the ground. We will assume that the waveform design and scheduling are performed over one CPI rather than on an individual pulse repetition interval (PRI).To facilitate waveform design, we develop a STAP model considering the fast time samples along with the slow time processing. This is different from traditional STAP which generally considers the data after matched filtering \cite{klemm2002,ward1994}. Nonetheless STAP research efforts have been proposed which consider inclusion of fast time samples in space time processing, see for example \cite{klemm2002,madurasinghe2006,seliktar2006} and references therein.

In line with traditional STAP, we formulate the waveform design, as an minimum variance distortion-less response (MVDR) type optimization \cite{capon1969}. As we will see in the sequel, inclusion of the waveform increases the dimensionality of the  correlation matrix. Classical Radar STAP is computationally expensive but the waveform adaptive STAP increases the complexity by several orders of magnitude. Therefore, benefits of waveform design in STAP come at the expense of increased computational complexity. The noise, clutter, and interference are modeled stochastically and are assumed to be mutually uncorrelated. Endemic to airborne STAP, clutter is persistent in most range gates resulting from ground reflections. The clutter correlation matrix is a function of the waveform causing the joint reliever filter and waveform optimization to be non-convex with no closed form solution. However, it is analytically shown here that the STAP MVDR objective is convex with respect to (w.r.t.) the receiver filter for a fixed but arbitrary waveform, and vice versa. Therefore, alternating minimization approaches arise as natural candidate solutions. As such, alternating minimization itself has a rich history in the optimization literature, possibly motivated directly from the works in \cite{Powell1964,Powell1973,Zangwill1967,Ortega1970}, with some not so recent seminal contributions \cite{Luo1992, Auslender1992, Bertsekas1999,Grippo2000} and recent contributions (not exhaustive) \cite{Attouch2010,Beck2013}. Other celebrated algorithms such as the Arimoto-Blahut algorithm to calculate channel capacity, and the expectation-maximization (EM) algorithm are all examples of the alternating minimization.

Here we address the joint optimization problem via a constrained alternating minimization approach, which has the favorable property of monotonicity in successive objective evaluations. Convergence, performance guarantees and other properties pertinent to this algorithm are further addressed. Full rank correlation matrices are required in implementing the constrained alternating minimization approach. In practice, radar STAP contends with rank deficient correlation matrices due to lack of homogeneous training data. In this case, the constrained alternating minimization approach is not implementable. To addresses this issue, we consider regularization of the STAP objective via strongly convex functions resulting in the constrained proximal alternating minimization \cite{Setlurasilomar2014}. Proximal algorithms, originally proposed by \cite{Martinet1970,Rockafeller1973} are well suited candidate techniques for constrained, large scale optimization
\cite{Attouch2010,Bertsekas1994,Rockafeller1976,Combettes2011,
Parikh2013}, applicable readily to our waveform adaptive STAP problem. In fact, as we will see subsequently the constrained proximal alternating minimization results in diagonal loading solutions, and for optimization-specific interpretations, the  load factors may be related to the Lipschitz constants (w.r.t. the gradient).

{\bf Signal dependent interference: Chicken or the Egg?} The fundamental problem in practical radar waveform design  is analogous to the chicken or the egg problem. Signal dependent interference, i.e., clutter, can only be perfectly characterized by transmitting a signal. Herein lies the central problem. The estimated clutter properties could therefore be dependent on what was transmitted in the first place. This is especially true for frequency selective and dispersive clutter responses frequently encountered in radar operations, for example, urban terrain. Therefore, any claim of optimality is myopic. Sadly the same problem would also persist when the target impulse responses are used to shape the waveform. Unfortunately, and  as famously stated by Woodward \cite{woodward1952information,
woodward1953theory}, ``\ldots what to transmit remains substantially unanswered''  \cite{woodward1953probability,Benedetto2009}. 

We will assume like other works in the signal dependent interference waveform design
\cite{Delong1967,Delong1969,Delong1970,Kay2007,vaidyanathan2009,
Pillai2000,Demaio2012,Demaio2013,Hongbin2014}, that the clutter response is known {\it a priori}. To a certain extent, this may be obtained via a combination of, either previous radar transmission \cite{cochran2009waveform}, or  assuming that the topography is known from ground elevation maps, synthetic aperture radar imagery \cite{SetlurJSTSP2014}, or access to knowledge aided databases as in the DARPA's KASSPER program \cite{Guerci2006}. 

{\bf Literature:}
The signal dependent interference waveform design problem has had a rich history \cite{middleton1996,van2004detection}. Iterative approaches but not limited alternating minimization type techniques have been the subject of work in \cite{Delong1967,Delong1969,Delong1970,Kay2007,vaidyanathan2009,
Pillai2000,Stoica2012,Demaio2012,Demaio2013,Hongbin2014} for SISO, MIMO radars but never in radar STAP. Waveform design for STAP without considering the signal dependent interference clutter was addressed in \cite{pattonstap2012}, where the authors premise is that the degrees of freedom from the waveform could be used in suppressing the interference and noise, while the degrees of freedom from the filter could be used  exclusively for suppressing the clutter. A joint STAP waveform and STAP filter design was never considered. Further, their premise is erroneous for the following several reasons. For any radar application, but especially in STAP, obtaining range cells which are interference free or clutter free is impossible. Nonetheless assuming this was possible, then, the weight vector for exclusive clutter suppression uses {\it the  inverse of the clutter interference correlation matrix} only, and not, as stated in \cite{pattonstap2012}, the inverse of the (clutter+noise+interference) correlation matrix. Furthermore such a detector may have disastrous consequences,  because control in the false alarm rates becomes impossible due to the self induced coloring on other range cells which are contaminated by the clutter plus interference plus noise.

Other contributions in waveform design and waveform scheduling for  extended targets in radar using information theoretic measures, tracking etc can be seen in \cite{setlurspie2012,setlurssp2012,bell1993information,
bell1988thesis,Lesham2007,romero2008information}, \cite{paper:AussiesSensorScheduleRadar,sira2006waveform,
sira2009waveform,Kershaw:2004,kershaw1997waveform,Li2008}, and the references therein.

We outline some of the contributions for the signal dependent interference problem which have thus far appeared in the literature.

{\it Approaches different from Alternating minimization}: In \cite{Delong1967,Kay2007,Pillai2000,Stoica2012} a single sensor radar was assumed. In \cite{Delong1967}, the authors used the symmetry property of the cross-ambiguity function to design an iterative algorithm for the signal dependent interference problem. Their algorithm cannot be modified easily for the multi-sensor framework and when noise is in general colored. The problem was addressed from  a detection perspective in \cite{Kay2007}, and lead to a waterfilling \cite{bell1993information} type solution. A similar waterfilling type metric albeit in the discrete time domain was obtained in \cite{Stoica2012}, where the authors also imposed constant modulus and peak to average power ratio (PAPR) waveform constraints. An iterative algorithm was derived in \cite{Pillai2000}, where monotonic increase in SINR was not guaranteed, and was shown that waveform could always be chosen as minimum phase. 

{\it Alternating minimization type approaches}: In \cite{vaidyanathan2009}, a MIMO sensor framework was employed, convergence was not addressed, convexity was not proven, and no practical waveform constraints were imposed on the design. See also in this report, Section III, paragraph following Rem.~\ref{limitpointremk} where some of the conclusions drawn in \cite{vaidyanathan2009} are further discussed. Alternating minimization was used in \cite{Demaio2012,Demaio2013,Hongbin2014} but for reasons unknown, was called as sequential optimization. In \cite{Demaio2012,Demaio2013}, a SISO model advocating joint filter and radar code design (after matched filtering) was employed.  Analysis of the convexity of the objective in the individual filter or radar code was never shown. Convergence in iterates was not proven formally, neither was it shown via simulations. The constant modulus constraint was not invoked directly but through a similarity constraint. In \cite{Hongbin2014}, the authors used a  MIMO radar framework, and relaxation techniques were employed in their iterative algorithm. Neither convergence nor convexity was demonstrated analytically. Constant modulus constraint and similarity constraints were enforced separately in the waveform design. 

{\bf Notation:} The variable $N$ is used interchangeably with the number of the fast time samples, as well as, the conventional dimension of arbitrary real or complex (sub)spaces. Its meaning is readily interpreted from context. The  symbol $|| \cdot||$ always denotes the $l_2$ norm. Vectors are always lowercase bold, matrices are bold uppercase, $\lambda$ is typically reserved for eigenvalues (with $\lambda_o$ being an exception it used for the spatial frequency, defined later) and $\gamma$ is strictly reserved for the Lagrange multipliers ($\gamma_{pq}$ is an exception used for the radar cross section of the $p$-th scatterer in the $q$-th clutter patch). Solutions to the optimization are denoted as $(\cdot)_o$, i.e. the subscript $o$. the complex conjugate is denoted with $(\cdot)^{\ast}$. The set of reals, complex numbers, and natural numbers are denoted as $\mathbbm{R},\mathbbm{C},\mathbbm{N}$, respectively. Other symbols are defined upon first use and are standard in the literature. 

{\bf Organization:} The STAP fast time-slow time model is delineated in Section II, and in Section III, the filter and waveform optimization is derived. Some preliminary simulations are presented in section IV and  the resulting conclusions are drawn in Section V.


\section{STAP Model}
The radar consists of a calibrated  air-borne linear array, comprising $M$ sensor elements, each having an identical antenna pattern. Without loss of generality, assume that the first sensor in the array is the phase center, and acts as both a transmitter and receiver, the rest of the elements are purely receivers. The first sensor is located at $\mathbf{x_r}\in\mathbbm{R}^3$ and the ground based point target at  $\mathbf{x_t}\in\mathbbm{R}^3$. The radar transmits the burst of pulses: 
\begin{equation} \label{eq1}
 u(t)=\sum\limits_{l=1}^{L}s(t-lT_p)\exp(j2\pi f_o(t-lT_p)),t\in[0,T)
 \end{equation}
where, $f_o$ is the carrier frequency, and $T_p=1/f_p$ is the inverse of the pulse repetition frequency, $f_p$. The pulse width and bandwidth are denoted as $T$, $B$, respectively. The coherent processing interval (CPI) consists of $L$ pulses, each of width equal to $T$. The geometry of the scene is shown in Fig.~1, where $\theta_t$ and $\phi_t$ denote the azimuth and elevation. The radar and target are both assumed to be moving. 
 
For the time being, we ignore the noise, clutter and interference and assume a non-fluctuating target. Then the desired target's received signal for the $l$-th pulse, and at the $m$-th sensor element is given by
\begin{equation}\label{eq2}
s_{ml}(t)=\rho_t s(t-lT_p-\tau_m)e^{(j2\pi (f_o+f_{dm}) (t-lT_p-\tau_m))}
\end{equation}
where the target's observed Doppler shift is denoted as $f_{dm}$, and its  complex back-scattering coefficient  as  $\rho_t$. Assume that the array is along the local $x$ axis as shown in Fig.~1. Then, the coordinates of the $m$-th element is given by $\mathbf{x_t}+m\mathbf{d},\mathbf{d}:=[d,0,0]^T,m=0,1,2\ldots,M-1$, where $d$ is the inter-element spacing. The delay $\tau_m$ could be re-written as
\small \begin{align}
&\tau_m=||\mathbf{x_r}-\mathbf{x_t}||/c+||\mathbf{x_r}+m\mathbf{d}-\mathbf{x_t}||/c \nonumber \\
       &=\dfrac{||\mathbf{x_r}-\mathbf{x_t}||}{c}+\dfrac{||\mathbf{x_r}-\mathbf{x_t}||}{c} \sqrt{1+\frac{||m\mathbf{d}||^2}{||\mathbf{x_r}-\mathbf{x_t}||^2}+\frac{2m\mathbf{d}^T(\mathbf{x_r}-\mathbf{x_t})}{||\mathbf{x_r}-\mathbf{x_t}||^2}} \nonumber \\
       &\overset{(a)}{\equiv}\dfrac{||\mathbf{x_r}-\mathbf{x_t}||}{c}+\dfrac{||\mathbf{x_r}-\mathbf{x_t}||}{c} \left(1+\frac{m\mathbf{d}^T(\mathbf{x_r}-\mathbf{x_t})}{||\mathbf{x_r}-\mathbf{x_t}||^2} \right) \label{eq3} \\
       &=2\dfrac{||\mathbf{x_r}-\mathbf{x_t}||}{c}+\frac{m\mathbf{d}^T(\mathbf{x_r}-\mathbf{x_t})}{c||\mathbf{x_r}-\mathbf{x_t}||}, \label{eq4}
\end{align}
where in approximation (a), the term $\propto||m\mathbf{d}||^2$ was ignored, i.e. it is assumed that $d/|| \mathbf{x_r}-\mathbf{x_t}||<<1$, and then a binomial approximation was employed. From geometric manipulations, we also have:
\begin{equation*}
\frac{\mathbf{x_r}-\mathbf{x_t}}{||\mathbf{x_r}-\mathbf{x_t}||}=[\sin(\phi_t)\sin(\theta_t),\sin(\phi_t)\cos(\theta_t),\cos(\phi_t)]^T.
\end{equation*}
Using the above equation in \eqref{eq4}, the delay $\tau_m,m=0,1,\ldots,M-1$ can be rewritten as
\begin{equation} \label{eq5}
\tau_m=2\dfrac{||\mathbf{x_r}-\mathbf{x_t}||}{c}+\dfrac{md\sin(\phi_t)\sin(\theta_t)}{c}.
\end{equation}
The Doppler shift, i.e. $f_{dm}$ is computed as
\begin{align} \label{eq6}
&f_{dm}=2f_o\dfrac{(\mathbf{\dot{x}_r}-\mathbf{\dot{x}_t})^T(\mathbf{x_r}-\mathbf{x_t})}{c||\mathbf{x_r}-\mathbf{x_t}||}  \\
&+ f_o\dfrac{m\mathbf{d}^T}{c}\left[\dfrac{\mathbf{\dot{x}_r}-\mathbf{\dot{x}_t}}{||\mathbf{x_r}-\mathbf{x_t}||^2} -\dfrac{(\mathbf{x_r}-\mathbf{x_t})(\mathbf{\dot{x}_r}-\mathbf{\dot{x}_t})^T(\mathbf{x_r}-\mathbf{x_t})}{\|| \mathbf{x_r}-\mathbf{x_t}||^3} \right] \nonumber
\end{align}
where $\mathbf{\dot{x}_{(\cdot)}}$ is the vector differential of $\mathbf{x_{(\cdot)}}$ w.r.t. time. In practice $d$ is a fraction of the wavelength, and assuming that $d/|| \mathbf{x_r}-\mathbf{x_t}||<<1$ we approximate the second term in \eqref{eq6} as $0$. The Doppler shift is no longer a function of the sensor index, $m$, and is rewritten as
\begin{equation} \label{eq7}
f_{dm}=f_d=2f_o\dfrac{(\mathbf{\dot{x}_r}-\mathbf{\dot{x}_t})^T(\mathbf{x_r}-\mathbf{x_t})}{c||\mathbf{x_r}-\mathbf{x_t}||}
\end{equation}

\subsection{Vector signal model}
Let $s(t)$ be sampled discretely resulting in $N$ discrete time samples. Consider for now the single range gate corresponding to the time delay $\tau_t$. After a suitable alignment to a common local time (or range) reference,  and invoking some standard assumptions, see also \cite[{\it A.1-A.3}] {Setlurradar2013}, the radar returns in $l$-th PRI  written as a vector defined as $\mathbf{y}_{\mathbi{l}}\in\mathbbm{C}^{NM}$, is given by
\begin{align}
&\mathbf{y}_{\mathbi{l}}=\rho_t\mathbf{s}\otimes \mathbf{a}(\theta_t,\phi_t)\exp(-j2\pi f_d(l-1)T_p) \label{eq9} \\
&\mathbf{a}(\theta_t,\phi_t):=[1,e^{-j2\pi\vartheta},\ldots,e^{-j2\pi (M-1)\vartheta}]^T \in\mathbbm{C}^M \nonumber
\end{align}
where $\mathbf{s}:=[s_1,s_2,\ldots,s_N]^T \in\mathbbm{C}^N  $ and $\vartheta:=d\sin(\theta_t)\sin(\phi_t)/\lambda_o$ is defined as the spatial frequency. Further it is noted that in \eqref{eq9}, the constant phase terms have been absorbed into $\rho_t$. Considering the $L$ pulses together, i.e. concatenating the desired target's response for the entire CPI in a tall vector $\mathbf{y}$, is defined as
\begin{align}
&\mathbf{y}\in\mathbbm{C}^{NML}=[\mathbf{y_0}^T,\mathbf{y_1}^T,\ldots,\mathbf{y}_{\mathbi{L-1}}^T]^T =\rho_t \mathbf{v}(f_d) \otimes\mathbf{s}\otimes \mathbf{a}(\theta_t,\phi_t)\nonumber \\
&\mathbf{v}(f_d):=[1,e^{-j2\pi f_dT_p},\ldots,e^{-j2\pi f_d(L-1)T_p}]^T \label{eq10}.
\end{align}
The vector $\mathbf{y}$ consists of both the spatial and the temporal steering vectors as in classical STAP, as well as the waveform dependency, via waveform vector $\mathbf{s}$. Due to inclusion of the fast time samples in the waveform $\mathbf{s}$, the STAP data cube is modified to reflect this change, and is depicted in Fig.~\ref{fig2}.

At the considered range gate, the measured snapshot vector consists of the target returns and the undesired returns, i.e. clutter returns, interference and noise. The contaminated snapshot at the considered range gate is then given by
\begin{align}
\mathbf{\tilde{y}}=\mathbf{y}+\mathbf{y_i}+\mathbf{y_c}+\mathbf{y_n} 
                =\mathbf{y}+\mathbf{y_{u}} \label{eq.11} 
\end{align}
where $\mathbf{y_i,y_c,y_n}$ are the contributions from the interference, clutter and noise, respectively, and are assumed to be statistically uncorrelated with one another. The contribution of the undesired returns are treated in detail, starting with the noise as it is the simplest.

{\bf Noise}: The noise is assumed to be zero mean, identically distributed across the sensors, across pulses, and  in the fast time samples. The correlation matrix of $\mathbf{y_n}$ is denoted as $ \mathbf{R_n}\in\mathbbm{C}^{NML\times NML}$. 

{\bf Interference}: The interference consists of jammers and other intentional / un-intentional sources which may be ground based, air-borne or both. Let us assume that there are $K$ interference sources. Further, since nothing is known about the jammers waveform characteristics, 
the waveform itself is assumed to be a stationary zero mean random process. Consider the $k$-th interference source in the $l$-th PRI, and at spatial co-ordinates $(\theta_k,\phi_k)$. Its corresponding snapshot contribution is modeled as,
\begin{equation*}
\mathbf{y}_{kl}=\boldsymbol{\alpha_{kl}}\otimes\mathbf{a}(\theta_k,\phi_k),k=1,2,\ldots,K,l=0,1,\ldots,L-1
\end{equation*}
where $\boldsymbol{\alpha_{kl}}=[\alpha_{kl}(0),\alpha_{kl}(1),\ldots,\alpha_{kl}(N-1)]^T\in\mathbbm{C}^{N}$ is the random discrete segment of the jammer waveform, as seen by the radar in the $l$-th PRI. Stacking $\mathbf{y}_{kl}$ for a fixed $k$ as a tall vector, we have 
\begin{align} \label{eq12}
\mathbf{y_k}&=\boldsymbol{\alpha_k}\otimes\mathbf{a}(\theta_k,\phi_k) 
            =[\mathbf{y}_{ko}^T,\mathbf{y}_{k1}^T,\ldots,\mathbf{y}_{kL-1}^T]^T \in\mathbbm{C}^{NML}  \nonumber\\
\boldsymbol{\alpha_k}:&=[\boldsymbol{\alpha_{k0}}^T,\boldsymbol{\alpha_{k1}}^T,\ldots,\boldsymbol{\alpha_{kL-1}}^T]^T \in\mathbbm{C}^{NL} 
\end{align}
Using the Kronecker mixed product property, (see for e.g. \cite{horn1994}), the correlation matrix of $\mathbf{y}_k$ is expressed as 
$\mathbbm{E}\{\mathbf{y}_k\mathbf{y}_k^H \}=\mathbf{R}_{\boldsymbol{\alpha}}^k\otimes \mathbf{a}(\theta_k,\phi_k)\mathbf{a}(\theta_k,\phi_k)^H$
where, $\mathbbm{E}\{ \boldsymbol{\alpha_k} \boldsymbol{\alpha_k}^H\}:=\mathbf{R}_{\boldsymbol{\alpha}}^k$. For $K$ mutually uncorrelated interferers, the correlation matrix is $\mathbf{R_i}=\sum\limits_{k=1}^K\mathbbm{E}\{\mathbf{y}_k\mathbf{y}_k^H \}=\sum\limits_{k=1}^K \mathbf{R}_{\boldsymbol{\alpha}}^k\otimes \mathbf{a}(\theta_k,\phi_k)\mathbf{a}(\theta_k,\phi_k)^H=\sum\limits_{k=1}^K (\mathbf{I}_{NL}\otimes\mathbf{a}(\theta_k,\phi_k))\mathbf{R}_{\boldsymbol{\alpha}}^k(\mathbf{I}_{NL}\otimes\mathbf{a}(\theta_k,\phi_k)^H) $, and is simplified as
\begin{align} \label{eq13}
\mathbf{R_i}&= \mathbf{A}(\theta,\phi)\mathbf{R}_{\boldsymbol{\alpha}}\mathbf{A}(\theta,\phi)^H
\end{align}
where $\mathbf{R}_{\boldsymbol{\alpha}}:=\mbox{Diag}\{\mathbf{R}_{\boldsymbol{\alpha}}^1,\mathbf{R}_{\boldsymbol{\alpha}}^2,\ldots,\mathbf{R}_{\boldsymbol{\alpha}}^K\} \in\mathbbm{C}^{NMLK\times NMLK}$ and $\mathbf{A}(\theta,\phi)\in\mathbbm{C}^{NML\times NMLK}\\
=[\mathbf{I}_{NL}\otimes \mathbf{a}(\theta_1,\phi_1),\mathbf{I}_{NL}\otimes \mathbf{a}(\theta_2,\phi_2),\ldots,\mathbf{I}_{NL}\otimes \mathbf{a}(\theta_K,\phi_K)]$, 
here $\mathbf{I}_{NL}$ the identity matrix of size $NL\times NL$, and $\mbox{Diag}\{ \cdot,\cdot,\ldots,\cdot\}$ the matrix diagonal operator which converts the matrix arguments into a bigger diagonal matrix. For example, $\mbox{Diag}\{\mathbf{A,B,C}\}=\left[ \begin{smallmatrix} \mathbf{A}&\mathbf{0}&\mathbf{0}\\
\mathbf{0}&\mathbf{B}&\mathbf{0} \\ \mathbf{0}&\mathbf{0}&\mathbf{C} \end{smallmatrix} \right].$

{\bf Clutter}: The ground is a major source of clutter in air-borne radar applications and is persistent in all range gates upto the gate corresponding to the platform horizon. Other sources of clutter surely exist, such as buildings, trees, as well as other un-interesting targets, which are ignored. We therefore consider only  ground clutter and treat it stochastically.

Let us assume that there are $Q$ clutter patches indexed by parameter $q$. Each of these clutter patches are comprised of say $P$ scatterers. The radar return from the $p$-th scatterer in the $q$-th clutter patch is given by
\begin{equation*}
\gamma_{pq}\mathbf{v}(fc_{pq})\otimes\mathbf{s} \otimes a(\theta_{pq},\phi_{pq})
\end{equation*}
where $\gamma_{pq}$ is its random complex reflectivity, $fc_{pq}$ is the Doppler shift observed from the $p$-th scatterer in the $q$-th clutter patch, and $\theta_{pq},\phi_{pq}$ are the azimuth and elevation angles of this scatterer.
The Doppler $fc_{pq}$ is given by,
\begin{equation} \label{eq14}
fc_{pq}:=\dfrac{2f_o\mathbf{\dot{x}_r}^T (\mathbf{x_r}-\mathbf{x_{pq} })}{c||\mathbf{x_r}-\mathbf{x_{pq} }||}.
\end{equation}
where $\mathbf{x_{pq}}$ is the location of the $p$-th scatter in the $q$-th clutter patch. Since the clutter patch is stationary, the Doppler is purely from the motion of the aircraft as seen in \eqref{eq14}. The contribution from the $q$-th clutter patch to the received signal is given by
\begin{equation} \label{eq15}
\mathbf{y}_q=\sum \limits_{p=1}^P \gamma_{pq} \mathbf{v}(fc_{pq} )\otimes\mathbf{s} \otimes a(\theta_{pq},\phi_{pq} ),
\end{equation}
with corresponding correlation matrix 
\begin{align} \label{cluteq}
&\mathbf{R}_{\boldsymbol{\gamma}}^q:=\mathbf{B_q}\mathbf{R}_{\boldsymbol{\gamma}}^{pq} \mathbf{B_q}^H
\end{align}
where, 
$\mathbf{B_q}
=[\mathbf{v}(fc_{1q})\otimes\mathbf{s} \otimes \mathbf{a}(\theta_{1q},\phi_{1q}), \mathbf{v}(fc_{2q})\otimes\mathbf{s} \otimes \mathbf{a}(\theta_{2q},\phi_{2q})\ldots,\mathbf{v}(fc_{Pq})\otimes\mathbf{s} \otimes \mathbf{a}(\theta_{Pq},\phi_{Pq})] \in\mathbbm{C}^{NML\times P}$ and $\mathbf{R}_{\boldsymbol{\gamma}}^{pq}$ is the correlation matrix of the random vector, $[\gamma_{1q},\gamma_{2q},\ldots,\gamma_{Pq}]^T$. It is readily shown that the matrix $\mathbf{B_q}$ could be simplified as, $\mathbf{B_q}:=\mathbf{\breve{B}_q}(\mathbf{I}_P\otimes\mathbf{s})$, where $\mathbf{\breve{B}_q}:=[\mathbf{v}(fc_{1q})\otimes\mathbf{A_{1q}},\mathbf{v}(fc_{2q})\otimes\mathbf{A_{2q}},\ldots, \mathbf{v}(fc_{Pq})\otimes\mathbf{A_{Pq}}]\in \mathbbm{C}^{NML\times PN}$, and the structure of the matrix $\mathbf{A_{pq}}\in\mathbbm{C}^{NM\times N}$ (straightforward but not shown here) is defined such that $\mathbf{s}\otimes\mathbf{a}(\theta_{pq},\phi_{pq} )=\mathbf{A_{pq}}\mathbf{s},p=1,\ldots,P$. Assuming that a particular scatterer from one clutter patch is uncorrelated to any other scatterer belonging to any other clutter patch, we have the net contribution of clutter $\mathbf{y_c}=\sum\limits_{q=1}^Q \mathbf{y}_q$, with corresponding correlation matrix given by
\begin{equation} \label{eq16}
\mathbf{R_c}=\sum\limits_{q=1}^Q \mathbf{R}_{\boldsymbol{\gamma}}^q.
\end{equation}
The clutter model could further be simplified by the following arguments. Assuming a large range resolution which is typically the case for radar STAP \cite{ward1994} the scatterers in a particular clutter patch are in the same range gate and hence are assumed to possess approximately identical Doppler shifts, i.e. $fc_{pq}\approx fc_q=\tfrac{2f_o\mathbf{\dot{x}_r}^T (\mathbf{x_r}-\mathbf{x_{q} })}{c||\mathbf{x_r}-\mathbf{x_{q} }||}$. Similarly for the far field operation, and considering scatterers in the same azimuth resolution cell, and from the large range resolution argument, we may assume $\theta_{pq}\approx \theta_q$ and $\phi_{pq}\approx \phi_{q}$, i.e. their nominal angular centers. These assumptions can now be incorporated in matrix $\mathbf{B_q}$ to simplify the clutter model, see also \cite{Setlurradar2013}.
\begin{figure*}[tbp!]
\begin{minipage}[b]{0.5\linewidth}
\centering
\includegraphics[scale=0.4]{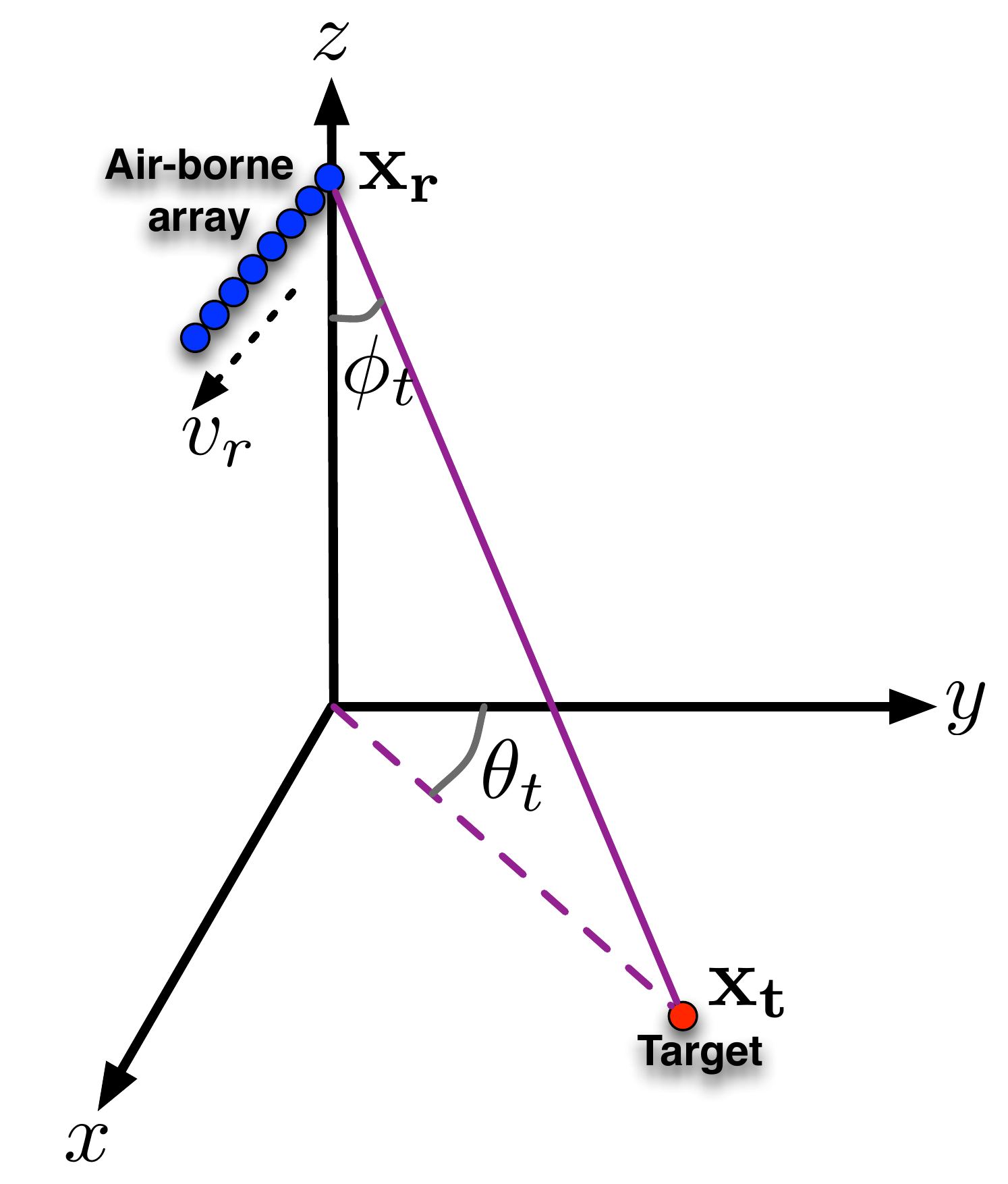}
\caption{Radar scene considering the ground based target at azimuth $(\theta_t)$, elevation $(\phi_t)$. The $(x,y,z)$ axis are local to the aircraft carrying the array.}
\label{fig1}
\end{minipage}
\hspace{0.4cm}
\begin{minipage}[b]{0.5\linewidth}
\centering
\includegraphics[scale=0.4]{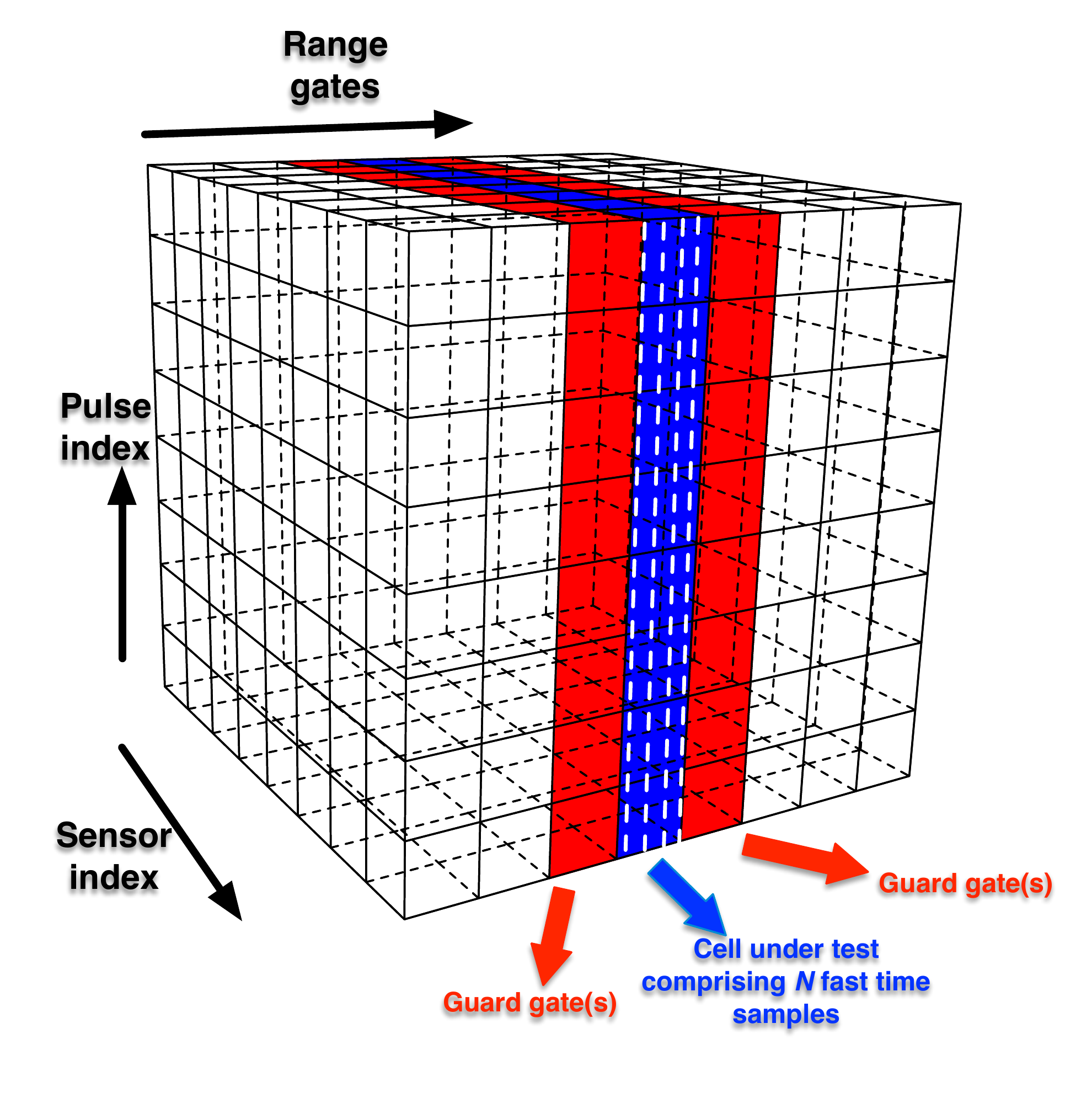}
\caption{STAP data cube before matched filtering or range compression, depicting the considered range gate/cell and fast time slices (dashed lines).}
\label{fig2}
\end{minipage}
\end{figure*}
\section{Waveform Design}
The radar return at the considered range gate is processed by a filter characterized by a weight vector, $\mathbf{w}$, whose output is given by $\mathbf{w}^H\mathbf{\tilde{y}}$. Since the vector $\mathbf{s} \in\mathbbm{C}^{N}$ prominently figures in the steering vectors, the objective is to jointly obtain the desired weight vector, $\mathbf{w}$ and waveform vector, $\mathbf{s}$. It is desired that the weight vector will  minimize the output power, $\mathbbm{E}\{|\mathbf{w}^H\mathbf{y_u}|^2\}=\mathbf{w}^H\mathbf{R_u}( \mathbf{s})\mathbf{w}$. Mathematically, we may formulate this problem as:
\begin{align}
\min_{\mathbf{w},\mathbf{s}}\;\;\;\;\; &\mathbf{w}^H\mathbf{R_u}( \mathbf{s})\mathbf{w}\label{eq17} \\
\mbox{ s. t } \;\;\;\;\;&\mathbf{w}^H(\mathbf{v}(f_d)\otimes\mathbf{s}\otimes\mathbf{a}(\theta_t,\phi_t))=\kappa \nonumber \\
\;\;\;\;\;\; & \mathbf{s}^H \mathbf{s}\leq P_o \nonumber \nonumber
\end{align}
In \eqref{eq17}, the first constraint is the renowned,well known Capon constraint with $\kappa\in\mathbbm{R}$, typically $\kappa=1$. An energy constraint enforced via the second constraint is to addresses hardware limitation.  Before we derive the solutions to the optimization problem, it is useful to recall  Lem.~\ref{crlemma}, which is well-known, used throughout this report but not stated explicitly. This fundamental result discusses the technique to compute stationary points of a real valued function w.r.t. its complex valued argument and its conjugate.

\begin{lem}\label{crlemma}
Let $f(\mathbf{x},\mathbf{x}^{\ast}):\mathbbm{C}^N\rightarrow\mathbbm{R}$. The stationary point of $f(\mathbf{x},\mathbf{x}^{\ast})=\bar{f}(\mathbf{x_r},\mathbf{x_i})$ is found from the three equivalent conditions, 1.  $\nabla_{\bf x_r}\bar{f}(\mathbf{x_r},\mathbf{x_i})=\mathbf{0}$ and $\nabla_{\bf x_i}\bar{f}(\mathbf{x_r},\mathbf{x_i})=\mathbf{0}$, or 2. $\nabla_{\bf x}f(\mathbf{x},\mathbf{x}^{\ast})=\mathbf{0}$, or 3. $\nabla_{\bf x^{\ast}} f(\mathbf{x},\mathbf{x}^{\ast})=\mathbf{0}$. Here $\bar{f}:\mathbbm{R}^{N}\times\mathbbm{R}^{N}\rightarrow\mathbbm{R}$  is the real equivalent of $f(\cdot,\cdot)$, $\mathbf{x_r}=\mathrm{Re}\{\mathbf{x}\},\mathbf{x_i}=\mathrm{Im}\{\mathbf{x}\}$, where we define the gradient $\nabla_{\bf x} f(\mathbf{x},\mathbf{x}^{\ast}):=[\tfrac{\partial f( \cdot,\cdot)}{\partial x_1},\tfrac{\partial f( \cdot,\cdot)}{\partial x_2},\cdots,\tfrac{\partial f( \cdot,\cdot)}{\partial x_N} ]^T$ with $x_i$ as the $i$-th element of ${\bf x},i=1,2,\ldots N$, and $\mathbf{0}$ is a column vector of all zeros of dimension $N$.
\end{lem}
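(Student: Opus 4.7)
The plan is to prove the equivalence of the three conditions by invoking the standard Wirtinger (CR-calculus) correspondence between the complex derivatives $\partial/\partial x_i$, $\partial/\partial x_i^{\ast}$ and the real partial derivatives $\partial/\partial x_{ri}$, $\partial/\partial x_{ii}$, where $x_i = x_{ri}+jx_{ii}$. First I would recall (or state as the working definition) the Wirtinger operators
\begin{equation*}
\frac{\partial}{\partial x_i} := \tfrac{1}{2}\!\left(\frac{\partial}{\partial x_{ri}} - j\frac{\partial}{\partial x_{ii}}\right),\qquad
\frac{\partial}{\partial x_i^{\ast}} := \tfrac{1}{2}\!\left(\frac{\partial}{\partial x_{ri}} + j\frac{\partial}{\partial x_{ii}}\right),
\end{equation*}
applied here to $\bar f(\mathbf{x_r},\mathbf{x_i})$ viewed as a real differentiable function of $2N$ real variables. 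Inverting this $2\times 2$ linear relation componentwise gives
\begin{equation*}
\frac{\partial}{\partial x_{ri}} = \frac{\partial}{\partial x_i} + \frac{\partial}{\partial x_i^{\ast}},\qquad
\frac{\partial}{\partial x_{ii}} = j\!\left(\frac{\partial}{\partial x_i} - \frac{\partial}{\partial x_i^{\ast}}\right),
\end{equation*}
so the map between the pair $(\nabla_{\mathbf{x_r}}\bar f,\nabla_{\mathbf{x_i}}\bar f)$ and the pair $(\nabla_{\mathbf{x}}f,\nabla_{\mathbf{x}^{\ast}}f)$ is a nonsingular linear transformation. Hence condition 1 is equivalent to the simultaneous vanishing of $\nabla_{\mathbf{x}}f$ and $\nabla_{\mathbf{x}^{\ast}}f$.

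Next I would exploit that $f$ is \emph{real-valued}. Taking complex conjugates inside the Wirtinger operator and using $\bar f\in\mathbbm{R}$ (so $\bar f^{\ast}=\bar f$) yields the conjugate symmetry
\begin{equation*}
\nabla_{\mathbf{x}^{\ast}} f(\mathbf{x},\mathbf{x}^{\ast}) = \bigl(\nabla_{\mathbf{x}} f(\mathbf{x},\mathbf{x}^{\ast})\bigr)^{\ast}.
\end{equation*}
Consequently $\nabla_{\mathbf{x}}f=\mathbf{0}$ if and only if $\nabla_{\mathbf{x}^{\ast}}f=\mathbf{0}$, and either one individually is equivalent to the simultaneous vanishing established above. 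Chaining these equivalences gives $1\Leftrightarrow 2\Leftrightarrow 3$, which is precisely the statement of the lemma.

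The calculation is entirely routine — the only item requiring a little care is justifying that all three conditions are genuinely \emph{equivalent characterizations of stationarity}, i.e., that any point satisfying one of them is in fact a critical point of the underlying real function $\bar f:\mathbbm{R}^{2N}\to\mathbbm{R}$ in the usual real-analysis sense. That follows immediately from the invertibility of the $2\times 2$ Wirtinger change of basis above, so no subtlety arises. The main (and only) conceptual step is the real-valuedness of $f$, without which condition 2 and condition 3 would not individually suffice.
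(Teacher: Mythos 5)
Your proof is correct and follows the same route the paper takes: the paper's ``proof'' simply cites the Wirtinger (CR) calculus results of Brandwood and van den Bos via \cite{Gesbert2007}, and your argument is exactly the standard derivation behind that citation --- the invertible linear change of basis between $(\nabla_{\mathbf{x_r}}\bar f,\nabla_{\mathbf{x_i}}\bar f)$ and $(\nabla_{\mathbf{x}}f,\nabla_{\mathbf{x}^{\ast}}f)$, plus the conjugate symmetry $\nabla_{\mathbf{x}^{\ast}}f=(\nabla_{\mathbf{x}}f)^{\ast}$ forced by real-valuedness. The only difference is that you actually write out the details the paper leaves to the references, and you correctly flag real-valuedness as the one step without which conditions 2 and 3 would not individually suffice.
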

\begin{proof}
This arises from the Wirtinger calculus see \cite{Gesbert2007} \footnote{ also see refs. Brandwood, and A. van den Bos in \cite{Gesbert2007} } for a recent formal proof.
\end{proof}
Optimizing \eqref{eq17} w.r.t. $\mathbf{w}$ first, the solution to \eqref{eq17} is well known, and expressed as
\begin{equation} \label{weightcomp}
\mathbf{w}_{o}=\frac{\kappa \mathbf{R}_{\bf u}^{-1}(\mathbf{s})(\mathbf{v}(f_d)\otimes\mathbf{s}\otimes\mathbf{a}(\theta_t,\phi_t))}{(\mathbf{v}(f_d)\otimes\mathbf{s}\otimes\mathbf{a}(\theta_t,\phi_t))^H\mathbf{R}_{\bf u}^{-1} (\mathbf{s})(\mathbf{v}(f_d)\otimes\mathbf{s}\otimes\mathbf{a}(\theta_t,\phi_t))}
\end{equation}
where $\mathbf{R_u}(\mathbf{s})=\mathbf{R_i}+\mathbf{R_c}(\mathbf{s})+\mathbf{R_n}$.  We further emphasize that the  weight vector is an explicit function of the waveform.  Now substituting $\mathbf{w}_{o}$ back into the cost function in \eqref{eq17}, the minimization is purely w.r.t. $\mathbf{s}$, and cast as,
\begin{align} \label{eq18}
\min_{\mathbf{s}} &\frac{\kappa^2}{(\mathbf{v}(f_d)\otimes\mathbf{s}\otimes\mathbf{a}(\theta_t,\phi_t))^H\mathbf{R}_{\bf u}^{-1}(\mathbf{s})(\mathbf{v}(f_d)\otimes\mathbf{s}\otimes\mathbf{a}(\theta_t,\phi_t))} \nonumber \\
\mbox{ s. t. } & \mathbf{s}^H \mathbf{s}\leq P_o 
\end{align}
A solution to \eqref{eq18} is not immediate, given the dependence of  $\mathbf{R_u}$ on the waveform vector $\mathbf{s}$. We consider first, the case when the clutter dependence on the waveform is ignored. Solutions to the design when clutter is considered are treated subsequently.

\subsection{Rayleigh-Ritz: Minimum eigenvector solution}
Ignoring the dependency of $\mathbf{R_u}$ on $\mathbf{s}$, we readily see that the \eqref{eq18} can be recast as a Rayleigh-Ritz optimization, whose solution is given by
\begin{align} \label{eq21}
\mathbf{v}(f_d)\otimes\mathbf{s}\otimes\mathbf{a}(\theta_t,\phi_t)=\boldsymbol{\mu}_{\min}(\mathbf{R_u})
\end{align}
where $\boldsymbol{\mu}_{\min}(\mathbf{R_u})$ is the eigenvector corresponding to the minimum eigenvalue of $\mathbf{R_u}$. This tensor equation implicitly defines the optimal $\mathbf{s}$. It is readily seen that, $\mathbf{v}(f_d)\otimes\mathbf{s}\otimes\mathbf{a}(\theta_t,\phi_t)=\mathbf{G}\mathbf{s}$, where $\mathbf{G}=\mathbf{v}(f_d)\otimes\mathbf{A_t}$, and 
\begin{align*}
\mathbf{A_t}=\begin{bmatrix}
    \mathbf{a}(\theta_t,\phi_t) &\mathbf{0} &\mathbf{0} &\cdots &\mathbf{0} \\
    \mathbf{0}& \mathbf{a}(\theta_t,\phi_t) & \mathbf{0} &\cdots &\mathbf{0} \\
    \mathbf{0}& \mathbf{0} &\mathbf{a}(\theta_t,\phi_t) &\vdots  &\vdots \\
    \vdots & \vdots &\vdots &\vdots &\vdots
    \end{bmatrix} \in\mathbbm{C}^{MN\times N}.
\end{align*}
In general, the system is over-determined, and we solve this equation approximately via least squares (LS), 
\begin{align} \label{lsubopt}
\mathbf{\hat{s}}=(\mathbf{G}^H\mathbf{G})^{-1}\mathbf{G}^H\mu_{\min}(\mathbf{R_u}).
\end{align}
Moreover from \eqref{eq21} and the structure of the temporal and spatial steering vectors,  as well as the orthonormality of the eigenvectors, it is readily seen that,
\begin{align} \label{eqrayleigh}
||\mathbf{v}(f_d)\otimes\mathbf{s}\otimes\mathbf{a}(\theta_t,\phi_t)||^2&=
|| \mathbf{v}(f_d)||^2 || \mathbf{s}||^2 ||\mathbf{a}(\theta_t,\phi_t) ||^2=|| \mathbf{s}||^2 \nonumber \\
||\boldsymbol{\mu}_{\min}(\mathbf{R_u})||^2&=1.
\end{align}
Hence  the LS solution in \eqref{lsubopt} must be scaled to satisfy the desired energy requirements of the radar system.
 
 {\bf Decoupling LS:} The LS solution in \eqref{lsubopt} can be further simplified due to the following linear relation between elements of $\mathbf{v}(f_d),\mathbf{a}(\theta_t,\phi_t),\mathbf{s}$ and elements of $\boldsymbol{\mu}_{\min}(\mathbf{R_u})$, expressed as
\begin{align} \label{linearrel}
v_la_ms_n=\mu_h,\; l&=1,2,\ldots,L,\;m=1,2,\ldots,M,\;n=1,2,\ldots,N \nonumber \\
h&=(l-1)MN+(n-1)M+m.
\end{align}
where $v_l$, $a_m$, $s_n$ are the $l$-th, $m$-th, $n$-th elements of $\mathbf{v}(f_d)$, $\mathbf{a}(\theta_t,\phi_t)$, $\mathbf{s}$, and $\mu_h$ is the $h$-th element of $\boldsymbol{\mu}_{min}(\mathbf{R_u})$, respectively. Therefore, the LS solution in \eqref{lsubopt} decouples as
\begin{align} \label{lsfinal}
s_n=\frac{(\mathbf{v}(f_d)\otimes\mathbf{a}(\theta_t,\phi_t) )^H\boldsymbol{\mu}_{\bf n} }{(\mathbf{v}(f_d)\otimes\mathbf{a}(\theta_t,\phi_t) )^H(\mathbf{v}(f_d)\otimes\mathbf{a}(\theta_t,\phi_t) )},\; n=1,2,\ldots,N
\end{align}
where the vector $\boldsymbol{\mu}_{\bf n}\in\mathbbm{C}^{ML}$ for a {\it particular $n$} consists of the $ML$ appropriate elements, $\mu_h,\; h=(l-1)MN+(n-1)M+m,\;  m=1,2,\ldots,M, \;l=1,2,\ldots,L$, as highlighted in \eqref{linearrel}.

The min. eigenvector solution is most relevant when noise and interference are considered and clutter is ignored in the waveform design \cite{guerci2003}. it has some nice spectral properties  similar (but not identical) to water-filling \cite{guerci2003,bell1993information}. Therefore this solution, although suboptimal, is a good initial waveform to  interrogate the radar scene, but is unfortunately well known to suffer from poor modulus and sidelobe properties. Nonetheless, in certain exceptional cases and in the presence of clutter, this suboptimal solution is shown to be optimal, and is discussed at a later stage. The ensuing definitions and lemma proves useful subsequently.
\begin{lem} \label{lemma1}
(a) If vectors $\boldsymbol{\alpha}$, $\boldsymbol{\beta}$ and $\boldsymbol{\gamma}$ consist of the eigenvalues of the square but not necessarily Hermitian matrices, $\mathbf{X}\in\mathbbm{C}^{N\times N}$, $\mathbf{Y}\in\mathbbm{C}^{M\times M}$ and $\mathbf{X}\otimes\mathbf{Y}$, respectively. Then $\boldsymbol{\gamma}=\boldsymbol{\alpha}\otimes\boldsymbol{\beta}$. (b) Also,  $\mathrm{rank}(\mathbf{X}\otimes\mathbf{Y})=\mathrm{rank}(\mathbf{X})\otimes\mathrm{rank}(\mathbf{Y})$.
\end{lem}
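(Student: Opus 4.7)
For part (a), my plan is to use the Schur triangularization theorem. Every square matrix admits a decomposition $\mathbf{X} = \mathbf{U}_X \mathbf{T}_X \mathbf{U}_X^H$ and $\mathbf{Y} = \mathbf{U}_Y \mathbf{T}_Y \mathbf{U}_Y^H$, where the $\mathbf{U}$'s are unitary and the $\mathbf{T}$'s are upper triangular with the eigenvalues $\alpha_i$ and $\beta_j$ on their respective diagonals. I would then invoke the standard Kronecker mixed product property (already cited via \cite{horn1994}) to write
\begin{equation*}
\mathbf{X}\otimes\mathbf{Y} \;=\; (\mathbf{U}_X\otimes\mathbf{U}_Y)\,(\mathbf{T}_X\otimes\mathbf{T}_Y)\,(\mathbf{U}_X\otimes\mathbf{U}_Y)^H,
\end{equation*}
and observe that $\mathbf{U}_X\otimes\mathbf{U}_Y$ is unitary, while $\mathbf{T}_X\otimes\mathbf{T}_Y$ remains upper triangular (blocks above the diagonal of $\mathbf{T}_X$ produce zero blocks; within each diagonal block $\alpha_i\mathbf{T}_Y$, upper-triangularity of $\mathbf{T}_Y$ is preserved). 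Hence the right-hand side is a Schur decomposition of $\mathbf{X}\otimes\mathbf{Y}$, so its eigenvalues are exactly the diagonal entries of $\mathbf{T}_X\otimes\mathbf{T}_Y$, namely the $MN$ products $\alpha_i\beta_j$. Arranging these in the lexicographic ordering induced by the Kronecker product yields $\boldsymbol{\gamma}=\boldsymbol{\alpha}\otimes\boldsymbol{\beta}$.

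For part (b), I would run the same idea through the singular value decomposition rather than Schur. Writing $\mathbf{X}=\mathbf{U}_X\boldsymbol{\Sigma}_X\mathbf{V}_X^H$ and $\mathbf{Y}=\mathbf{U}_Y\boldsymbol{\Sigma}_Y\mathbf{V}_Y^H$, the same mixed-product identity gives
\begin{equation*}
\mathbf{X}\otimes\mathbf{Y} \;=\; (\mathbf{U}_X\otimes\mathbf{U}_Y)\,(\boldsymbol{\Sigma}_X\otimes\boldsymbol{\Sigma}_Y)\,(\mathbf{V}_X\otimes\mathbf{V}_Y)^H,
\end{equation*}
which is itself a valid SVD since Kronecker products of unitary matrices are unitary and $\boldsymbol{\Sigma}_X\otimes\boldsymbol{\Sigma}_Y$ is diagonal with nonnegative entries $\sigma_i^X\sigma_j^Y$. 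The rank of $\mathbf{X}\otimes\mathbf{Y}$ is therefore the number of nonzero products $\sigma_i^X\sigma_j^Y$, which equals $\#\{i:\sigma_i^X>0\}\cdot\#\{j:\sigma_j^Y>0\}=\mathrm{rank}(\mathbf{X})\cdot\mathrm{rank}(\mathbf{Y})$. (The $\otimes$ in the statement of (b) is really scalar multiplication between nonnegative integers.)

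There is no real obstacle here; the only subtle point is verifying that the Kronecker product of two upper triangular matrices is upper triangular under the conventional block ordering, which is immediate by inspection of its block structure. Both parts ultimately reduce to the mixed product property combined with the spectral/singular structure of the factor matrices, so I expect the proof to be short and purely bookkeeping.
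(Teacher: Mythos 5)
Your proof is correct, but it takes a genuinely different and in fact more careful route than the paper. The paper proves (a) by pairing eigenvectors: it applies the mixed product property to $\mathbf{X}\mathbf{x}_i\otimes\mathbf{Y}\mathbf{y}_j=(\mathbf{X}\otimes\mathbf{Y})(\mathbf{x}_i\otimes\mathbf{y}_j)$ and reads off $\gamma_{ij}=\alpha_i\beta_j$ with eigenvector $\mathbf{x}_i\otimes\mathbf{y}_j$; for (b) it then asserts that the rank equals the number of nonzero eigenvalues and concludes from (a). That argument is short but has two weaknesses that your version repairs: the eigenvector pairing only exhibits a full set of $MN$ eigenvalues (with correct multiplicities) when $\mathbf{X}$ and $\mathbf{Y}$ are diagonalizable, whereas your Schur triangularization covers arbitrary square matrices since the diagonal of $\mathbf{T}_X\otimes\mathbf{T}_Y$ lists all $MN$ eigenvalues with multiplicity; and the claim ``rank equals the number of nonzero eigenvalues'' is false for general non-Hermitian matrices (a nonzero nilpotent matrix has positive rank but only zero eigenvalues), so deducing (b) from (a) does not actually work at the stated level of generality, while your SVD argument gives $\mathrm{rank}(\mathbf{X}\otimes\mathbf{Y})=\mathrm{rank}(\mathbf{X})\,\mathrm{rank}(\mathbf{Y})$ unconditionally. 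The trade-off is that the paper's eigenvector computation is the one actually reused later in the text (e.g.\ to identify the eigenvectors of $\mathbf{R}_{\gamma}^{pq}\otimes \mathbf{D}(\bar{\mathbf{s}}_1-\bar{\mathbf{s}}_2)(\bar{\mathbf{s}}_1-\bar{\mathbf{s}}_2)^T\mathbf{D}^H$ in Proposition 1, where the factors are Hermitian PSD and the simpler argument suffices), so both proofs serve the paper's downstream needs, but yours is the one that actually establishes the lemma as stated.
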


\begin{proof}
For (a), let  $\mathbf{x}_i,i=1,2,\ldots,N$ and $\mathbf{y}_j,j=1,2,\ldots,M$ are the eigenvectors corresponding to $\alpha_i,\beta_j$ i.e. the $i$-th and $j$-th eigenvalues, of $\mathbf{X,Y}$, respectively. Then, from the mixed property of the Kronecker product, $\mathbf{X}\mathbf{x}_i\otimes\mathbf{Y}\mathbf{y}_j=(\mathbf{X}\otimes\mathbf{Y})(\mathbf{x}_i\otimes\mathbf{y}_j)$ but the eigenvector relations imply that $\mathbf{X}\mathbf{x}_i=\alpha_i\mathbf{x}_i,\mathbf{Y}\mathbf{y}_j=\beta_j\mathbf{y}_j$. This implies that the $ij$-th eigenvalue of of $\mathbf{X}\otimes\mathbf{Y}$ is  $\gamma_{ij}=\alpha_i\beta_j$ with associated eigenvector $\mathbf{x}_i\otimes\mathbf{y}_j$. Since the rank is equal to the number of non-zero eigenvalues for square matrices, the second follows directly from (a). Hence proved.
\end{proof}

\begin{defn}\label{mydef1} ({\it Convexity})
 A function $f(\mathbf{x}):\mathbbm{R}^N\rightarrow\mathbbm{R}$ is convex if : \newline
(a) $f(t \mathbf{x}_1+(1-t)\mathbf{x}_2)\leq t f(\mathbf{x}_1)+(1-t)f(\mathbf{x}_2)$ for any $t\in[0,1]$ \newline
(b) If $f(\mathbf{x})$ is first order differentiable, then it is convex if  $f(\mathbf{x}_j)\geq f(\mathbf{x}_i)+ \nabla_{\mathbf{x}_i} f(\mathbf{x}_i)^T( f(\mathbf{x}_j)-f(\mathbf{x}_i) )$\newline
where  in (a)(b) $\mathbf{x}_i\in\mathbbm{R}^N,i=1,2, j=1,2,j\neq i$.
\end{defn}

From our extensive simulations, we noticed that the original cost function in \eqref{eq17} is {\it not jointly convex} in $\mathbf{w}$ and $\mathbf{s}$. Nevertheless, it is not straightforward to prove / disprove joint convexity w.r.t. both $\mathbf{w}$ and $\mathbf{s}$ analytically. Consider, then, the following propositions:
\begin{prop} \label{propos1}
The objective function in \eqref{eq17} is individually convex w.r.t. $\mathbf{s}$, for any fixed but arbitrary $\mathbf{w}$
\end{prop}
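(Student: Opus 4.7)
The plan is to show that the only piece of $\mathbf{w}^H \mathbf{R_u}(\mathbf{s})\mathbf{w}$ that depends on $\mathbf{s}$ is a quadratic form of the form $\mathbf{s}^H \mathbf{M}(\mathbf{w})\mathbf{s}$ with $\mathbf{M}(\mathbf{w})\succeq \mathbf{0}$, and then appeal to the convexity of PSD quadratics.

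First I would split the undesired correlation as $\mathbf{R_u}(\mathbf{s})=\mathbf{R_i}+\mathbf{R_n}+\mathbf{R_c}(\mathbf{s})$. Only $\mathbf{R_c}(\mathbf{s})$ carries waveform dependence, so $\mathbf{w}^H\mathbf{R_i}\mathbf{w}+\mathbf{w}^H\mathbf{R_n}\mathbf{w}$ is an $\mathbf{s}$-independent constant that cannot break convexity. The work therefore reduces to analyzing $\mathbf{w}^H\mathbf{R_c}(\mathbf{s})\mathbf{w}=\sum_{q=1}^{Q}\mathbf{w}^H\mathbf{B_q}\mathbf{R}_{\boldsymbol\gamma}^{pq}\mathbf{B_q}^H\mathbf{w}$, and I would use the factorization $\mathbf{B_q}=\mathbf{\breve{B}_q}(\mathbf{I}_P\otimes\mathbf{s})$ already established in the excerpt to expose the $\mathbf{s}$-dependence explicitly.

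Next I would extract $\mathbf{s}$ from the quadratic. Using the Kronecker mixed-product identity, one has $(\mathbf{I}_P\otimes\mathbf{s})\mathbf{R}_{\boldsymbol\gamma}^{pq}(\mathbf{I}_P\otimes\mathbf{s}^H)=\mathbf{R}_{\boldsymbol\gamma}^{pq}\otimes(\mathbf{s}\mathbf{s}^H)$. Partitioning $\mathbf{u}_q:=\mathbf{\breve{B}_q}^H\mathbf{w}\in\mathbbm{C}^{PN}$ into $P$ sub-blocks $\mathbf{u}_{q,p}\in\mathbbm{C}^N$ and stacking them as columns of $\mathbf{U}_q=[\mathbf{u}_{q,1},\ldots,\mathbf{u}_{q,P}]\in\mathbbm{C}^{N\times P}$, a direct expansion gives
\begin{align*}
\mathbf{w}^H\mathbf{R_c}(\mathbf{s})\mathbf{w}\;=\;\mathbf{s}^H\Bigl(\sum_{q=1}^{Q}\mathbf{U}_q(\mathbf{R}_{\boldsymbol\gamma}^{pq})^{T}\mathbf{U}_q^H\Bigr)\mathbf{s}\;=:\;\mathbf{s}^H\mathbf{M}(\mathbf{w})\mathbf{s}.
\end{align*}
This makes the waveform dependence manifestly a Hermitian quadratic form.

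It then remains to argue $\mathbf{M}(\mathbf{w})\succeq\mathbf{0}$. The cleanest route avoids term-by-term algebra: because $\mathbf{R_c}(\mathbf{s})$ is by construction a covariance matrix (hence PSD) for every choice of $\mathbf{s}$, the scalar $\mathbf{w}^H\mathbf{R_c}(\mathbf{s})\mathbf{w}\ge 0$ for all $\mathbf{s}\in\mathbbm{C}^N$; but a Hermitian form $\mathbf{s}^H\mathbf{M}(\mathbf{w})\mathbf{s}$ that is nonnegative for every $\mathbf{s}$ forces $\mathbf{M}(\mathbf{w})\succeq\mathbf{0}$. (Alternatively one can check directly that each summand $\mathbf{U}_q(\mathbf{R}_{\boldsymbol\gamma}^{pq})^{T}\mathbf{U}_q^H$ is PSD since $(\mathbf{R}_{\boldsymbol\gamma}^{pq})^{T}=(\mathbf{R}_{\boldsymbol\gamma}^{pq})^{\ast}$ is PSD whenever $\mathbf{R}_{\boldsymbol\gamma}^{pq}$ is.) With $\mathbf{M}(\mathbf{w})$ Hermitian PSD, the objective as a function of $\mathbf{s}$ is a PSD quadratic plus a constant, which is convex by Defn.~\ref{mydef1} (e.g.\ the Hessian with respect to $(\mathrm{Re}\,\mathbf{s},\mathrm{Im}\,\mathbf{s})$ via Lem.~\ref{crlemma} is PSD). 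The main subtlety will be bookkeeping through the Kronecker-product re-indexing so that the final matrix $\mathbf{M}(\mathbf{w})$ is unambiguously Hermitian; the PSD conclusion itself is essentially free from the covariance interpretation.
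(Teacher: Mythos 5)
Your proof is correct, but it takes a genuinely different route from the paper's. The paper verifies Definition~\ref{mydef1}(a) head-on in the real parameterization $\mathbf{s}=\mathbf{D}\bar{\mathbf{s}}$: it writes out the convexity inequality for the full objective, cancels the $\mathbf{s}$-independent terms, and reduces the claim to the nonnegativity of $\sum_{q}\mathbf{x}_{\bf q}^H\bigl(\mathbf{R}_{\gamma}^{pq}\otimes \mathbf{D}(\bar{\mathbf{s}}_1-\bar{\mathbf{s}}_2)(\bar{\mathbf{s}}_1-\bar{\mathbf{s}}_2)^T\mathbf{D}^H\bigr)\mathbf{x_q}$, which it establishes via Lem.~\ref{lemma1} (eigenvalues of a Kronecker product of PSD factors). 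You instead pull $\mathbf{s}$ all the way out and rewrite the waveform-dependent part of the objective as a single Hermitian quadratic form $\mathbf{s}^H\mathbf{M}(\mathbf{w})\mathbf{s}$ plus an $\mathbf{s}$-independent constant --- essentially anticipating the $\sum_q\mathbf{s}^H\mathbf{Z_q}(\mathbf{w})\mathbf{s}$ representation the paper only derives later in \eqref{eq23}--\eqref{eq24} --- and then conclude convexity from $\mathbf{M}(\mathbf{w})\succeq\mathbf{0}$, which you get for free from the covariance interpretation of $\mathbf{R_c}(\mathbf{s})$ (this is the same equivalence the paper itself invokes in Rem.~\ref{PSDremark}(a), only in the reverse direction: there the PSD-ness of $\sum_q\mathbf{Z_q}$ is deduced \emph{from} Prop.~\ref{propos1}, whereas you establish it independently, so there is no circularity). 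Your identity $(\mathbf{I}_P\otimes\mathbf{s})\mathbf{R}_{\boldsymbol\gamma}^{pq}(\mathbf{I}_P\otimes\mathbf{s}^H)=\mathbf{R}_{\boldsymbol\gamma}^{pq}\otimes\mathbf{s}\mathbf{s}^H$ and the resulting form $\sum_q\mathbf{U}_q(\mathbf{R}_{\boldsymbol\gamma}^{pq})^{T}\mathbf{U}_q^H$ check out, and the transpose/conjugate bookkeeping is handled correctly since $(\mathbf{R}_{\boldsymbol\gamma}^{pq})^{T}=(\mathbf{R}_{\boldsymbol\gamma}^{pq})^{\ast}$ is Hermitian PSD. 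What your route buys is an explicit, reusable closed form for the $\mathbf{s}$-subproblem and a one-line convexity argument (PSD quadratic plus constant); what the paper's route buys is a self-contained verification directly from the definition that does not presuppose the quadratic-form rewriting, at the cost of the term-by-term algebra in \eqref{eq19}--\eqref{eq20}.
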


\begin{proof}
Definition \ref{mydef1} cannot be directly invoked as the objective $g(\mathbf{s})= \mathbf{w}^H\mathbf{R_u}( \mathbf{s})\mathbf{w}:\mathbbm{C}^N\rightarrow \mathbbm{R}$ depends on the waveform $\mathbf{s}$, which is complex. Consider the following transformation\footnote{Ideally one must decompose the function into real and imaginary components (as accomplished subsequently), but due to Hermitian symmetry, real valued-ness e.t.c., we take this shortcut, here, instead}, $\mathbf{s}=\mathbf{D}\bar{\mathbf{s}}$ where $\bar{\mathbf{s}}\in\mathbbm{R}^{2N}=[\mbox{Re}\{\mathbf{s}\}^T,\mbox{Im}\{\mathbf{s}\}^T] ^T$ and $\mathbf{D}=[\mathbf{I}_N, j\mathbf{I}_N]\in\mathbbm{C}^{N\times2N}$. Now, we may  define an equivalent $g(\bar{\mathbf{s}}) :\mathbbm{R}^{2N}\rightarrow \mathbbm{R}$ to invoke the definition of convexity. We have to prove that,
\begin{align} \label{eq19}
&\mathbf{w}^H \left( \begin{aligned} &\mathbf{R_n}+\mathbf{R_i} \\
&+\sum\limits_{q=1}^Q \mathbf{\breve{B}_q}\begin{aligned} &(\mathbf{I}_P\otimes \mathbf{D}(t\bar{\mathbf{s}}_1+(1-t)\bar{\mathbf{s}}_2 ) )\mathbf{R}_{\gamma}^{pq} \\
&( \mathbf{I}_P\otimes (t\bar{\mathbf{s}}_1+(1-t)\bar{\mathbf{s}}_2 )^T\mathbf{D}^H ) \mathbf{\breve{B}_q}^H \end{aligned}  \end{aligned} \right) \mathbf{w} \nonumber \\
&\leq t \mathbf{w}^H \left(  \begin{aligned} &\mathbf{R_n}+\mathbf{R_i} \\
&+\sum\limits_{q=1}^Q \mathbf{\breve{B}_q}
(\mathbf{I}_P\otimes \mathbf{D}\bar{\mathbf{s}}_1 )\mathbf{R}_{\gamma}^{pq} ( \mathbf{I}_P\otimes \bar{\mathbf{s}}_1^T\mathbf{D}^H ) \mathbf{\breve{B}_q}^H
 \end{aligned} \right) \mathbf{w} \nonumber \\
 &+(1-t)\mathbf{w}^H \left( \begin{aligned} &\mathbf{R_n}+\mathbf{R_i} \\
 &+\sum\limits_{q=1}^Q
\mathbf{\breve{B}_q}(\mathbf{I}_P\otimes \mathbf{D}\bar{\mathbf{s}}_2 )\mathbf{R}_{\gamma}^{pq} ( \mathbf{I}_P\otimes \bar{\mathbf{s}}_2^T\mathbf{D}^H ) \mathbf{\breve{B}_q}^H
  \end{aligned} \right) \mathbf{w} 
\end{align}
where $t\in[0,1]$ and $\mathbf{\bar{s}_i}\in \mbox{dom}\{ g(\bar{\mathbf{s}})\},i=1,2$. After elementary algebra, the convexity  requirement in \eqref{eq19} transforms to:

\begin{align} \label{eq20}
\sum\limits_{q=1}^{Q} \mathbf{x}^H_{\bf q} \left(\mathbf{R}_{\gamma}^{pq}\otimes \mathbf{D}(\bar{\mathbf{s}}_1-\bar{\mathbf{s}}_2)(\bar{\mathbf{s}}_1-\bar{\mathbf{s}}_2)^T\mathbf{D}^H\right)\mathbf{x_q} \geq 0
\end{align}
where $\mathbf{x_q}\in\mathbbm{C}^{NP}:=\mathbf{\breve{B}_q}^H\mathbf{w}$. In other words, it is sufficient to show that iff \eqref{eq20} is true then \eqref{eq19} is also true and therefore convex. We notice immediately that \eqref{eq20} is a sum of Hermitian quadratic forms. Consider the matrix $\mathbf{R}_{\gamma}^{pq}\otimes \mathbf{D}(\bar{\mathbf{s}}_1-\bar{\mathbf{s}}_2)(\bar{\mathbf{s}}_1-\bar{\mathbf{s}}_2)^T\mathbf{D}^H$, we know that $\mathbf{R}_{\gamma}^{pq}\succeq \mathbf{0}$\footnote{Here $\succeq$ is the L\"{o}wner partial order \cite{horn1994} }, since it is  a covariance matrix and by definition atleast positive semi-definite (PSD). The other matrix, i.e. $\mathbf{D}(\bar{\mathbf{s}}_1-\bar{\mathbf{s}}_2)(\bar{\mathbf{s}}_1-\bar{\mathbf{s}}_2)^T\mathbf{D}^H$ is of course  rank-1 Hermitian, and is clearly PSD. From Lem.~\ref{lemma1}, it is straightforward to show that $\mathbf{R}_{\gamma}^{pq}\otimes \mathbf{D}(\bar{\mathbf{s}}_1-\bar{\mathbf{s}}_2)(\bar{\mathbf{s}}_1-\bar{\mathbf{s}}_2)^T \mathbf{D}^H\succeq \mathbf{0},\forall q$. Then from the definition of positive semi-definiteness, each of the $Q$ Hermitian quadratic forms in \eqref{eq20} is greater than zero, hence their sum is also greater than zero.
\end{proof}
\begin{prop} \label{propos2}
The objective function in \eqref{eq17} is individually convex w.r.t. $\mathbf{w}$, for any fixed but arbitrary $\mathbf{s}$.
\end{prop}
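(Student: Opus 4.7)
The plan is to adapt the same convexity template used in the proof of Prop.~\ref{propos1}, but the argument is much simpler here because for fixed $\mathbf{s}$ the matrix $\mathbf{R_u}(\mathbf{s})=\mathbf{R_n}+\mathbf{R_i}+\mathbf{R_c}(\mathbf{s})$ is a constant, and the objective $h(\mathbf{w}):=\mathbf{w}^H\mathbf{R_u}(\mathbf{s})\mathbf{w}$ is just a Hermitian quadratic form in $\mathbf{w}$. As in Prop.~\ref{propos1}, I first perform the real embedding $\mathbf{w}=\mathbf{D}\bar{\mathbf{w}}$ with $\bar{\mathbf{w}}\in\mathbbm{R}^{2NML}$, $\mathbf{D}=[\mathbf{I}_{NML}, j\mathbf{I}_{NML}]$, so that the map $\bar{h}(\bar{\mathbf{w}}):\mathbbm{R}^{2NML}\rightarrow\mathbbm{R}$ obtained from $h(\mathbf{w})$ has a real vector domain to which Def.~\ref{mydef1} applies directly.

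Next I would verify Def.~\ref{mydef1}(a) by expanding $h(t\mathbf{w}_1+(1-t)\mathbf{w}_2)-t\,h(\mathbf{w}_1)-(1-t)\,h(\mathbf{w}_2)$ for $t\in[0,1]$ and showing, after routine algebraic collection of terms, that it equals $-t(1-t)(\mathbf{w}_1-\mathbf{w}_2)^H\mathbf{R_u}(\mathbf{s})(\mathbf{w}_1-\mathbf{w}_2)$. Convexity then follows if and only if this Hermitian quadratic form in $(\mathbf{w}_1-\mathbf{w}_2)$ is non-negative, i.e.\ iff $\mathbf{R_u}(\mathbf{s})\succeq\mathbf{0}$.

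The core step is therefore to certify that $\mathbf{R_u}(\mathbf{s})$ is positive semi-definite for every admissible $\mathbf{s}$. I would do this componentwise: $\mathbf{R_n}$ and $\mathbf{R_i}$ are correlation matrices by construction and hence PSD; for the clutter term $\mathbf{R_c}(\mathbf{s})=\sum_{q=1}^{Q}\mathbf{B_q}\mathbf{R}_{\boldsymbol{\gamma}}^{pq}\mathbf{B_q}^H$ each summand is a congruence transformation of the PSD covariance $\mathbf{R}_{\boldsymbol{\gamma}}^{pq}$, hence PSD, and the sum of PSD matrices is PSD. Since the sum of PSD matrices remains PSD, $\mathbf{R_u}(\mathbf{s})\succeq \mathbf{0}$, closing the argument.

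The only subtlety — and the place where I expect a reader might pause — is the justification of moving from complex $\mathbf{w}$ to a real-variable notion of convexity; the $\mathbf{D}$-embedding trick already used in Prop.~\ref{propos1} handles this cleanly, and because $\mathbf{R_u}(\mathbf{s})$ is Hermitian PSD one does not even need the positive semi-definiteness of an associated real augmented matrix separately: the quadratic $(\mathbf{w}_1-\mathbf{w}_2)^H\mathbf{R_u}(\mathbf{s})(\mathbf{w}_1-\mathbf{w}_2)$ is automatically real and non-negative. Thus no Kronecker/tensor reduction of the type needed in Prop.~\ref{propos1} is required, and the proof can be kept essentially one line after the PSD observation.
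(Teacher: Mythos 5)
Your proposal is correct and follows essentially the same route as the paper: the paper's proof likewise reduces to the positive semi-definiteness of $\mathbf{R_u}(\mathbf{s})$ and invokes the convexity definition on the real/imaginary decomposition of $\mathbf{w}$, which is exactly your $\mathbf{D}$-embedding plus the identity $h(t\mathbf{w}_1+(1-t)\mathbf{w}_2)-t\,h(\mathbf{w}_1)-(1-t)\,h(\mathbf{w}_2)=-t(1-t)(\mathbf{w}_1-\mathbf{w}_2)^H\mathbf{R_u}(\mathbf{s})(\mathbf{w}_1-\mathbf{w}_2)$. Your componentwise justification that $\mathbf{R_n}$, $\mathbf{R_i}$, and each clutter term $\mathbf{B_q}\mathbf{R}_{\boldsymbol{\gamma}}^{pq}\mathbf{B_q}^H$ are PSD merely makes explicit what the paper asserts as "guaranteed," so no further comparison is needed.
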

\begin{proof}
Given the guaranteed positive semi-definiteness of $\mathbf{R_u}(\mathbf{s})$, the proof is straightforward to demonstrate by invoking the convexity definition on the vector consisting of the real and imaginary parts of $\mathbf{w}$.
\end{proof}

In fact, Prop.~1, Prop.~2 may be sharpened to include strong convexity, which, as we will show subsequently is desired for the solutions to exist, see the note immediately after $\eqref{eq32}$.  For now, however, individual convexity is sufficient to proceed with our analysis.

\begin{remk}\label{remarkstapobj} 
({\it Characteristic of STAP objective}) The STAP objective in \eqref{eq17} has {\it at most one} minima for a fixed but arbitrary $\mathbf{w}\in\mathbbm{C}^{NML}$ but $\forall \mathbf{s} \in\mathbbm{C}^{N}$. Likewise, it has {\it at most one} minima for a fixed but arbitrary $\mathbf{s}\in\mathbbm{C}^{N}$ but $\forall \mathbf{w} \in\mathbbm{C}^{NML}$
\end{remk}
This is concluded readily from Prop.~\ref{propos1}, Prop.~\ref{propos2}, i.e. the individual convexity. An illustrative example is provided in Fig.~\ref{figlocalmin}.
\begin{figure} [htbp!]
\centering
\includegraphics[scale=0.5]{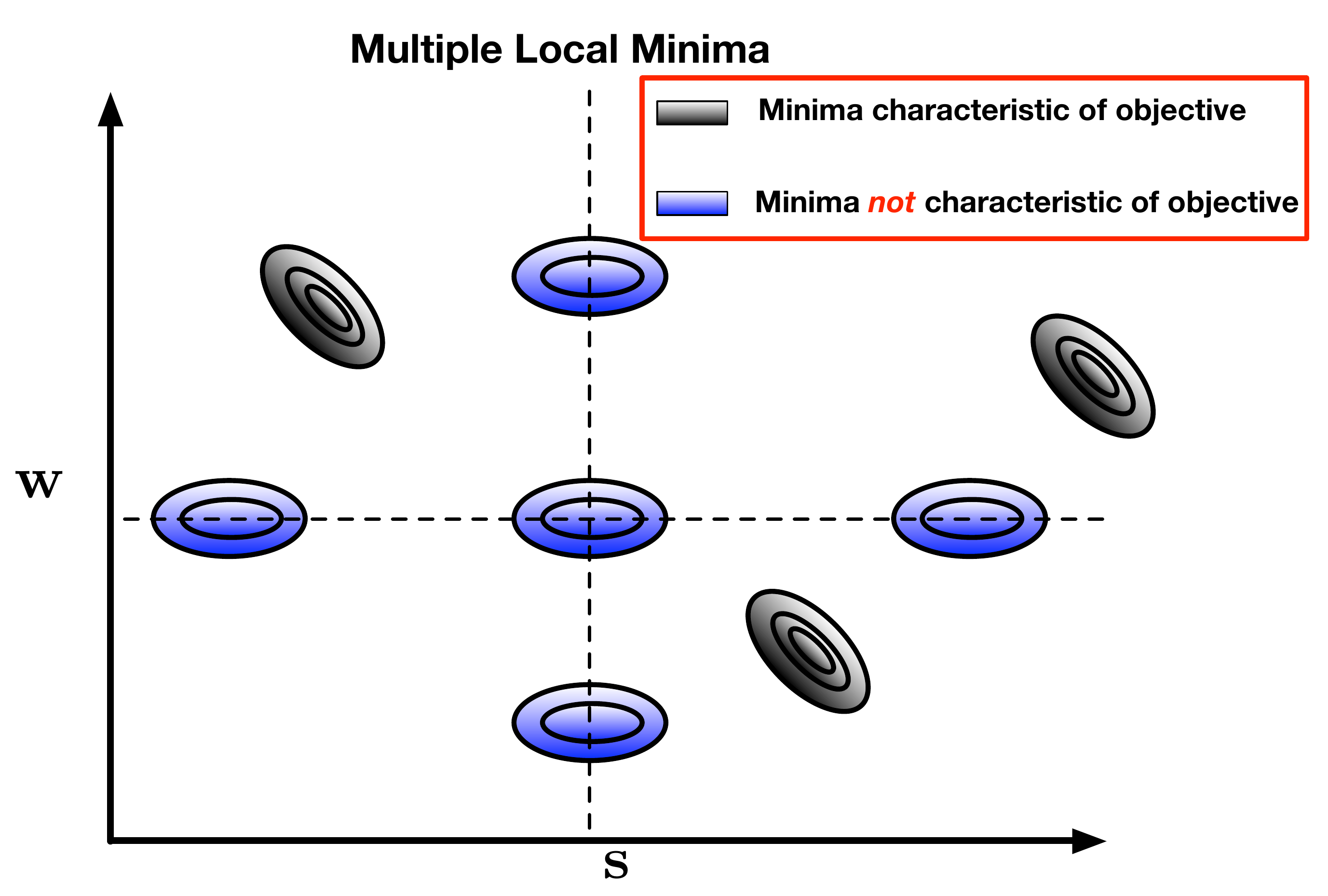}
\caption{An {\it illustrative} non-convex example with multiple local minima. Contours in black are characteristic of the objective. Contours in blue violate convexity in the $\mathbf{w}$, and $\mathbf{s}$ dimension individually, and are therefore {\it not characteristic} of the objective function.}
\label{figlocalmin}
\end{figure}

\subsection{Constrained alternating minimization}
Motivated from Prop.~\ref{propos1}, and Prop.~\ref{propos2}, we propose a constrained alternating minimization technique which is iterative. Before we present details on this technique, consider the following minimization problem, which optimizes $\mathbf{s}$, but for a fixed and arbitrary $\mathbf{w}$:
\begin{align}
\min\limits_{\mathbf{s}} \;\;\;\;\; &\mathbf{w}^H\mathbf{R_u}(\mathbf{s})\mathbf{w} \nonumber \\
\mbox{s. t. }\;\;\;\;\; & \mathbf{w}^H(\mathbf{v}(f_d)\otimes\mathbf{s}\otimes\mathbf{a}(\theta_t,\phi_t))=\kappa  \label{eq22} \\ 
\;\;\;\;\;\; & \mathbf{s}^H \mathbf{s}\leq P_o. \nonumber \nonumber
\end{align}
In \eqref{eq22}, the objective function could be rewritten as, 
\begin{align} \label{eq23}
\mathbf{w}^H\mathbf{R_u}(\mathbf{s})\mathbf{w}=&\mathbf{w}^H(\mathbf{R_n}+\mathbf{R_i})\mathbf{w} \\
+&\sum\limits_{q=1}^Q \mbox{Tr}\{ \mathbf{R}_{\gamma}^{pq} (\mathbf{I}_P\otimes\mathbf{s}^H)\mathbf{x_q}\mathbf{x}_{\bf q}^H (\mathbf{I}_P\otimes\mathbf{s})\} \nonumber.
\end{align}
In \eqref{eq23}, the trace operation  is further simplified as:
\begin{align} \label{eq24}
\mbox{Tr}\{ &\mathbf{R}_{\gamma}^{pq} (\mathbf{I}_P\otimes\mathbf{s}^H)\mathbf{x_q}\mathbf{x}_{\bf q}^H (\mathbf{I}_P\otimes\mathbf{s})\} \nonumber \\
&=\mbox{vec}\left( \left(\mathbf{R}_{\gamma}^{pq} (\mathbf{I}_P\otimes\mathbf{s}^H)\mathbf{x_q}\mathbf{x}_{\bf q}^H \right)^T \right)^T \mbox{vec}( \mathbf{I}_P\otimes\mathbf{s}) \nonumber 
\\
&=\mathbf{s}^H\mathbf{H}^T ( \mathbf{R}_{\gamma}^{pq}\otimes \mathbf{x_q}\mathbf{x}_{\bf q}^H)\mathbf{H}\mathbf{s}  \nonumber\\
&=\mathbf{s}^H\mathbf{Z_q}(\mathbf{w})\mathbf{s}
\end{align}
where $\mbox{vec}(\mathbf{I}_P\otimes\mathbf{s})=\mathbf{H}\mathbf{s}$, with $\mathbf{H}\in \mathbbm{R}^{P^2N \times N}=[\mathbf{H_1}^T,\mathbf{H_2}^T,\ldots,\mathbf{H_P}^T]^T$. The matrix $\mathbf{H_k}\in\mathbbm{R}^{PN \times N}, k=1,2,\ldots,P$ is further decomposed into $P$, $N\times N$ matrices, and is defined such that the $k$-th $N \times N$ matrix is $\mathbf{I}_N$ and the other $(N-1)$, $N\times N$ matrices are all equal to zero matrices. 

\begin{remk} \label{PSDremark}
(a) At the very least, $\sum\limits_{q=1}^Q\mathbf{Z_q}\succeq \mathbf{0}$.  (b) The matrix $\mathbf{Z_q}\succeq \mathbf{0}$ for $P<N$, always.  (c) However, it may be positive definite, i.e. $\mathbf{Z_q}\succ\mathbf{0}$  and hence $\sum\limits_{q=1}^Q\mathbf{Z_q}\succ\mathbf{0}$ for $P\geq N$ and for $\mathbf{R}_\gamma^{pq}\succ \mathbf{0}$.
\end{remk}

We note that (a) is readily implied from Prop.~\ref{propos1} since  a Hermitian quadratic form $\mathbf{x}^H\mathbf{B}\mathbf{x}$ is convex (strictly convex) iff $\mathbf{B}\succeq\mathbf{0}$ ( $\mathbf{B}\succ\mathbf{0}$). Since $\mathbf{R}_{\gamma}^{pq}\otimes \mathbf{x_q}\mathbf{x}_{\bf q}^H \succeq \mathbf{0}$ always ($\leq P$ non-zero eigenvalues and the rest are zeros) and that $P<N$, in other words, the transformation $\mathbf{H}^T(\mathbf{R}_{\gamma}^{pq}\otimes \mathbf{x_q}\mathbf{x}_{\bf q}^H)\mathbf{H}:\mathbbm{C}^{P^2 N\times N} \times \mathbbm{C}^{P^2 N\times N} \rightarrow \mathbbm{C}^{N\times N}$ and from the structure of $\mathbf{H}$, the result (b) is obvious. For (c), we know that $\mathrm{rank}(\mathbf{H})=N$, hence it could be shown after some tedious algebra that $\mathbf{Z_q}$ may be PD only when $P\geq N$ and that $\mathbf{R}_{\gamma}^{pq}$ is PD in the first place, also see for example \cite[pg. 399]{horn1994}.

Using \eqref{eq23} and \eqref{eq24}, the Lagrangian of \eqref{eq22} is readily cast as,
\begin{align} \label{eq25}
\mathcal{L}(\mathbf{s},\gamma_1,\gamma_2)&=\mathbf{w}^H( \mathbf{R_i+R_n})\mathbf{w}+\sum\limits_{q=1}^Q\mathbf{s}^H\mathbf{Z_q}(\mathbf{w})\mathbf{s} \\
&+\mbox{Re}\{ \gamma_1^{\ast} (\mathbf{w}^H\mathbf{G}\mathbf{s}-\kappa)\}+\gamma_2\mathbf{s}^H\mathbf{I}_N\mathbf{s}-\gamma_2P_o \nonumber
\end{align}
where $\gamma_1\in\mathbbm{C}$ and $\gamma_2\in\mathbbm{R}^{+}$ are the complex and real Lagrange parameters.

{\bf Lagrange Dual:} The Lagrange dual, denoted as $\mathcal{H}(\gamma_1,\gamma_2)=\inf\limits_{\mathbf{s}} \mathcal{L}( \mathbf{s},\gamma_1,\gamma_2)$. Since \eqref{eq25} consists of Hermitian quadratic forms and other linear terms of $\mathbf{s}$, we have $\mathcal{H}(\gamma_1,\gamma_2)=\mathcal{L}(\mathbf{s_o}(\gamma_1,\gamma_2),\gamma_1,\gamma_2)$, where $\mathbf{s_o}(\gamma_1,\gamma_2)$ is obtained by solving the first order optimality conditions, i.e.
\begin{align} \label{eq26}
\frac{\partial \mathcal{L}(\mathbf{s},\gamma_1,\gamma_2)}{\partial \mathbf{s}} ={\bf 0}
\end{align}
where, $\mathbf{0}$ is a  column vector of size $N$ and consists of all zeros. Further, in \eqref{eq26}, while taking the derivative the usual rules of complex vector differentiation apply, i.e. treat $\mathbf{s}^H$ independent of $\mathbf{s}$. The solution to \eqref{eq26} is readily obtained by differentiating \eqref{eq25}, and expressed as:
\begin{align} \label{eq27}
\mathbf{s_o}(\gamma_1,\gamma_2)=-\frac{\gamma_1}{2}\Bigl( \sum\limits_{q=1}^Q \mathbf{Z_q}( \mathbf{w})+\gamma_2\mathbf{I}_N\Bigr)^{-1}\mathbf{G}^H\mathbf{w}.
\end{align}
Using \eqref{eq27}, the dual $\mathcal{H}(\gamma_1,\gamma_2)$ is given by:
\begin{align} \label{eq28}
&\mathcal{H}( \gamma_1,\gamma_2)= \mathbf{w}^H( \mathbf{R_i+R_n})\mathbf{w} -\kappa\mbox{Re}\{ \gamma_1^\ast \}-\gamma_2P_o \nonumber \\
-&\frac{|\gamma_1|^2}{4} \mathbf{w}^H\mathbf{G} \Bigl( \sum\limits_{q=1}^Q \mathbf{Z_q}( \mathbf{w})+\gamma_2\mathbf{I}_N\Bigr)^{-1}\mathbf{G}^H\mathbf{w}.
\end{align} 
Equation \eqref{eq28} is further simplified by decomposing, $\gamma_1=\gamma_{1r}+j\gamma_{1i}$. In which case, we notice that \eqref{eq28} is quadratic in $\gamma_{1r},\gamma_{1i}$,and purely linear in $\lambda_2$. The Lagrange dual optimization is therefore, 
\begin{align} \label{eq29}
\max \limits_{\gamma_{1r},\gamma_{1i},\gamma_2} \;\;\;\; &\mathcal{H}(\gamma_{1r},\gamma_{1i},\gamma_2) \nonumber \\
\mbox{s. t } \;\;\;\; &\gamma_2\geq0.
\end{align}
Maximizing first w.r.t. $\gamma_{1r},\gamma_{1i}$, we have the solutions,
\begin{align*}
\bar{\gamma}_{1r}=\frac{-2\kappa}{\mathbf{w}^H\mathbf{G} \Bigl( \sum\limits_{q=1}^Q \mathbf{Z_q}( \mathbf{w})+\gamma_2\mathbf{I}_N\Bigr)^{-1}\mathbf{G}^H\mathbf{w}},\;\bar{\gamma}_{1i}=0.
\end{align*}
Substituting the above solutions into \eqref{eq28}, the Lagrange dual optimization problem and after ignoring an unnecessary additive constant, takes the form,
\begin{align} \label{eq30}
\max\limits_{\gamma_2} \;\; &\kappa^2 \Bigl( \mathbf{w}^H\mathbf{G} \Bigl( \sum\limits_{q=1}^Q \mathbf{Z_q}( \mathbf{w})+\gamma_2\mathbf{I}_N\Bigr)^{-1}\mathbf{G}^H\mathbf{w}\Bigr)^{-1}-\gamma_2P_o \nonumber \\
\mbox{s. t. } \;\;&\gamma_2\geq0
\end{align}
The associated Lagrangian for \eqref{eq30} is
\begin{align}\label{lagra1}
\mathcal{D}(\gamma_2,\gamma)=\frac{\kappa^2} {\mathbf{w}^H\mathbf{G}\mathbf{F}^{-1} \mathbf{G}^H\mathbf{w}} -\gamma_2P_o-\gamma_3\gamma_2
\end{align}
where $\mathbf{F}:=\sum\limits_{q=1}^Q \mathbf{Z_q}( \mathbf{w})+\gamma_2\mathbf{I}_N$. The first order optimality condition for the optimization \eqref{eq30} is given by:
\begin{align*} 
&\frac{\partial}{\partial \gamma_2} \bigl( \frac{\kappa^2} {\mathbf{w}^H\mathbf{G}\mathbf{F}^{-1} \mathbf{G}^H\mathbf{w}} \bigr)-P_o-\gamma_3 =0 \nonumber \\
\mbox{or } &\frac{-\kappa^2}{ ( \mathbf{w}^H\mathbf{G}\mathbf{F}^{-1} \mathbf{G}^H\mathbf{w})^{2}} \mathbf{w}^H\mathbf{G}\frac{\partial\mathbf{F}^{-1}}{\partial\gamma_2}\mathbf{G}^H\mathbf{w}-P_o -\gamma_3 =0 \nonumber \\
\mbox{or } &\frac{\kappa^2}{ ( \mathbf{w}^H\mathbf{G}\mathbf{F}^{-1} \mathbf{G}^H\mathbf{w})^{2}} \mathbf{w}^H\mathbf{G}\bigl( \mathbf{F}^{-1}\frac{\partial\mathbf{F}}{\partial\gamma_2} \mathbf{F}^{-1} \bigr)\mathbf{G}^H\mathbf{w}-P_o -\gamma_3 =0 \nonumber 
\end{align*}
where $\gamma_3$ is the Lagrange multiplier associated with the Lagrangian \eqref{lagra1}, and we also have $\tfrac{\partial \mathbf{F}}{\partial \gamma_2}=\mathbf{I}_N$. The complementary slackness and constraint qualifier for \eqref{eq30} i.e. $\gamma_3\gamma_2=0$ and $\gamma_2\geq0$ form the rest of the equations comprising the KKT conditions. It is now readily shown that the solution to \eqref{eq30} is given by
\begin{align} \label{eq31}
&\bar{\gamma}_2=\max [0,\gamma_2] \\
&\gamma_2 \mbox{ solves } \gamma_2\left( \kappa^2\mathbf{w}^H\mathbf{G}\mathbf{F}^{-2}\mathbf{G}^H\mathbf{w}-P_o( \mathbf{w}^H\mathbf{G}\mathbf{F}^{-1}\mathbf{G}^H\mathbf{w})^2\right)=0 \nonumber.
\end{align}

\begin{prop} \label{propos3}
The parameter $\bar{\gamma}_2=0$ solves \eqref{eq31}.
\end{prop}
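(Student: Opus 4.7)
My plan has two parts. The algebraic part is a one-line observation: the equation appearing in \eqref{eq31} has $\gamma_2$ as an explicit left-hand multiplicative factor, so substituting $\gamma_2 = 0$ yields $0 \cdot \bigl(\kappa^2 \mathbf{w}^H\mathbf{G}\mathbf{F}^{-2}\mathbf{G}^H\mathbf{w} - P_o(\mathbf{w}^H\mathbf{G}\mathbf{F}^{-1}\mathbf{G}^H\mathbf{w})^2\bigr) = 0$, which holds regardless of the bracketed term. The outer $\max[0,\cdot]$ then gives $\bar\gamma_2 = \max[0,0] = 0$.

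The substantive part is to justify that $\gamma_2 = 0$ is indeed the relevant KKT-feasible maximizer of the dual \eqref{eq30}, rather than a spurious root of the factored equation. I would do this by proving concavity of $h(\gamma_2) := \kappa^2/\phi(\gamma_2) - \gamma_2 P_o$ on $\gamma_2 \geq 0$, where $\phi(\gamma_2) := \mathbf{w}^H\mathbf{G}\mathbf{F}^{-1}\mathbf{G}^H\mathbf{w}$. Writing $\psi := \mathbf{w}^H\mathbf{G}\mathbf{F}^{-2}\mathbf{G}^H\mathbf{w}$ and $\chi := \mathbf{w}^H\mathbf{G}\mathbf{F}^{-3}\mathbf{G}^H\mathbf{w}$, and differentiating twice using $\partial\mathbf{F}^{-1}/\partial\gamma_2 = -\mathbf{F}^{-2}$, one obtains $h''(\gamma_2) = 2\kappa^2(\psi^2 - \phi\chi)/\phi^3$. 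Applying the Cauchy--Schwarz inequality to the pair $\mathbf{F}^{-1/2}\mathbf{G}^H\mathbf{w}$ and $\mathbf{F}^{-3/2}\mathbf{G}^H\mathbf{w}$ yields $\psi^2 \leq \phi\chi$, hence $h'' \leq 0$ and $h$ is concave.

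Concavity then reduces the KKT check to verifying $h'(0) \leq 0$, which, via \eqref{eq27}, is equivalent to $||\mathbf{s}_o(0)||^2 \leq P_o$: the minimum-clutter waveform returned at $\gamma_2 = 0$ must already respect the energy budget. This step is the main obstacle. Primal feasibility of \eqref{eq22} only bounds the limiting value $\lim_{\gamma_2 \to \infty}||\mathbf{s}_o(\gamma_2)||^2 = \kappa^2/||\mathbf{G}^H\mathbf{w}||^2$ from above by $P_o$, whereas the same Cauchy--Schwarz identity shows that $||\mathbf{s}_o(\gamma_2)||^2$ is monotone non-increasing in $\gamma_2$, so $||\mathbf{s}_o(0)||^2$ is in fact the largest value in that family and the easy direction of the bound goes the wrong way. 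Closing this gap is the delicate point, and likely requires invoking a structural property tying the current Capon-optimal $\mathbf{w}$ to $P_o$ (or an explicit compatibility assumption between $P_o$ and the clutter statistics), after which concavity and $h'(0)\leq 0$ together identify $\gamma_2 = 0$ as the unique constrained maximizer.
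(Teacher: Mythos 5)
Your route is genuinely different from the paper's and, where it goes, it is sharper. The paper diagonalizes $\mathbf{F}=\mathbf{E}(\boldsymbol{\Lambda}+\gamma_2\mathbf{I})\mathbf{E}^H$, writes $f(\gamma_2)=\kappa^2\sum_n|z_n|^2/(d_n+\gamma_2)^2-P_o\bigl(\sum_n|z_n|^2/(d_n+\gamma_2)\bigr)^2$ with $\mathbf{z}=\mathbf{E}^H\mathbf{G}^H\mathbf{w}$, checks the limits $f(\infty)=0$ and $\gamma_2f(\gamma_2)\rightarrow 0$, and then asserts --- explicitly on numerical grounds, under the hypothesis $P_o\geq\kappa$ --- that $f$ is monotone and has no root on $[0,\infty)$, so that $\gamma_2f(\gamma_2)=0$ forces $\gamma_2=0$. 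Your Cauchy--Schwarz computation $h''=2\kappa^2(\psi^2-\phi\chi)/\phi^3\leq 0$ replaces that numerical claim with an actual proof of concavity of the dual; note moreover that $h'(\gamma_2)=\kappa^2\psi/\phi^2-P_o=\|\mathbf{s}_o(\gamma_2)\|^2-P_o$, so your concavity statement and your monotone-decrease statement for $\|\mathbf{s}_o(\gamma_2)\|^2$ are literally the same inequality, and together they reduce the whole proposition to the single checkable condition $h'(0)\leq 0$, i.e.\ $\|\mathbf{s}_o(0)\|^2=\kappa^2\psi(0)/\phi(0)^2\leq P_o$.

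The gap you flag at the end is real, but it is not a defect of your argument relative to the paper's: it is the paper's own gap, made visible. The paper closes it only by fiat --- the assumption $P_o\geq\kappa$ plus ``it may now be shown numerically that a solution \ldots does not exist'' --- and the Note immediately following the proposition concedes that $\bar{\gamma}_2=0$ works ``with suitable choices of the free parameter $P_o$,'' i.e.\ that the power constraint is assumed inactive. Your reduction shows exactly what ``suitable'' must mean: $P_o$ must dominate $\kappa^2\psi(0)/\phi(0)^2$, a quantity depending on the eigenvalues $d_n$ of $\sum_{q}\mathbf{Z_q}(\mathbf{w})$ and on $\mathbf{w}$, not merely on $\kappa$; since the $d_n$ can be arbitrarily small, $P_o\geq\kappa$ alone does not suffice, so the structural hypothesis you ask for cannot be avoided. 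State the inactive-power-constraint assumption explicitly and your proof is complete, and strictly more rigorous than the one in the paper.
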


\begin{proof}
The spectral theorem for Hermitian matrices, allows for a decomposition, $\mathbf{F}=\mathbf{E}(\boldsymbol{\Lambda}+\gamma_2\mathbf{I}_N)\mathbf{E}^H$. The matrix $\boldsymbol{\Lambda}$ is a diagonal matrix comprising eigenvalues in descending order, whereas, $\mathbf{E}$ is unitary and whose columns are the corresponding eigenvectors of $\mathbf{F}$. For ease of exposition, denote $\mathbf{z}\in\mathbbm{C}^{N}:=\mathbf{E}^H\mathbf{G}^H\mathbf{w}$, then assume a function $f(\gamma_2):\mathbbm{R}^{+}\rightarrow\mathbbm{R}$, expressed as
\begin{align} \label{lagra2}
f(\gamma_2)&:=\kappa^2\mathbf{w}^H\mathbf{G}\mathbf{F}^{-2}\mathbf{G}^H\mathbf{w}-P_o( \mathbf{w}^H\mathbf{G}\mathbf{F}^{-1}\mathbf{G}^H\mathbf{w})^2 \nonumber \\
&=\sum\limits_{n=1}^N  \kappa^2\frac{|z_n|^2}{(d_n+\gamma_2)^2} -P_o\left( \sum\limits_{n=1}^N \frac{|z_n|^2}{d_n+\gamma_2}\right)^2
\end{align}
where $z_n,d_n$ are the $n$-th elements of $\mathbf{z}$, and the $n$-th eigenvalue in $\boldsymbol{\Lambda}$. We analyze $f(\gamma_2)$ and $\gamma_2f(\gamma_2)$ in detail. The following (behavior at $0$ and $\infty$) are readily observed
\begin{subequations} \label{lagra3}
\begin{align}
\lim_{\gamma_2\rightarrow \infty}f(\gamma_2)&=f(\infty) =0 \label{lagra32}\\
\lim_{\gamma_2\rightarrow 0}f(\gamma_2)&=\sum\limits_{n=1}^N \kappa^2\frac{|z_n|^2}{d_n^2} -P_o\left(  \sum\limits_{n=1}^N\frac{|z_n|^2}{d_n}\right)^2=f(0) \label{lagra31}
\end{align}
\end{subequations}
 Furthermore, it is seen that
\begin{equation}  \label{lagra4}
\begin{aligned}
&\lim_{\gamma_2\rightarrow\infty}\gamma_2f(\gamma_2)=\lim_{\gamma_2\rightarrow\infty}\frac{f(\gamma_2)}{1/\gamma_2}=\lim_{\gamma_2\rightarrow\infty}\frac{\frac{\mathrm{d}f(\gamma_2)}{\mathrm{d}\gamma_2}}{(-1/\gamma_2^2)} =0
\end{aligned}
\end{equation}
Moreover, consider $f(\gamma_2)=h_1(\gamma_2)-h_2(\gamma_2)=0$, where $h_1(\gamma_2)=\kappa^2\sum\limits_{n=1}^N\frac{f_n^2(\gamma_2)}{|z_n|^2},h_2(\gamma_2)=P_o (\sum\limits_{n=1}^N f_n(\gamma_2))^2$, where $f_n(\gamma_2)=\frac{|z_n|^2}{d_n+\gamma_2}$. Note that $fn(\gamma_2)\downarrow, n=1,2,\ldots,N$ and that $h_i(\gamma_2)\downarrow,i=1,2$, i.e. decreasing functions w.r.t. $\gamma_2 \in[0,\infty)$. Then equation$f(\gamma_2)=0$ implies that 
\begin{equation}\label{lagra41}
\begin{aligned}
\sum\limits_{n=1}^N  \kappa^2\frac{|z_n|^2}{(d_n+\gamma_2)^2} -P_o\left( \sum\limits_{n=1}^N \frac{|z_n|^2}{d_n+\gamma_2}\right)^2=0 \\
\mbox{or } \sum\limits_{n=1}^N (\frac{\kappa^2}{|z_n|^2}-P_o )f_n^2(\gamma_2)=2\sum\limits_{n_1}\sum\limits_{\substack{n_2\\ n_2\neq n_1}} f_{n_1}(\gamma_2)f_{n_2}(\gamma_2)
\end{aligned}
\end{equation}
where $(n_1,n_2)\in (1,2,\ldots,N)$. Recall that $d_n\neq 0 \forall n$, $d_n\geq d_{n+1}, n=1,2,\ldots, N$, and $ |z_n|\neq 0 \forall n$.  A solution to \eqref{lagra41} for $\gamma_2 \in[0,\infty)$ is readily derived in the trivial  case, for example when $f_{n_1}(\gamma_2)=f_{n_2}(\gamma_2)$, $P_o\neq \kappa^2$, and for $|z_n|$ to be some arbitrary constant for all $n$. For $P_o\geq \kappa$ it may now be shown numerically that a solution to \eqref{lagra41} for $\gamma_2 \in [0,\infty)$ does not exist.
\end{proof}
In fact, our extensive numerical simulations reveal that in general and assuming $P_o\geq \kappa$ and for $\gamma_{21}\leq \gamma_{22}$
\begin{align} \label{lagra51}
\begin{cases}
f(\gamma_{21})\geq f(\gamma_{22}) \mbox{ if } f(0)>0 \\
f(\gamma_{21})\leq f(\gamma_{22}) \mbox{ if } f(0)<0 
\end{cases} \gamma_{21} \mbox{ and } \gamma_{22} \in[0,\infty).
\end{align}
That is, $f(\gamma_2)$ is monotonic. From the above arguments, therefore, $\gamma_2f(\gamma_2)=0$ implies that  $\gamma_2=0$. Alternatively nevertheless, a solution to \eqref{eq31} may be found numerically and is computationally cheap. 

{\bf Note:} {\it (Inactive power constraint)} It is noted that trivially $\bar{\gamma}_2=0$ {\it may always} be chosen as a solution with suitable choices of the free parameter $P_o$. This implies that the power constraint is always satisfied and hence is an inactive constraint in the corresponding Lagrangian.

A graphical behavior of $h_i(\gamma_2),i=1,2$ and thus the behavior of $f(\gamma_2)$ is seen from Fig.~\ref{figlagrange}. Using Prop.~\ref{propos3}, the waveform design solution is  unique, a function of $\mathbf{w}$ and expressed as,
\begin{align} \label{eq32}
\mathbf{s}_o(\mathbf{w})=\frac{\kappa \Bigl( \sum\limits_{q=1}^Q \mathbf{Z_q}( \mathbf{w})\Bigr)^{-1}\mathbf{G}^H\mathbf{w} }{\mathbf{w}^H\mathbf{G} \Bigl( \sum\limits_{q=1}^Q \mathbf{Z_q}( \mathbf{w})\Bigr)^{-1}\mathbf{G}^H\mathbf{w}}.
\end{align}

{\bf Note:} ({\it Strong convexity}) To compute the constrained alternating minimization solutions, the respective matrices in \eqref{eq32}, \eqref{weightcomp} must be invertible, implying strong convexity individually  w.r.t. $\mathbf{w}$, $\mathbf{s}$, respectively. This directly necessitates, $\lambda_{\min}\Bigl( \sum\limits_{q=1}^Q \mathbf{Z_q}( \mathbf{w})\Bigr)\neq 0$ and $\lambda_{\min} (\mathbf{R_u}(\mathbf{s}) )\neq 0$, and hence also, positive definiteness of these matrices.

The alternating minimization algorithm is now succinctly stated in Table \ref{table1}.

\begin{figure} [tbp!]
\centering
\includegraphics[scale=0.5]{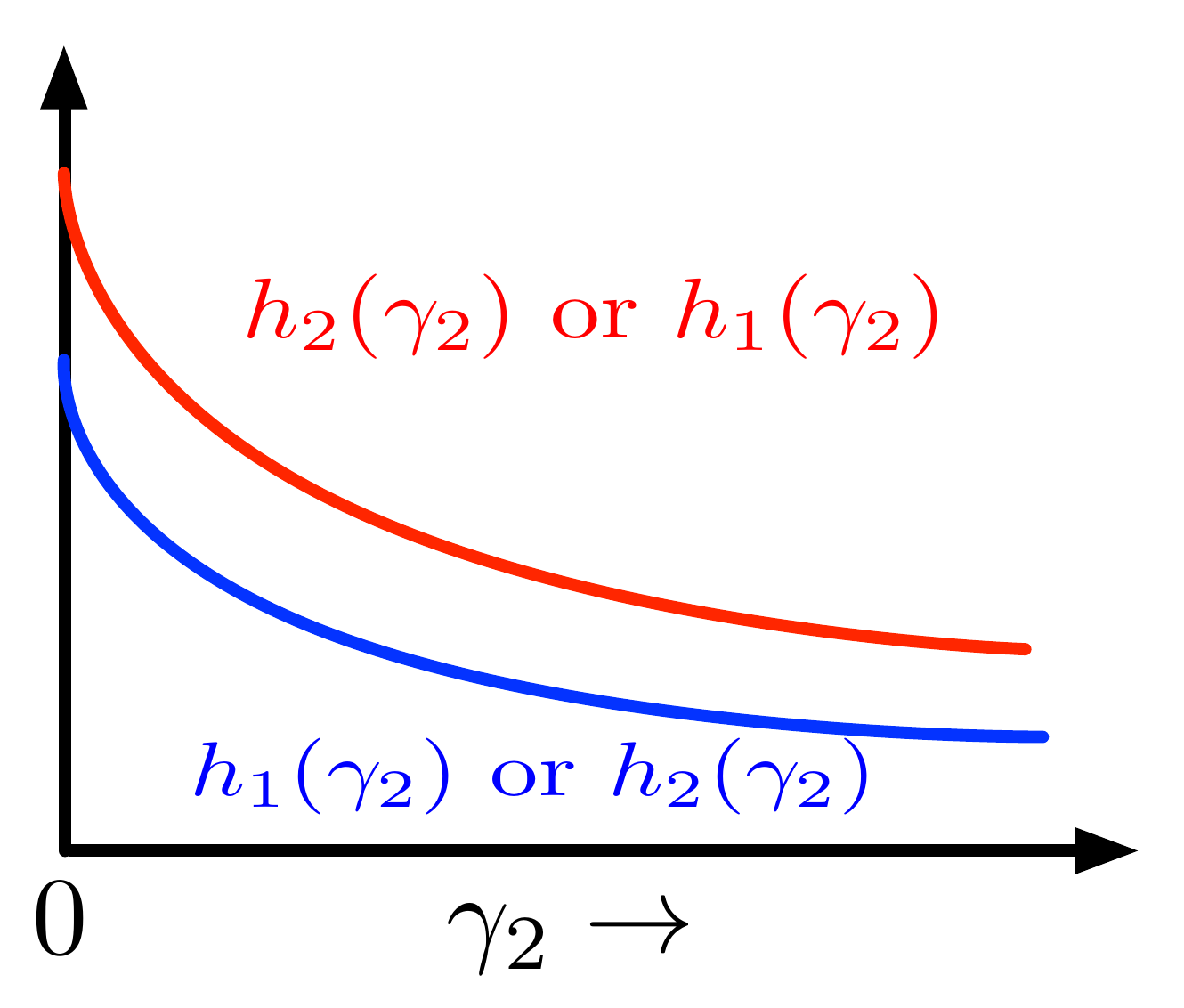}
\caption{Two cases are presented assuming $P_o\geq \kappa$. (a) Blue: $h_1(\gamma_2)$, Red: $h_2(\gamma_2)$ and therefore $f(\gamma_2)$ is decreasing, (b)  Blue: $h_2(\gamma_2)$, Red: $h_1(\gamma_2)$ and therefore $f(\gamma_2)$ is increasing. The blue and red curves intersect at $\infty$.}
\label{figlagrange}
\end{figure}

\begin{remk}\label{dualprop}

({\it Strong duality}) The optimal value of the lagrange dual problem is given by
\begin{equation*} 
\mathbf{w}^H( \mathbf{R_i+R_n})\mathbf{w} +\frac{\kappa^2}{\mathbf{w}^H\mathbf{G} \Bigl( \sum\limits_{q=1}^Q \mathbf{Z_q}( \mathbf{w})\Bigr)^{-1}\mathbf{G}^H\mathbf{w}}.
\end{equation*}
It is therefore trivial to show that the duality gap between \eqref{eq22} and \eqref{eq29} is zero.
 In other words, strong duality holds between the primal in \eqref{eq22} and the dual in \eqref{eq29}. From Slaters condition \cite{Boyd2004} the sufficient condition to ensure strong duality is the existence of \eqref{eq32}, i.e. the inverse of $\sum\limits_{q=1}^Q \mathbf{Z_q}( \mathbf{w})$ exists (see note below), and that the solution in \eqref{eq32} satisfies the power constraint.
\end{remk}

{\bf Note:} ({\it Lower bound on $Q$}) Since $\mathrm{rank}( \sum\limits_{q=1}^Q \mathbf{Z_q}( \mathbf{w}))\leq \sum\limits_{q=1}^Q\mathrm{rank}(  \mathbf{Z_q}( \mathbf{w}))$, {\it assume the worst case} $P=1$, then we have that $\mathrm{rank}(\mathbf{Z_q})=1$. Therefore for $Q$ distinct  (different spatial signature and Doppler) clutter patches, $Q\geq N$ ensures invertibility of $\sum\limits_q \mathbf{Z_q}$.

\begin{table}[htbp!] 
\centering
\caption{Constrained alternating minimization for waveform adaptive radar STAP}
\begin{tabular}{|p{3.3in}|} 
\hline
\begin{enumerate} \label{table1}
\item {\it Initialize}: Start with  an initial waveform design, defined as $\mathbf{s}_o^{(0)}$, set counter $k=1$
\item { \it Filter design}: Design the optimal filter weight vector, 
$\mathbf{w}_{o}^{(k)}=\mathbf{w}_o(\mathbf{s}_o^{(k-1)})$, where \eqref{weightcomp} is used to compute $\mathbf{w}_o( \cdot)$.
\item{ \it Waveform design}: Design the updated waveform $\mathbf{s}_o^{(k)}=\mathbf{s}_o(\mathbf{w}_o^{(k)})$, where \eqref{eq32} is used to compute $\mathbf{s}_o(\cdot)$.
\item {\it Check:} If convergence is achieved, exit, else $k=k+1$, go back to step-2.
\end{enumerate} \\
\hline
\end{tabular}
\end{table}

\subsubsection{Convergence, performance guarantees, and other properties} Denote $(\mathbf{w}_k,\mathbf{s}_k)$ as the sequence of iterates of the algorithm in Table \ref{table1} and define $g(\mathbf{w}_k,\mathbf{s}_k):=\mathbf{w}_k\mathbf{R_u}(\mathbf{s}_k)\mathbf{w}_k^H$, then for $k=1,2,\ldots$
\begin{align} \label{eq33}
\cdots g(\mathbf{w}_k,\mathbf{s}_{k-1} )\geq g(\mathbf{w}_k,\mathbf{s}_k ) \geq g( \mathbf{w}_{k+1},\mathbf{s}_k)\cdots.
\end{align}
Moreover, since at least $\mathbf{R_u}(\mathbf{s})\succeq \mathbf{0}$, i.e. PSD $\forall\mathbf{s}$, we have that $g(\mathbf{w},\mathbf{s})\geq0,\forall\mathbf{w}$. Therefore each of the individual terms in \eqref{eq33} are {\it lower bounded} by zero, in other words $g(\mathbf{w}_{k_1},\mathbf{s}_{k_2})\geq0,k_1=k,\mbox{ or }k+1$ and $k_2=k,\mbox{ or }k+1$, for $k=1,2,\ldots$ .
\begin{prop} \label{propos4}
Iff the iterates $(\mathbf{w}_k,\mathbf{s}_k)$ of the constrained alternating minimization exist, then $\lim\limits_{k\rightarrow\infty}\mathbf{g ( \mathbf{w}_k,\mathbf{s}_k)}$ is finite.
\end{prop}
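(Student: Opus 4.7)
The plan is to invoke the monotone convergence theorem for real sequences: since the excerpt already establishes (via \eqref{eq33}) that the objective values along the iterates form a monotonically non-increasing sequence, and since the objective is bounded below by zero, the limit must exist and be finite. There is almost no technical work beyond correctly assembling what is already in place just above the statement.

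First, I would fix the sequence under consideration. Define $a_k := g(\mathbf{w}_k,\mathbf{s}_k)$ for $k=1,2,\ldots$ and, for convenience, the interleaved sequence $b_k := g(\mathbf{w}_{k+1},\mathbf{s}_k)$. The iteration in Table~\ref{table1} alternates between the filter update (Step 2), which minimizes $g(\cdot,\mathbf{s}_{k})$ over $\mathbf{w}$ by Prop.~\ref{propos2}, and the waveform update (Step 3), which minimizes $g(\mathbf{w}_{k+1},\cdot)$ over $\mathbf{s}$ subject to the Capon and power constraints by Prop.~\ref{propos1}. Because each sub-problem returns a minimizer of the current objective slice (existence of these minimizers being precisely the hypothesis of the proposition, ensured by the strong-convexity note following \eqref{eq32} and encoded in \eqref{weightcomp} and \eqref{eq32}), the value cannot increase: $a_k \ge b_k \ge a_{k+1}$. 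This is exactly the chain in \eqref{eq33}, so $\{a_k\}_{k\ge 1}$ is monotonically non-increasing.

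Next, I would establish the lower bound. Since $\mathbf{R_u}(\mathbf{s}) = \mathbf{R_n}+\mathbf{R_i}+\mathbf{R_c}(\mathbf{s}) \succeq \mathbf{0}$ for every admissible $\mathbf{s}$ (each summand being a covariance matrix, hence positive semidefinite), the Hermitian quadratic form satisfies
\begin{equation*}
a_k = \mathbf{w}_k^H\mathbf{R_u}(\mathbf{s}_k)\mathbf{w}_k \ge 0 \qquad \text{for all } k.
\end{equation*}
Thus $\{a_k\}$ is a monotonically non-increasing sequence of real numbers that is bounded below by $0$. The monotone convergence theorem then yields $\lim_{k\to\infty} a_k = \inf_k a_k \in [0,a_1]$, which is finite. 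This proves the proposition.

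The only conceivable obstacle is purely a matter of hypothesis-handling rather than of proof: the argument relies on each iterate being well-defined, which in turn requires positive definiteness of $\mathbf{R_u}(\mathbf{s}_{k-1})$ and of $\sum_{q=1}^Q \mathbf{Z_q}(\mathbf{w}_k)$ at every step so that the closed-form updates \eqref{weightcomp} and \eqref{eq32} produce valid vectors. Since the proposition's premise is precisely that the iterates $(\mathbf{w}_k,\mathbf{s}_k)$ exist, this is assumed and needs no additional work here; the rank/invertibility conditions noted after Rem.~\ref{dualprop} supply a concrete sufficient condition if one wishes to make the hypothesis explicit. Note that the argument gives only convergence of the objective values, not of the iterate sequence itself — the latter is a strictly stronger statement and would require additional work (e.g., bounded iterates plus a subsequence/KL-type argument), which I would defer to a subsequent result.
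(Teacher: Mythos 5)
Your argument is correct and is essentially identical to the paper's own proof: both establish the non-increasing chain \eqref{eq33} from the alternating minimization steps, lower-bound the objective by zero via positive semidefiniteness of $\mathbf{R_u}(\mathbf{s})$, and invoke the monotone convergence theorem. Your write-up merely spells out the details that the paper compresses into a single sentence.
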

\begin{proof}
The non-increasing property in \eqref{eq33}, and since each term in \eqref{eq33} is lower bounded, straightforward application of the monotone convergence  theorem to the sequence, $\{\mathbf{g ( \mathbf{w}_k,\mathbf{s}_k)}\}$,completes the proof.
\end{proof}
We note that convergence to a finite limit as evidenced from Prop.~\ref{propos4} is indeed dependent on the constraints via the existence of the iterates $(\mathbf{w}_k,\mathbf{s}_k)$. This however does not imply convergence of the sequence $\{(\mathbf{w}_k,\mathbf{s}_k) \}$, for which, consider the following.
\begin{remk}\label{remark1}
The alternating minimization is a special case of the block Gauss-Siedel and block co-ordinate descent (BCD) algorithm with block size equal to two \cite{Grippo2000,Luo1992}.  
\end{remk}
\begin{defn} \label{deflimit} ({\it Convergence in $\mathbbm{R}^{N}$}) A sequence $\{ \mathbf{x}_k\} \in\mathbbm{R}^{N} ,k=1,2,\ldots$ is said to converge to $\tilde{\mathbf{x}}$, a limit point, if, $\forall \epsilon>0,\;\; \exists K\in{\mathbbm{N}}:\;||\mathbf{x}_k-\tilde{\mathbf{x}}|| \leq\epsilon,\; k>K$.
\end{defn}
\begin{lem} \label{altminlemma}
(Constrained alternating minimization lemma) Assume that a function $g(\mathbf{z}):\mathbbm{R}^{2N}\rightarrow\mathbbm{R},\mathbf{z}=[\mathbf{x}^T,\mathbf{y}^T]^T$ is continuously differentiable over a closed nonempty convex set, $\mathcal{A}=\mathcal{A}_1\times\mathcal{A}_2$. Also, suppose the solution to the constrained optimization problems, $\min \limits_{\mathbf{x}\in\mathcal{A}_1} g(\mathbf{x},\mathbf{y} )$ and $\min \limits_{\mathbf{y}\in\mathcal{A}_2} g(\mathbf{x},\mathbf{y} )$ are uniquely attained. Let $\{ \mathbf{z}_k \}$ be the sequence generated by this algorithm, then every limit point of this sequence is also a stationary point.
\end{lem}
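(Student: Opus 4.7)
The plan is to invoke the standard two-block coordinate descent convergence argument, which rests on three ingredients: monotone decrease of the objective along the iterates, continuity of the block-minimizer under perturbation of the fixed block, and the first-order optimality conditions over a convex set. Write $\mathbf{z}_k = (\mathbf{x}_k, \mathbf{y}_k)$ and recall from \eqref{eq33} and Prop.~\ref{propos4} that $\{g(\mathbf{z}_k)\}$ is non-increasing. Since a limit point is assumed to exist, one can pass to a subsequence $\{\mathbf{z}_{k_j}\}$ with $\mathbf{z}_{k_j} \to \bar{\mathbf{z}} = (\bar{\mathbf{x}},\bar{\mathbf{y}}) \in \mathcal{A}_1 \times \mathcal{A}_2$; continuity of $g$ yields $g(\mathbf{z}_{k_j}) \to g(\bar{\mathbf{z}})$, and monotonicity then forces $g(\mathbf{z}_k) \to g(\bar{\mathbf{z}})$ for the whole sequence, so consecutive differences $g(\mathbf{z}_k) - g(\mathbf{z}_{k+1}) \to 0$.

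Next I would show that $\bar{\mathbf{x}}$ is the minimizer of $g(\cdot, \bar{\mathbf{y}})$ on $\mathcal{A}_1$. Let $\mathbf{x}^\star(\mathbf{y})$ denote the (assumed unique) minimizer of $g(\cdot,\mathbf{y})$ over $\mathcal{A}_1$. A Berge/maximum-theorem style argument---or a direct one---exploits the uniqueness hypothesis: if $\mathbf{y}_{k_j} \to \bar{\mathbf{y}}$, then along any sub-subsequence where $\mathbf{x}^\star(\mathbf{y}_{k_j})$ converges, its limit $\tilde{\mathbf{x}}$ satisfies $g(\tilde{\mathbf{x}}, \bar{\mathbf{y}}) \leq g(\mathbf{x}, \bar{\mathbf{y}})$ for every $\mathbf{x} \in \mathcal{A}_1$ by continuity, hence $\tilde{\mathbf{x}} = \mathbf{x}^\star(\bar{\mathbf{y}})$ uniquely, so $\mathbf{x}_{k_j+1} = \mathbf{x}^\star(\mathbf{y}_{k_j}) \to \mathbf{x}^\star(\bar{\mathbf{y}})$. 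Coupled with $g(\mathbf{x}_{k_j+1}, \mathbf{y}_{k_j}) - g(\mathbf{x}_{k_j}, \mathbf{y}_{k_j}) \to 0$, continuity gives $g(\mathbf{x}^\star(\bar{\mathbf{y}}), \bar{\mathbf{y}}) = g(\bar{\mathbf{x}}, \bar{\mathbf{y}})$, and a second appeal to uniqueness forces $\bar{\mathbf{x}} = \mathbf{x}^\star(\bar{\mathbf{y}})$. By an identical argument on the other block, $\bar{\mathbf{y}}$ minimizes $g(\bar{\mathbf{x}}, \cdot)$ on $\mathcal{A}_2$.

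Finally, since $\mathcal{A}_1$ is closed and convex and $g$ is continuously differentiable, the first-order optimality condition at $\bar{\mathbf{x}}$ reads
\begin{equation*}
\nabla_{\mathbf{x}} g(\bar{\mathbf{x}},\bar{\mathbf{y}})^T (\mathbf{x} - \bar{\mathbf{x}}) \geq 0, \quad \forall \mathbf{x}\in\mathcal{A}_1,
\end{equation*}
and analogously for the $\mathbf{y}$-block. Summing these two variational inequalities and using the Cartesian product structure $\mathcal{A} = \mathcal{A}_1\times\mathcal{A}_2$ yields
\begin{equation*}
\nabla g(\bar{\mathbf{z}})^T(\mathbf{z} - \bar{\mathbf{z}}) \geq 0, \quad \forall \mathbf{z}\in\mathcal{A},
\end{equation*}
which is precisely the definition of a stationary point of $g$ on $\mathcal{A}$.

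The main obstacle---and the reason uniqueness of the block minimizers is assumed---is the step that identifies $\bar{\mathbf{x}}$ with $\mathbf{x}^\star(\bar{\mathbf{y}})$. Without uniqueness, one only obtains that $\bar{\mathbf{x}}$ attains the same objective value as $\mathbf{x}^\star(\bar{\mathbf{y}})$, which is insufficient to conclude the variational inequality; the continuity of the argmin map under the fixed block is precisely what breaks in its absence, and is what forces the strong/strict convexity hypothesis flagged in the Note following \eqref{eq32}.
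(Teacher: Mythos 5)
Your proposal is correct in substance and follows essentially the same route as the paper: the paper's own ``proof'' is simply a citation of \cite[Prop.~2.7.1]{Bertsekas1999} (and of \cite{Grippo2000}) for the two-block Gauss--Seidel method, and what you have written is a reconstruction of that standard argument --- monotone descent to a common limit value, identification of each block of the limit point with the unique block minimizer, and then the two summed variational inequalities over the Cartesian product $\mathcal{A}_1\times\mathcal{A}_2$. The one place where your write-up is less careful than the cited proof is the extraction of a convergent sub-subsequence of $\mathbf{x}^\star(\mathbf{y}_{k_j})$: since $\mathcal{A}_1$ is only assumed closed and convex, the sequence $\{\mathbf{x}^\star(\mathbf{y}_{k_j})\}$ need not be bounded, so the existence of the limit $\tilde{\mathbf{x}}$ is not free. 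Bertsekas sidesteps this by arguing by contradiction on the normalized directions $\mathbf{s}^{k_j}=(\mathbf{x}^\star(\mathbf{y}_{k_j})-\mathbf{x}_{k_j})/\gamma^{k_j}$ with $\gamma^{k_j}=\|\mathbf{x}^\star(\mathbf{y}_{k_j})-\mathbf{x}_{k_j}\|$, which live on the compact unit sphere, and by evaluating $g$ only at the bounded perturbations $\mathbf{x}_{k_j}+\epsilon\bar{\gamma}\mathbf{s}^{k_j}$; uniqueness of the block minimizer then delivers the contradiction. In the radar application this is harmless (the energy/Capon constraints and the strong-convexity requirement keep the iterates bounded), but to prove the lemma in the generality stated you should either add a compactness or coercivity hypothesis or adopt the normalized-direction device. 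Your closing remark on why uniqueness is indispensable is consistent with the paper's discussion of Powell's cycling counterexample following the lemma.
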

\begin{proof}
The proof in \cite[Prop.~2.7.1]{Bertsekas1999} follows immediately to the alternating minimization assuming two blocks. Also see \cite{Grippo2000}, where the convergence of the two block BCD was analyzed. 
\end{proof}
The above Lem.~\ref{altminlemma} discusses convergence of the constrained alternating minimization.This lemma can be applied by decomposing our problem into its real equivalent along-with real and imaginary decomposition of $\mathbf{w},\mathbf{s}$, and assuming the our constraint set $\mathcal{A}=\mathcal{A}_1\times\mathcal{A}_2$ is closed convex and the minimizers are unique. 
The necessary condition of a unique minimizer \cite{Zangwill1967} at each step is not obvious, but \cite{Powell1973} showed that in the absence of this assumption the algorithm cycles endlessly around a particular objective value \cite{Bertsekas1999}. Further the algorithm provides limit points which are not stationary points \cite{Grippo2000}. 
To discuss the characteristics of the limits points at convergence, consider the remark, presented next.

\begin{remk} \label{limitpointremk} ({\it Characterizing the solutions at convergence})
If $(\mathbf{w}_\star,\mathbf{s}_\star)$ are the limit points of the sequence $\{(\mathbf{w}_k,\mathbf{s}_k)\}$. Then, $(\mathbf{w}_\star,\mathbf{s}_\star)$ is a local minima, i.e. by definition $g(\mathbf{w}_\star,\mathbf{s}_\star)\leq g(\mathbf{w},\mathbf{s})$,$\exists \epsilon>0$ with $(\mathbf{w},\mathbf{s}):\,\sqrt{|| \mathbf{w}-\mathbf{w}_\star||^2+|| \mathbf{s}-\mathbf{s}_\star||^2}\leq \epsilon$. Further, $(\mathbf{w}_\star,\mathbf{s}_\star):g(\mathbf{w}_\star,\mathbf{s}_\star)\leq g(\mathbf{w}_\star,\mathbf{s}),\,\forall \mathbf{s}\in\mathcal{A}_2\mbox{ and } g(\mathbf{w}_\star,\mathbf{s}_\star)\leq g(\mathbf{w},\mathbf{s}_\star),\,\forall \mathbf{w}\in\mathcal{A}_{1}$.
\end{remk}

The first statement in Rem.~\ref{limitpointremk} directly results from  from the stationarity condition as  given  in Lem.~\ref{altminlemma} and also since the objective is non-convex. The second statement in Rem.~\ref{limitpointremk} arises from the individual convexity in $\mathbf{w}$ and $\mathbf{s}$ as shown in Prop.~\ref{propos1}, Prop.~\ref{propos2}. We note readily from Rem.~\ref{remarkstapobj}, that unfortunately there is nothing special or strong about $(\mathbf{w}_\star,\mathbf{s}_\star)$ except the fact that they are local minima. It is well known that global extrema (minima or maxima) are attained only when the objective is either convex or concave. For a problem similar to ours and where the alternating minimization was applied, see \cite[pg.3537]{vaidyanathan2009} the authors state that their algorithm produces limit points which are stronger than local maxima, in our opinion this conclusion is suspect. They further claim that their algorithm produces global extrema in their filter design and waveform dimensions individually, which leads us to believe that their objective is concave, although this was never proved in \cite{vaidyanathan2009}. In our opinion, Rem.~\ref{remarkstapobj} is also relevant to their objective by replacing minima by maxima, and hence  we do not believe that the limit points produced by their algorithm are stronger than local extrema. 

To derive the upper and lower bounds on $g(\mathbf{w}_k,\mathbf{s}_k)-g(\mathbf{w}_{k+1},\mathbf{s}_k)$, the following well known lemmas are useful.
\begin{lem}\label{lemma2}
For any Hermitian matrix, $\mathbf{A}\in\mathbbm{C}^{N\times N}$ and any arbitrary vector $\mathbf{x}\in\mathbbm{C}^{N\times N}$ , we always have $\lambda_{\min}(\mathbf{A})||\mathbf{x}||^2\leq \mathbf{x}^H\mathbf{Ax}\leq\lambda_{\max}(\mathbf{A})||\mathbf{x}||^2$, where $\lambda_{\min}(\mathbf{A})$ and $\lambda_{\max}(\mathbf{A})$ are the min. and max. eigenvalues of matrix $\mathbf{A}$, respectively.
\end{lem}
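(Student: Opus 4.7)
The plan is to invoke the spectral theorem for Hermitian matrices and reduce the Hermitian quadratic form to a diagonal one, where the bounds become transparent. Specifically, since $\mathbf{A}$ is Hermitian, there exists a unitary $\mathbf{U}\in\mathbbm{C}^{N\times N}$ and a real diagonal $\boldsymbol{\Lambda}=\mbox{Diag}\{\lambda_1,\lambda_2,\ldots,\lambda_N\}$ (with entries being the eigenvalues of $\mathbf{A}$) such that $\mathbf{A}=\mathbf{U}\boldsymbol{\Lambda}\mathbf{U}^H$. The existence of this decomposition is the only non-trivial analytical input and is standard.

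Next, I would make the unitary change of variables $\mathbf{y}=\mathbf{U}^H\mathbf{x}\in\mathbbm{C}^{N}$. Since $\mathbf{U}$ is unitary, the $l_2$-norm is preserved: $\|\mathbf{y}\|^2=\mathbf{x}^H\mathbf{U}\mathbf{U}^H\mathbf{x}=\|\mathbf{x}\|^2$. Substituting into the quadratic form yields
\begin{equation*}
\mathbf{x}^H\mathbf{A}\mathbf{x}=\mathbf{y}^H\boldsymbol{\Lambda}\mathbf{y}=\sum_{n=1}^N \lambda_n |y_n|^2.
\end{equation*}

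The bounds then follow immediately by replacing each $\lambda_n$ with $\lambda_{\min}(\mathbf{A})$ or $\lambda_{\max}(\mathbf{A})$ and factoring out:
\begin{equation*}
\lambda_{\min}(\mathbf{A})\sum_{n=1}^N |y_n|^2 \;\leq\; \sum_{n=1}^N \lambda_n |y_n|^2 \;\leq\; \lambda_{\max}(\mathbf{A})\sum_{n=1}^N |y_n|^2,
\end{equation*}
and using $\sum_n |y_n|^2 = \|\mathbf{y}\|^2 = \|\mathbf{x}\|^2$ recovers exactly the claimed inequality. There is no real obstacle here; the only point of care is to note that the eigenvalues are real (guaranteed by Hermiticity of $\mathbf{A}$), so that the scalar bounding of the nonnegative weights $|y_n|^2$ is legitimate. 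The equality cases arise precisely when $\mathbf{x}$ is (proportional to) an eigenvector corresponding to $\lambda_{\min}(\mathbf{A})$ or $\lambda_{\max}(\mathbf{A})$, but this is not needed for the statement.
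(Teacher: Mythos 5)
Your proof is correct and is precisely the standard spectral-theorem argument underlying the Rayleigh--Ritz theorem; the paper does not write this argument out but simply cites Horn and Johnson, where the same diagonalization-and-bounding proof appears. Nothing is missing, and your remark that the real eigenvalues and nonnegative weights $|y_n|^2$ justify the scalar bounding is exactly the right point of care.
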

\begin{proof}
The proof can be seen in \cite{horn1994}, and is in fact fundamental to the Rayleigh-Ritz theorem.
\end{proof}
\begin{lem}\label{lemma3}
For any two Hermitian matrices, $\mathbf{A,B}$, both in $\mathbbm{C}^{N\times N}$,
\begin{align*}
\sum\limits_{i=1}^N \lambda_i(\mathbf{A}) \lambda_{N-i+1}(\mathbf{B})\leq \mathrm{Tr}\{ \mathbf{AB}\}\leq\sum\limits_{i=1}^N \lambda_i(\mathbf{A})\lambda_i(\mathbf{B})
\end{align*}
where $\lambda_i(\cdot)\geq \lambda_{i+1}(\cdot)$, $i=1,2,\ldots,N$.
\end{lem}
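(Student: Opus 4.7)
The plan is to reduce the trace to a bilinear form in the eigenvalues weighted by a doubly stochastic matrix, and then invoke Birkhoff's theorem together with the rearrangement inequality. This is the classical von Neumann / Mirsky argument specialized to the Hermitian case.

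First I would spectrally decompose both matrices as $\mathbf{A}=\mathbf{U}_A\boldsymbol{\Lambda}_A\mathbf{U}_A^H$ and $\mathbf{B}=\mathbf{U}_B\boldsymbol{\Lambda}_B\mathbf{U}_B^H$, where $\boldsymbol{\Lambda}_A=\mathrm{diag}(\lambda_1(\mathbf{A}),\ldots,\lambda_N(\mathbf{A}))$ and $\boldsymbol{\Lambda}_B=\mathrm{diag}(\lambda_1(\mathbf{B}),\ldots,\lambda_N(\mathbf{B}))$ are arranged in descending order consistent with the statement's convention. Define the unitary matrix $\mathbf{Q}:=\mathbf{U}_A^H\mathbf{U}_B$ with entries $q_{ij}$. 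Using cyclic invariance of the trace and expanding componentwise gives
$$\mathrm{Tr}\{\mathbf{AB}\}=\mathrm{Tr}\{\boldsymbol{\Lambda}_A\mathbf{Q}\boldsymbol{\Lambda}_B\mathbf{Q}^H\}=\sum_{i=1}^N\sum_{j=1}^N \lambda_i(\mathbf{A})\,\lambda_j(\mathbf{B})\, s_{ij},\quad s_{ij}:=|q_{ij}|^2.$$

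The key observation is that $\mathbf{S}=(s_{ij})$ is doubly stochastic: $s_{ij}\geq 0$, and since $\mathbf{Q}$ is unitary, $\sum_j s_{ij}=\sum_j|q_{ij}|^2=1$ and similarly $\sum_i s_{ij}=1$. By the Birkhoff--von Neumann theorem, $\mathbf{S}$ lies in the convex hull of the $N\times N$ permutation matrices, so the bilinear expression is a convex combination
$$\mathrm{Tr}\{\mathbf{AB}\}=\sum_{\pi\in\mathcal{S}_N} c_\pi \sum_{i=1}^N \lambda_i(\mathbf{A})\,\lambda_{\pi(i)}(\mathbf{B}),\quad c_\pi\geq 0,\;\sum_\pi c_\pi=1.$$
Consequently $\mathrm{Tr}\{\mathbf{AB}\}$ is bounded above and below by the maximum and minimum, respectively, of $\sum_i \lambda_i(\mathbf{A})\lambda_{\pi(i)}(\mathbf{B})$ over all permutations $\pi$.

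The final step is the rearrangement inequality of Hardy--Littlewood--P\'olya: given two real sequences, their scalar product is maximized when both are sorted in the same (descending) order and minimized when sorted in opposite orders. Applied to $\{\lambda_i(\mathbf{A})\}$ and $\{\lambda_i(\mathbf{B})\}$, the identity permutation yields the upper bound $\sum_i\lambda_i(\mathbf{A})\lambda_i(\mathbf{B})$ and the reverse permutation yields the lower bound $\sum_i\lambda_i(\mathbf{A})\lambda_{N-i+1}(\mathbf{B})$, completing the chain.

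The only delicate point is that, unlike the analogous inequality for singular values where positivity is automatic, here the eigenvalues of Hermitian matrices may be negative. The argument above, however, never requires positivity: the Birkhoff reduction is algebraic and the rearrangement inequality holds for arbitrary real sequences. Thus I expect the main write-up obstacle to be simply citing Birkhoff and Hardy--Littlewood--P\'olya cleanly, and noting that the doubly stochastic decomposition of $(|q_{ij}|^2)$ is unchanged by sign patterns of the eigenvalues.
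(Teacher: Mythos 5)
Your proof is correct and complete. Note, however, that the paper does not actually prove this lemma at all: it simply cites \cite[Lemma~II.1]{Lasserre1995}, so there is no in-paper argument to compare against. Your route is the classical von Neumann--Birkhoff one: reduce $\mathrm{Tr}\{\mathbf{AB}\}$ to the bilinear form $\sum_{i,j}\lambda_i(\mathbf{A})\lambda_j(\mathbf{B})|q_{ij}|^2$ with $\mathbf{Q}=\mathbf{U}_A^H\mathbf{U}_B$ unitary, observe that $(|q_{ij}|^2)$ is doubly stochastic, apply Birkhoff--von Neumann to express the trace as a convex combination of $\sum_i\lambda_i(\mathbf{A})\lambda_{\pi(i)}(\mathbf{B})$ over permutations $\pi$, and finish with the rearrangement inequality. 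Every step checks out, and your closing remark is the right one to make: the rearrangement inequality requires no sign assumptions on the two real sequences, so possible negativity of the eigenvalues of Hermitian matrices causes no difficulty (this is precisely where the Hermitian version differs from the singular-value version). Lasserre's own proof proceeds instead by optimizing $\mathrm{Tr}\{\mathbf{A}\mathbf{U}\mathbf{B}\mathbf{U}^H\}$ over the unitary group and characterizing the extremizers; your argument buys a shorter, more self-contained derivation that rests only on two textbook facts (Birkhoff's theorem and Hardy--Littlewood--P\'olya), which would be a reasonable thing to include if the authors wanted the paper to be self-contained rather than relying on the external citation.
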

\begin{proof}
See \cite[Lemma. II. I]{Lasserre1995} for a proof.
\end{proof}

Consider $g(\mathbf{w}_k,\mathbf{s}_k)$, we have
\begin{align}\label{eq34}
g(\mathbf{w}_k,\mathbf{s}_k)&=\mathbf{w}_k^H\mathbf{R_u}(\mathbf{s}_k)\mathbf{w}_k \nonumber \\
&=(\mathbf{w}_k-\mathbf{w}_{k+1} +\mathbf{w}_{k+1}  )^H \mathbf{R_u}(\mathbf{s}_k)(\mathbf{w}_k-\mathbf{w}_{k+1}+\mathbf{w}_{k+1} ) \nonumber\\
&=(\mathbf{w}_k-\mathbf{w}_{k+1} )^H \mathbf{R_u}(\mathbf{s}_k)(\mathbf{w}_k-\mathbf{w}_{k+1}) \\
&+ \mathbf{w}_{k+1} ^H \mathbf{R_u}(\mathbf{s}_k)\mathbf{w}_{k+1}+\textrm{Re}\{  (\mathbf{w}_k-\mathbf{w}_{k+1} )^H \mathbf{R_u}( \mathbf{s}_k) \mathbf{w}_{k+1} \} \nonumber
\end{align}
Moreover since the square root decomposition exists i.e., $\mathbf{R_u}( \cdot)=\mathbf{R}_{\bf u}^{1/2}( \cdot) \mathbf{R}_{\bf u}^{1/2}(\cdot)$, then application of the Cauchy-Schwartz inequality produces,
\begin{align} \label{eq35}
&\textrm{Re}\{  (\mathbf{w}_k-\mathbf{w}_{k+1} )^H \mathbf{R_u}( \mathbf{s}_k) \mathbf{w}_{k+1} \} \leq \\
&\sqrt{(\mathbf{w}_k-\mathbf{w}_{k+1} )^H \mathbf{R_u}(\mathbf{s}_k)(\mathbf{w}_k-\mathbf{w}_{k+1}) } \sqrt{\mathbf{w}_{k+1} ^H \mathbf{R_u}(\mathbf{s}_k)\mathbf{w}_{k+1}} \nonumber
\end{align}
Using \eqref{eq35} in \eqref{eq34} and since $\mathbf{R_u}(\cdot)$ is PSD, we can show that 
$
g(\mathbf{w}_k,\mathbf{s}_k)-g(\mathbf{w}_{k+1},\mathbf{s}_k)\leq (\mathbf{w}_k-\mathbf{w}_{k+1} )^H \mathbf{R_u}(\mathbf{s}_k)(\mathbf{w}_k-\mathbf{w}_{k+1})
$. Further using \eqref{eq33}, we have the following upper and lower bounds
\begin{align} \label{eq36}
0&\leq g(\mathbf{w}_k,\mathbf{s}_k)-g(\mathbf{w}_{k+1},\mathbf{s}_k) \nonumber \\
&\leq(\mathbf{w}_k-\mathbf{w}_{k+1} )^H \mathbf{R_u}(\mathbf{s}_k)(\mathbf{w}_k-\mathbf{w}_{k+1})
\end{align}
We notice immediately, that at convergence $(\mathbf{w}_k-\mathbf{w}_{k+1} )^H \mathbf{R_u}(\mathbf{s}_k)(\mathbf{w}_k-\mathbf{w}_{k+1})\rightarrow 0$ since $\mathbf{w}_k\rightarrow\mathbf{w}_{k+1}$. Other bounds as in \eqref{eq36} can be readily derived. From Lem.~\ref{lemma2}, we can show that
\begin{align}\label{eq37}
&\leq \lambda_{\min}(\mathbf{R_u}(\mathbf{s}_k) )|| \mathbf{w}_k||^2-\lambda_{\max}(\mathbf{R_u}(\mathbf{s}_k) )|| \mathbf{w}_{k+1}||^2\nonumber \\
&g(\mathbf{w}_k,\mathbf{s}_k)-g(\mathbf{w}_{k+1},\mathbf{s}_k) \\
&\leq \lambda_{\max}(\mathbf{R_u}(\mathbf{s}_k) )|| \mathbf{w}_k||^2-\lambda_{\min}(\mathbf{R_u}(\mathbf{s}_k) )|| \mathbf{w}_{k+1}||^2 \nonumber.
\end{align}

Consider the following.
\begin{lem} \label{lemma4}
If $\mathbf{x}$, $\mathbf{y}$ are arbitrary but distinct complex vectors of size $N$ and let $\mathbf{A}:=\mathbf{xx}^H-\mathbf{yy}^H$, then,
(a)  matrix $\mathbf{A}$ has exactly two real non-zero eigenvalues, the rest $N-2$ eigenvalues are all zeros,
(b) of the two real and non-zero eigenvalues one is always positive and the other is always negative, and 
(c) if the $\mathbf{x}$, $\mathbf{y}$ are not distinct, i.e. $\mathbf{y}=\beta \mathbf{x}$, $\beta\in\mathbbm{C}$, then there exists only one non-zero eigenvalue, $(|1-|\beta|^2|) ||\mathbf{x}||^2$and  the rest $N-1$ eigenvalues are purely zeroes.
\end{lem}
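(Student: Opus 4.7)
The strategy is to exploit the fact that $\mathbf{A}$ is Hermitian (hence has real eigenvalues) and of rank at most two, then pin down the two non-trivial eigenvalues using the trace identities $\mathrm{tr}(\mathbf{A})=\sum_i \lambda_i$ and $\mathrm{tr}(\mathbf{A}^2)=\sum_i \lambda_i^2$, combined with the Cauchy--Schwarz inequality. Part (c) will then be a direct substitution.

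First I would observe that $\mathbf{A}=\mathbf{A}^H$, so all eigenvalues are real, and that $\mathrm{range}(\mathbf{A})\subseteq \mathrm{span}\{\mathbf{x},\mathbf{y}\}$. Under the standing assumption that $\mathbf{x},\mathbf{y}$ are ``distinct'' in the sense of linear independence (as clarified retroactively by the scalar-multiple hypothesis in part (c)), this span has dimension two, so $\mathrm{rank}(\mathbf{A})\leq 2$. Consequently $\mathbf{A}$ has at most two non-zero eigenvalues $\lambda_1,\lambda_2$, with the remaining $N-2$ eigenvalues equal to zero. This already delivers the bulk of (a), pending the confirmation that the rank is exactly two (rather than one or zero), which will drop out of the trace computation.

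Next, I compute $\mathrm{tr}(\mathbf{A})=\|\mathbf{x}\|^2-\|\mathbf{y}\|^2=\lambda_1+\lambda_2$, and expand
\begin{equation*}
\mathbf{A}^2=\|\mathbf{x}\|^2\mathbf{x}\mathbf{x}^H-(\mathbf{x}^H\mathbf{y})\mathbf{x}\mathbf{y}^H-(\mathbf{y}^H\mathbf{x})\mathbf{y}\mathbf{x}^H+\|\mathbf{y}\|^2\mathbf{y}\mathbf{y}^H,
\end{equation*}
whose trace is $\mathrm{tr}(\mathbf{A}^2)=\|\mathbf{x}\|^4+\|\mathbf{y}\|^4-2|\mathbf{x}^H\mathbf{y}|^2=\lambda_1^2+\lambda_2^2$. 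Combining these two identities yields
\begin{equation*}
\lambda_1\lambda_2=\tfrac{1}{2}\bigl[(\lambda_1+\lambda_2)^2-(\lambda_1^2+\lambda_2^2)\bigr]=|\mathbf{x}^H\mathbf{y}|^2-\|\mathbf{x}\|^2\|\mathbf{y}\|^2,
\end{equation*}
which, by the Cauchy--Schwarz inequality, is $\leq 0$, with strict inequality precisely when $\mathbf{x},\mathbf{y}$ are linearly independent. Strict negativity of the product rules out either eigenvalue being zero (completing (a) by forcing the rank to be exactly two) and simultaneously forces one eigenvalue to be positive and the other to be negative (establishing (b)). Part (c) is then immediate: substituting $\mathbf{y}=\beta\mathbf{x}$ gives $\mathbf{A}=(1-|\beta|^2)\mathbf{x}\mathbf{x}^H$, a scaled rank-one Hermitian matrix whose only non-zero eigenvalue is $(1-|\beta|^2)\|\mathbf{x}\|^2$, of magnitude $|1-|\beta|^2|\,\|\mathbf{x}\|^2$ as stated; the remaining $N-1$ eigenvalues vanish.

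The main (mild) obstacle is interpretive rather than computational: one must be careful that ``distinct'' means linear independence (for example $\mathbf{y}=-\mathbf{x}$ is ``distinct'' in the naive sense yet yields $\mathbf{A}=\mathbf{0}$, inconsistent with (a)(b)), and that Cauchy--Schwarz is invoked in its strict form on this account. The trace/Cauchy--Schwarz route is the most economical; an alternative that explicitly diagonalises $\mathbf{A}$ on $\mathrm{span}\{\mathbf{x},\mathbf{y}\}$ via an orthonormal basis after Gram--Schmidt and a $2\times 2$ matrix representation would also work but adds no substance.
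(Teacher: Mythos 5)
Your proof is correct and follows essentially the same route as the paper's: Hermitian symmetry plus rank at most two, the trace identities $\mathrm{tr}(\mathbf{A})=\lambda_1+\lambda_2$ and $\mathrm{tr}(\mathbf{A}^2)=\lambda_1^2+\lambda_2^2$, and Cauchy--Schwarz. The only difference is that you extract the sign information from the product $\lambda_1\lambda_2=|\mathbf{x}^H\mathbf{y}|^2-\|\mathbf{x}\|^2\|\mathbf{y}\|^2<0$ directly, whereas the paper solves the resulting quadratic for explicit eigenvalue formulas and does a case analysis on the sign of $\|\mathbf{x}\|^2-\|\mathbf{y}\|^2$; your shortcut is a touch more economical, and your observation that ``distinct'' must mean linearly independent (else, e.g., $\mathbf{y}=-\mathbf{x}$ gives $\mathbf{A}=\mathbf{0}$) is a fair clarification that the paper only handles implicitly via the condition $\mathbf{y}\neq\beta\mathbf{x}$.
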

\begin{proof}
First of all we notice $\mathbf{A}$ is Hermitian and hence its eigenvalues are real. The proof for (a) is obvious given the fact that $\mathbf{A}$ is a sum of two distinct outer products. In other words, $\mathrm{rank}(\mathbf{A})=2$, for all $\mathbf{y}\neq\beta\mathbf{x}$ .

Now we know that
\begin{align*}
\mathrm{Tr}\{ \mathbf{A}\}&=\lambda_1+\lambda_2=\mathbf{x}^H\mathbf{x}-\mathbf{y}^H\mathbf{y} \\
\mathrm{Tr}\{\mathbf{AA}^H\}&=\lambda_1^2+\lambda_2^2=||\mathbf{x}||^4+||\mathbf{y}||^4-2|\mathbf{x}^H\mathbf{y}|^2
\end{align*}
where $\lambda_i,i=1,2$ are the two non zero eigenvalues of $\mathbf{A}$. The above  set of equations can be reduced to  a quadratic in any one eigenvalue. It can be shown that the only two possible solutions are then 
\begin{equation} \label{eq38}
 \begin{aligned} \lambda_1&=\frac{||\mathbf{x}||^2-||\mathbf{y}||^2}{2}
\left(  1+ \sqrt{ 1-4\frac{|\mathbf{x}^H\mathbf{y}|^2 -||\mathbf{x}||^2||\mathbf{y}||^2}{(||\mathbf{x}||^2-||\mathbf{y}||^2)^2} } \right) \\
\lambda_2&=\frac{||\mathbf{x}||^2-||\mathbf{y}||^2}{2}
\left(  1- \sqrt{ 1-4\frac{|\mathbf{x}^H\mathbf{y}|^2 -||\mathbf{x}||^2||\mathbf{y}||^2}{(||\mathbf{x}||^2-||\mathbf{y}||^2)^2} } \right) \end{aligned}
\end{equation}
Since $\lambda_i,i=1,2$ are purely real we have, $1-4\frac{|\mathbf{x}^H\mathbf{y}|^2 -||\mathbf{x}||^2||\mathbf{y}||^2}{(||\mathbf{x}||^2-||\mathbf{y}||^2)^2} \geq0$ and from Cauchy Schwarz inequality, we also have that $|\mathbf{x}^H\mathbf{y}|^2 -||\mathbf{x}||^2||\mathbf{y}||^2\leq0 $. Using these two facts, consider two specific cases, both of which are shown easily from elementary algebra, 
\begin{equation} \label{eq39}
\begin{cases}
\lambda_1>0,\lambda_2 <0, & \mbox{ if } ||\mathbf{x}||^2-||\mathbf{y}||^2 \geq0 \\
\lambda_1<0,\lambda_2>0, & \mbox{ if } ||\mathbf{x}||^2-||\mathbf{y}||^2 <0
\end{cases}.
\end{equation}
When $||\mathbf{x}||^2-||\mathbf{y}||^2=0$, it is easily seen that $\lambda_1=\sqrt{||\mathbf{x}||^2||\mathbf{y}||^2-|\mathbf{x}^H\mathbf{y}|^2 } >0$, $\lambda_2=-\lambda_1<0$. We also note immediately from \eqref{eq38} that when, $\mathbf{y}=\beta \mathbf{x}$, $\lambda_1 =(1-|\beta|^2)||\mathbf{x}||^2$, $\lambda_2=0$. This completes the proof.
\end{proof}

It is readily shown that $g(\mathbf{w}_k,\mathbf{s}_k)-g(\mathbf{w}_{k+1},\mathbf{s}_k)=\mathrm{Tr}\{\mathbf{R_u}(\mathbf{s}_k)  ( \mathbf{w}_k\mathbf{w}_k^H-\mathbf{w}_{k+1} \mathbf{w}_{k+1}^H) \}$. Therefore, from Lem.~\ref{lemma3}, and Lem.~\ref{lemma4}, we have,
\begin{align} \label{eq40}
&\leq \lambda_{\max}\big( \mathbf{R_u}( \mathbf{s}_k) \big)\lambda_{-}(\mathbf{w}_k\mathbf{w}_k^H-\mathbf{w}_{k+1} \mathbf{w}_{k+1}^H) \nonumber\\
&+\lambda_{\min}\big( \mathbf{R_u}( \mathbf{s}_k) \big)\lambda_{+}(\mathbf{w}_k\mathbf{w}_k^H-\mathbf{w}_{k+1} \mathbf{w}_{k+1}^H) \nonumber \nonumber \\
&g(\mathbf{w}_k,\mathbf{s}_k)-g(\mathbf{w}_{k+1},\mathbf{s}_k) \\
 &\leq \lambda_{\max}\big( \mathbf{R_u}( \mathbf{s}_k) \big)\lambda_{+}(\mathbf{w}_k\mathbf{w}_k^H-\mathbf{w}_{k+1} \mathbf{w}_{k+1}^H) \nonumber\\
&+\lambda_{\min}\big( \mathbf{R_u}( \mathbf{s}_k) \big)\lambda_{-}(\mathbf{w}_k\mathbf{w}_k^H-\mathbf{w}_{k+1} \mathbf{w}_{k+1}^H) \nonumber
\end{align}
It is not  immediately evident from the analysis which set of bounds in \eqref{eq36}, \eqref{eq37}, \eqref{eq40} are tight, hence combining them we have
\begin{align*}
&\max\left\{ \begin{aligned} g_{lb}^1( \mathbf{R_u}( \mathbf{s}_k), \mathbf{w}_k, \mathbf{w}_{k+1}), \;\; &g_{lb}^2( \mathbf{R_u}( \mathbf{s}_k), \mathbf{w}_k, \mathbf{w}_{k+1}), \\
&g_{lb}^3( \mathbf{R_u}( \mathbf{s}_k), \mathbf{w}_k, \mathbf{w}_{k+1}) \end{aligned}\right\} \\
&\leq g(\mathbf{w}_k,\mathbf{s}_k)-g(\mathbf{w}_{k+1},\mathbf{s}_k)  \leq \\
&\min\left\{ \begin{aligned} g_{ub}^1( \mathbf{R_u}( \mathbf{s}_k), \mathbf{w}_k, \mathbf{w}_{k+1}), \;\; &g_{ub}^2( \mathbf{R_u}( \mathbf{s}_k), \mathbf{w}_k, \mathbf{w}_{k+1}), \\
&g_{ub}^3( \mathbf{R_u}( \mathbf{s}_k), \mathbf{w}_k, \mathbf{w}_{k+1}) \end{aligned}\right\} \\
\end{align*}
 where $g_{lb}^i (\mathbf{R_u}( \mathbf{s}_k), \mathbf{w}_k, \mathbf{w}_{k+1})$, $g_{ub}^i (\mathbf{R_u}( \mathbf{s}_k), \mathbf{w}_k, \mathbf{w}_{k+1})$, $i=1,2,3$ are the lower and upper bounds as given in \eqref{eq36}-\eqref{eq37}, \eqref{eq40}, for $i=1,2,3$, respectively.
 
 Similar upper and lower bounds can be readily derived for the other corresponding terms,  $g(\mathbf{w}_{k+1},\mathbf{s}_k)-g(\mathbf{w}_{k+1},\mathbf{s}_{k+1})$ using analysis presented thus far, and is not the focus now. Let us however denote these corresponding lower and upper bounds to be $h_{lb}^i (\mathbf{R_u}( \mathbf{s}_k), \mathbf{w}_k, \mathbf{w}_{k+1})$, $ h_{ub}^i (\mathbf{R_u}( \mathbf{s}_k), \mathbf{w}_k, \mathbf{w}_{k+1})$, $i=1,2,3$.

\subsection{Constrained proximal alternating minimization}
The proximal version of the constrained alternating minimization is iterative, and for the filter design step, optimizes at the $k$-th iteration,
\begin{align} \label{eq41}
\min_{\mathbf{w} }\;\;\;\;\; &\mathbf{w}^H\mathbf{R_u}( \mathbf{s}_{k-1})\mathbf{w}+\frac{\alpha_{k-1}}{2} || \mathbf{w}-\mathbf{w}_{k-1} ||^2\\
\mbox{ s. t } \;\;\;\;\;&\mathbf{w}^H(\mathbf{v}(f_d)\otimes\mathbf{s}_{k-1}\otimes\mathbf{a}(\theta_t,\phi_t))=\kappa \nonumber
\end{align}
where $\alpha_{k-1} \in \mathbbm{R}^{+}$ can be seen as a weight attached to the regularizer / penalizer  $|| \mathbf{w}-\mathbf{w}_{k-1} ||^2$. This parameter can be interpreted as follows, if it is small,  it encourages the optimizer to look for viable solutions in the vicinity of $\mathbf{w}_{k-1}$. However, if large, it penalizes the optimizer heavily for focusing even slightly in the immediate vicinity of  $\mathbf{w}_{k-1}$.

In a similar spirit, the proximal version of the constrained alternating minimization for the waveform design step at the  $k$-th iteration optimizes,
\begin{align} \label{eq42}
\min\limits_{\mathbf{s}} \;\;\;\;\; &\mathbf{w}^H_{k}\mathbf{R_u}(\mathbf{s})\mathbf{w}_{k} +\frac{\beta_{k-1}}{2} ||\mathbf{s}-\mathbf{s}_{k-1} ||^2\nonumber \\
\mbox{s. t. }\;\;\;\;\; & \mathbf{w}^H_{k}(\mathbf{v}(f_d)\otimes\mathbf{s}\otimes\mathbf{a}(\theta_t,\phi_t))=\kappa   \\ 
\;\;\;\;\;\; & \mathbf{s}^H \mathbf{s}\leq P_o \nonumber \nonumber
\end{align}
where $\beta_{k-1}\in\mathbbm{R}^{+}$ is the weight attached to the regularizer $ ||\mathbf{s}-\mathbf{s}_{k-1} ||^2$. Bounds on $\alpha_{k-1},\beta_{k-1}$ relating it to the Lipschitz constants are deferred to forthcoming analysis. A graphical example comparing the constrained alternating minimization and the proximal constrained alternating minimization is shown in Fig.~\ref{amcamfig}.

\begin{figure*}[tbp!]
\centering
\includegraphics[scale=0.5]{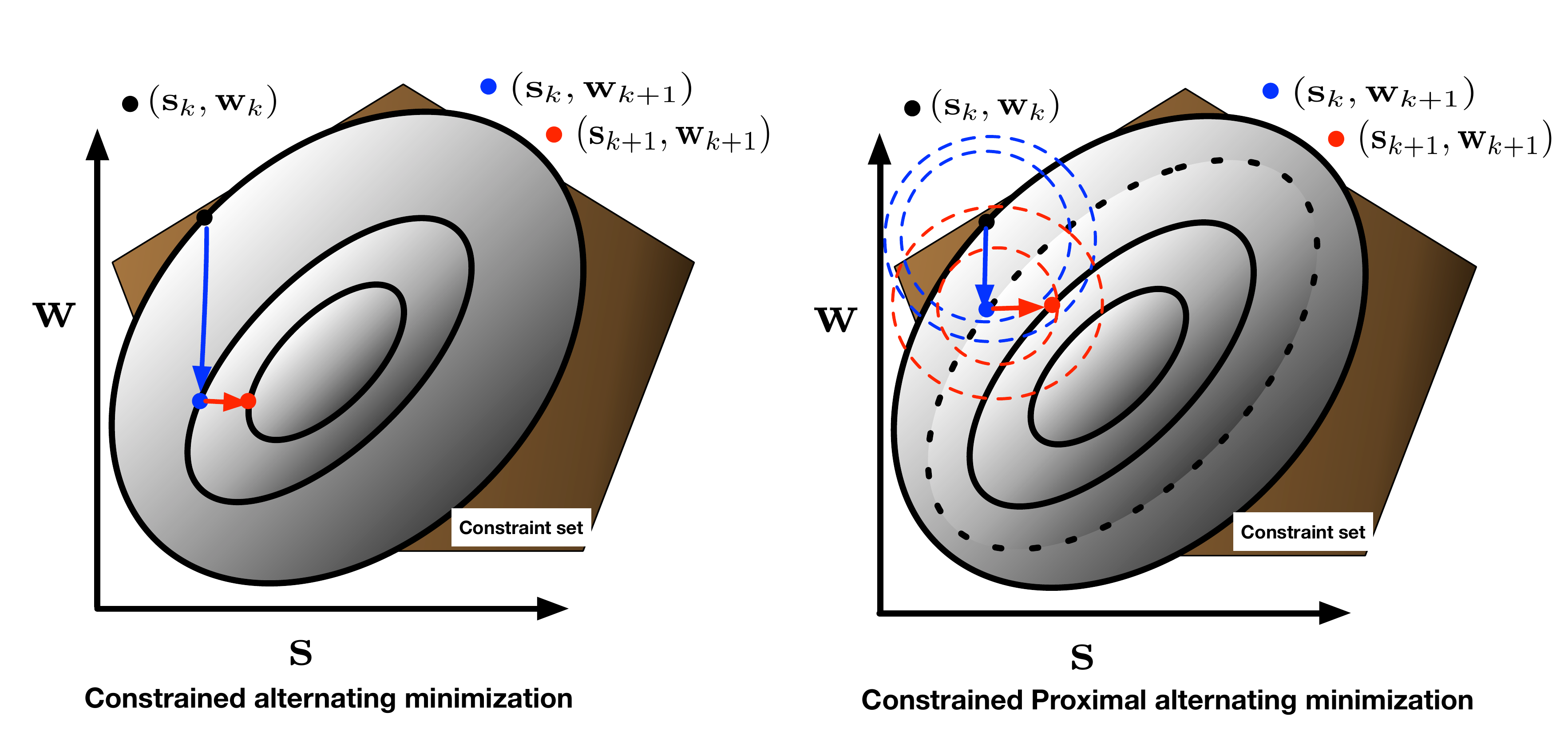}
\caption{Constrained alternating minimization (left) and proximal constrained alternating minimization (right). Iso level contours (each point on a curve has identical function values) and constraint set in background are shown. Outer iso-curves assume higher function values than the inner iso-curves. On right, and for particular $\alpha_{k}$, $\beta_{k}$, spheres (dashed, blue, dashed red) are the (two of the several) spheres of influence of the regularizer. Outer spheres penalize more than the inner.}
\label{amcamfig}
\end{figure*}

\begin{remk}\label{remark2}
The objective functions in \eqref{eq41}, \eqref{eq42} are still individually convex in $\mathbf{w}$, $\mathbf{s}$, respectively.
The regularizer terms $|| \mathbf{w}-\mathbf{w}_{k-1} ||^2$ and $ ||\mathbf{s}-\mathbf{s}_{k-1} ||^2$   are strongly convex, and  $\nabla^2_{\mathbf{w}} ( || \mathbf{w}-\mathbf{w}_{k-1} ||^2)=\mathbf{I}\succ \mathbf{0}$, $\nabla^2_{\mathbf{s}} ( || \mathbf{s}-\mathbf{s}_{k-1} ||^2)=\mathbf{I} \succ \mathbf{0}$, and therefore do not alter the individual convexity of $\mathbf{w}^H\mathbf{R_u}( \mathbf{s}_{k-1})\mathbf{w}$ and $\mathbf{w}^H_{k}\mathbf{R_u}(\mathbf{s})\mathbf{w}_{k}$, w.r.t. $\mathbf{w}$, $\mathbf{s}$, respectively.
\end{remk}
The solutions to \eqref{eq41}, \eqref{eq42} can be cast in terms of the proximal operator as
 \begin{align} 
 \mathbf{w}_{k}=&\mathrm{prox}_{(\alpha_{k-1},\mathbf{w} )} \big( g(\mathbf{w},\mathbf{s}_{k-1}) ;\mathbf{w}_{k-1}  \big) \label{eq43} \\
 &\mbox{ s. t } \; \mathbf{w}^H(\mathbf{v}(f_d)\otimes\mathbf{s}_{k-1}\otimes\mathbf{a}(\theta_t,\phi_t))=\kappa \nonumber \\ 
 \nonumber  \\
 \mathbf{s}_{k}=&\mathrm{prox}_{(\beta_{k-1},\mathbf{s} )} \big( g(\mathbf{w}_{k},\mathbf{s}) ;\mathbf{s}_{k-1}  \big) \label{eq44} \\
 &\mbox{s. t. } \; \mathbf{w}^H_{k}(\mathbf{v}(f_d)\otimes\mathbf{s}\otimes\mathbf{a}(\theta_t,\phi_t))=\kappa    \nonumber \\ 
& \;\;\;\;\;\;\;\;\mathbf{s}^H \mathbf{s}\leq P_o  \nonumber
 \end{align}
where, for a general $f(\mathbf{x}):\mathbbm{C}^N\rightarrow \mathbbm{R}$, the proximal operator is defined as
\begin{align} \label{eq45}
\mathrm{prox}_{(\alpha,\mathbf{x})} \big(f(\mathbf{x});\mathbf{y} \big):= \argmin \limits_{\mathbf{x} } \; \mathbf{f} (\mathbf{x}) +\frac{\alpha}{2} ||\mathbf{x}-\mathbf{y} ||^2.
\end{align}
The proximal operator has  a rich history in the literature, and well documented properties, see for example \cite{Parikh2013,Rockafeller1973,Rockafeller1976,Bertsekas1994}. A useful and interesting fact of this operator is that iff $\mathbf{x}_{o}$ minimizes $f(\mathbf{x}$) then $\mathbf{x}_{o}=\mathrm{prox}_{(\alpha,\mathbf{x})} (f(\mathbf{x});\mathbf{x}_{o})$, a proof is seen in \cite{Parikh2013}. 

{\bf Trust region interpretation}. The objective now is to relate the unconstrained proximal minimization as in \eqref{eq45} to  a well known technique in numerical optimization. A generalized trust region subproblem can be formulated for $\mathbf{f}(\mathbf{x}):\mathbbm{C}^N\rightarrow \mathbbm{R}$ \cite{More93}
\begin{align}\label{trustreg}
\min \limits_{\mathbf{x}} \;\;\; &f( \mathbf{x}) \nonumber \\
\mbox{s. t. } \;\;\; & || \mathbf{U}\mathbf{x}-\mathbf{v}||^2 \leq \delta
\end{align}
where $\mathbf{U}$, $\mathbf{v}$ are a general nonsingular matrix, and a vector, both characterizing the trust region. The positive scalar $\delta$ may be interpreted as  a parameter which specifies the extent of the trust region. For $\mathbf{U}=\mathbf{I}$ and $\mathbf{v}=\mathbf{y}$,  the proximal minimization as in  \eqref{eq45} and the trust region problem in \eqref{trustreg} are equivalent for specific values of $\alpha$ and $\delta$. In particular every solution of \eqref{eq45} is a solution to \eqref{trustreg} for  a particular $\delta$. In the same spirit, every solution to \eqref{trustreg} is an unconstrained minimizer to $f(\cdot)$ or a  solution to \eqref{eq45} for a particular $\alpha$, see also \cite{Rockafeller1976,Parikh2013}. 

The proximal optimizations problems, \eqref{eq41}, \eqref{eq42} can be cast as equivalent constrained trust region subproblems, where for the $k$-th iteration, the trust region is characterized by the previous iteration, $\mathbf{w}_{k-1}$, $\mathbf{s}_{k-1}$, respectively.

{\bf Closed form:} A closed form solution to \eqref{eq41} is readily derived, expressed as in \eqref{eq46} 
\begin{equation} \label{eq46}
\begin{aligned} \mathbf{w}_{k}&=\big( \mathbf{R}_{\bf u}(\mathbf{s}_{k-1})+\frac{\alpha_{k-1}}{2}\mathbf{I} \big)^{-1} \big( \frac{\alpha_{k-1}}{2}\mathbf{w}_{k-1}-\frac{\gamma_4^{\ast}}{2} \big(\mathbf{v}(f_d)\otimes\mathbf{s}_{k-1}\otimes\mathbf{a}(\theta_t,\phi_t) \big) \big) \\
\gamma_4&=\frac{\alpha_{k-1} \mathbf{w}_{k-1}^H \big( \mathbf{R}_{\bf u}(\mathbf{s}_{k-1})+\dfrac{\alpha_{k-1}}{2}\mathbf{I} \big)^{-1} \big(\mathbf{v}(f_d)\otimes\mathbf{s}_{k-1}\otimes\mathbf{a}(\theta_t,\phi_t) \big)-2\kappa }{\big(\mathbf{v}(f_d)\otimes\mathbf{s}_{k-1}\otimes\mathbf{a}(\theta_t,\phi_t) \big)^H
\big( \mathbf{R}_{\bf u}(\mathbf{s}_{k-1})+\dfrac{\alpha_{k-1}}{2}\mathbf{I} \big)^{-1}
 \big(\mathbf{v}(f_d)\otimes\mathbf{s}_{k-1}\otimes\mathbf{a}(\theta_t,\phi_t) \big)} \end{aligned}
\end{equation} 

where $\gamma_4$ is the Lagrange parameter associated with \eqref{eq41}. The solution to \eqref{eq42} is also in closed form and the procedure to obtain it is similar to that used in deriving \eqref{eq32}. Assuming that the Lagrange parameters for \eqref{eq42} are $\gamma_5=\gamma_{5r}+j\gamma_{5i},\,\gamma_6\in\mathbbm{R}^{+}$, the solution is expressed in \eqref{eq47},
\begin{equation} \label{eq47}
\mathbf{s}_{k}=\big( \sum\limits_{q=1}^Q\mathbf{Z_q}(\mathbf{w}_{k})+\frac{\beta_{k-1}}{2}\mathbf{I}+\gamma_6\mathbf{I}\big)^{-1} \big(\frac{\beta_{k-1}}{2}\mathbf{s}_{k-1} -\frac{\gamma_5}{2}\mathbf{G}^H\mathbf{w}_{k}\big)
\end{equation}
where,
\begin{equation*}
\begin{aligned}
\gamma_{5r}&=2\frac{\frac{\beta_{k-1}} {2}\mathrm{Re} \left\{\mathbf{w}_{k}^H\mathbf{G} \big( \sum\limits_{q=1}^Q\mathbf{Z_q}(\mathbf{w}_{k})+\frac{\beta_{k-1}}{2}\mathbf{I}+\gamma_6\mathbf{I}\big)^{-1} \mathbf{s}_{k-1}\right\}-\kappa} {\mathbf{w}_{k}^H\mathbf{G} \big( \sum\limits_{q=1}^Q\mathbf{Z_q}(\mathbf{w}_{k})+\frac{\beta_{k-1}}{2}\mathbf{I}+\gamma_6\mathbf{I}\big)^{-1} \mathbf{G}^H\mathbf{w}_{k} } \\
\gamma_{5i}&=\frac{\beta_{k-1} \mathrm{Im} \left\{\mathbf{w}_{k}^H\mathbf{G} \big( \sum\limits_{q=1}^Q\mathbf{Z_q}(\mathbf{w}_{k})+\frac{\beta_{k-1}}{2}\mathbf{I}+\gamma_6\mathbf{I}\big)^{-1} \mathbf{s}_{k-1}\right\} }{\mathbf{w}_{k}^H\mathbf{G} \big( \sum\limits_{q=1}^Q\mathbf{Z_q}(\mathbf{w}_{k})+\frac{\beta_{k-1}}{2}\mathbf{I}+\gamma_6\mathbf{I}\big)^{-1} \mathbf{G}^H\mathbf{w}_{k} }.
\end{aligned}
\end{equation*}
The Lagrange parameter $\gamma_6$ is obtained by solving, the following
\begin{align}\label{eq48}
\gamma_6 r(\gamma_6)=0,\;\gamma_6\geq0
\end{align}
obtained from the complementary slackness constraint on the Lagrange dual and where,
\begin{equation*}
\begin{aligned} r(\gamma_6)&=(P_o-\frac{\beta_{k-1}^2}{4} a_k) \bigg( \mathbf{w}_{k}^H\mathbf{G} \big( \sum\limits_{q=1}^Q\mathbf{Z_q}(\mathbf{w}_{k})+\frac{\beta_{k-1}}{2}\mathbf{I}+\gamma_6\mathbf{I}\big)^{-1} \mathbf{G}^H\mathbf{w}_{k} \bigg)^2 \\
-&2\big( b_i\frac{db_i}{d\gamma_6} +(b_r-\kappa) \frac{db_r}{d\gamma_6}\big) \mathbf{w}_{k}^H\mathbf{G} \big( \sum\limits_{q=1}^Q\mathbf{Z_q}(\mathbf{w}_{k})+\frac{\beta_{k-1}}{2}\mathbf{I}+\gamma_6\mathbf{I}\big)^{-1} \mathbf{G}^H\mathbf{w}_{k}  \\
-&(b_{i}^2+(b_r-\kappa)^2) \mathbf{w}_{k}^H\mathbf{G} \big( \sum\limits_{q=1}^Q\mathbf{Z_q}(\mathbf{w}_{k})+\frac{\beta_{k-1}}{2}\mathbf{I}+\gamma_6\mathbf{I}\big)^{-2} \mathbf{G}^H\mathbf{w}_{k}.
\end{aligned} 
\end{equation*}
 Where we also define
 \begin{align*}
 a_{k}=&\mathbf{s}_{k-1}^H\big( \sum\limits_{q=1}^Q\mathbf{Z_q}(\mathbf{w}_{k})+\frac{\beta_{k-1}}{2}\mathbf{I}+\gamma_6\mathbf{I}\big)^{-1} \mathbf{s}_{k-1} \\
 b_{r}&=\frac{\beta_{k-1}}{2}\mathrm{Re}\left\{ \mathbf{w}_{k}^H\mathbf{G} \big( \sum\limits_{q=1}^Q\mathbf{Z_q}(\mathbf{w}_{k})+\frac{\beta_{k-1}}{2}\mathbf{I}+\gamma_6\mathbf{I}\big)^{-1} \mathbf{s}_{k-1}\right\} \\
 b_{i}&=\frac{\beta_{k-1}}{2}\mathrm{Im}\left\{ \mathbf{w}_{k}^H\mathbf{G} \big( \sum\limits_{q=1}^Q\mathbf{Z_q}(\mathbf{w}_{k})+\frac{\beta_{k-1}}{2}\mathbf{I}+\gamma_6\mathbf{I}\big)^{-1} \mathbf{s}_{k-1}\right\}
 \end{align*}  
 
 Further, since the derivative, $\mathrm{Re}\{ \cdot\},\mathrm{Im}\{ \cdot\}$ are all linear we also have 
 \begin{align*}
 \dfrac{db_r}{d\gamma_6}&=-\frac{\beta_{k-1}}{2} \mathrm{Re}\left\{ \mathbf{w}_{k}^H\mathbf{G} \big( \sum\limits_{q=1}^Q\mathbf{Z_q}(\mathbf{w}_{k})+\frac{\beta_{k-1}}{2}\mathbf{I}+\gamma_6\mathbf{I}\big)^{-2} \mathbf{s}_{k-1}\right\} \\
 \dfrac{db_i}{d\gamma_6}&=-\frac{\beta_{k-1}}{2} \mathrm{Im}\left\{ \mathbf{w}_{k}^H\mathbf{G} \big( \sum\limits_{q=1}^Q\mathbf{Z_q}(\mathbf{w}_{k})+\frac{\beta_{k-1}}{2}\mathbf{I}+\gamma_6\mathbf{I}\big)^{-2} \mathbf{s}_{k-1}\right\}.
 \end{align*}

\begin{remk} \label{propos5}
In general $r(\gamma_6)$ is not monotone and there exist one or more zero crossings excluding $\gamma_6=\infty$. However in our extensive numerical simulations, and assuming $P_o>>\kappa^2, \gamma_6=0$ solves \eqref{eq48}.
\end{remk}

It is readily seen that $\lim \limits_{\gamma_6\rightarrow 0} r(\gamma_6)=r(0)\neq 0,\lim \limits_{\gamma_6\rightarrow \infty} r(\gamma_6)=0,\lim \limits_{\gamma_6\rightarrow \infty} \gamma_6 r(\gamma_6)=0$. Nevertheless unlike Prop.~\ref{propos3},  Rem.~\ref{propos5} is not straightforward  to demonstrate analytically, however can be shown numerically. See Section IV for some demonstrative examples not specific to the radar problem.  

The value of $\gamma_6=0$ is substituted in \eqref{eq47} to obtain the final waveform solution $\mathbf{s}_{k}(\cdot)$. 

\begin{remk} \label{remstrongdual}
({\it Strong duality}) The primal problem, \eqref{eq42} and its associated dual have zero duality gap. This is straightforward but tedious to show. However we provide the optimal values attained by the primal as well as the dual, given below,
\begin{align} \label{dualopt2}
\mathbf{w}_k^H(\mathbf{R_i+R_n })\mathbf{w}_k&+\mathbf{s}_k^{\ast H} \big( \sum\limits_{q=1}^Q\mathbf{Z_q}(\mathbf{w}_{k})+\frac{\beta_{k-1}}{2}\mathbf{I}\big)^{-1} \mathbf{s}_k^{\ast} \nonumber \\
&+\frac{\beta_{k-1} || \mathbf{s}_{k-1}||^2}{2}  
-\beta_{k-1}\mbox{Re}\{ \mathbf{s}_{k}^{\ast H} \mathbf{s}_{k-1}\}
\end{align}
where using \eqref{eq47}, Prop.~\ref{propos5}, 
\begin{equation*}
\begin{aligned}
\mathbf{s}_{k}^{\ast}&=(\sum\limits_{q=1}^Q\mathbf{Z_q}(\mathbf{w}_{k})+\frac{\beta_{k-1}}{2}\mathbf{I})^{-1} ( \frac{\beta_{k-1}}{2}\mathbf{s}_{k-1} -\frac{\gamma_5}{2}\mathbf{G}^H\mathbf{w}_{k}) \\
 \gamma_5&=\frac{\beta_{k-1}\mathbf{w}_{k}^H\mathbf{G} ( \sum\limits_{q=1}^Q\mathbf{Z_q}(\mathbf{w}_{k})+\frac{\beta_{k-1}}{2}\mathbf{I})^{-1} \mathbf{s}_{k-1}-2\kappa} {\mathbf{w}_{k}^H\mathbf{G} ( \sum\limits_{q=1}^Q\mathbf{Z_q}(\mathbf{w}_{k})+\frac{\beta_{k-1}}{2}\mathbf{I})^{-1} \mathbf{G}^H\mathbf{w}_{k} }.
 \end{aligned}
 \end{equation*}
\end{remk}
This is not surprising since it is similar to Rem.~\ref{dualprop}. However, in this case the condition on the existence of the matrix is irrelevant, since the inverse in \eqref{eq47} always exists. Hence Slater's condition now is a simple constraint qualifier (the power constraint) which must be satisfied as in Rem.~\ref{dualprop}.

{\bf Interpretation with specific ranges of $\alpha_{k-1},\,\beta_{k-1}$ and related to the Lipschitz constants}. Some definitions and lemmas are useful for future discussions and are expressed below
 
 \begin{defn}\label{mydef2}
({\it Lipschitz continuous gradient}) A function $f(\mathbf{\bar{x}}) :\mathbbm{R}^N\rightarrow\mathbbm{R}$ has a Lipschitz constant (and trivially real positive), $\mathtt{L}$, when $||\nabla_{\mathbf{\bar{x}}}f(\mathbf{\bar{x}})-\nabla_{\mathbf{\bar{y}}}f(\mathbf{\bar{y}})|| \leq\mathtt{L} || \mathbf{\bar{x}-\bar{y}}||$, and $\forall \mathbf{\bar{x}}$, $\mathbf{\bar{y}}\in \mathbbm{R}^N$.
\end{defn}
{\bf Note:} ({\it upper bound on Hessian} ) If $f(\mathbf{\bar{x}})$ has a Lipschitz continuous gradient, with constant $\mathtt{L}$, then  using Taylor's theorem, it can be proved that $\nabla_{\mathbf{\bar{x}}}^2 f(\mathbf{\bar{x}})\preceq \mathtt{L}\mathbf{I}$.
\begin{remk} \label{remark3}
The Lipschitz constant for $f(\mathbf{\bar{x}})=\mathbf{\bar{x}}^T\mathbf{\bar{B}}\mathbf{\bar{x}}$ is the maximum eigenvalue of $\mathbf{\bar{B}}$, i.e. $\lambda_{\max}(\mathbf{\bar{B}})$, where $\mathbf{\bar{B}}\in\mathbbm{R}^{N\times N}, \; \mathbf{\bar{x}}\in\mathbbm{R}^N$.
\end{remk}
This is readily seen since $\nabla_{\mathbf{\bar{x}}}\mathbf{\bar{x}}^T\mathbf{\bar{B}\bar{x}}=\mathbf{\bar{B}\bar{x}}$. Further since the induced (by  an arbitrary $\mathbf{\bar{z}} \in \mathbbm{R}^N$) spectral norm (notation: $||| \cdot |||$) is defined  as
\begin{align*} 
|||\mathbf{\bar{B}}|||:=\sup\limits_{\mathbf{\bar{z}}} \{ \frac{||\bf \bar{B}\bar{z}||}{||\bf \bar{z}||} : \mathbf{\bar{z}}\in\mathbbm{R}^N,\mathbf{\bar{z}}\neq\mathbf{0} \},||{\bf \bar{B}\bar{z}}||=\sqrt{ \mathbf{\bar{z}}^T\mathbf{\bar{B}}^T\mathbf{\bar{B}\bar{z}}}
\end{align*}
but we know from Lem.~\ref{lemma2} that $\mathbf{\bar{z}}^T\mathbf{\bar{B}\bar{z}}\leq \lambda_{\max}( \mathbf{\bar{B}})|| \bf \bar{z}||^2$ and that eigenvalues of $\mathbf{\bar{B}}$ and $\mathbf{\bar{B}}^T$ are identical. This further implies that $\mathbf{\bar{z}}^T\mathbf{\bar{B}}^T\mathbf{\bar{B}\bar{z}}\leq \lambda_{\max}^2( \mathbf{\bar{B}})|| \bf \bar{z}||^2$. Therefore from Definition \ref{mydef2}, it is readily seen that the Lipschitz constant is the maximum eigenvalue of $\mathbf{\bar{B}}$. 
\begin{lem}\label{lemma5}
({\it Descent lemma}) If $f(\mathbf{\bar{x}}) :\mathbbm{R}^N\rightarrow\mathbbm{R}$ is continuously differentiable and has a Lipschitz continuous gradient described by constant $\mathtt{L}$, then $f(\mathbf{\bar{x}})\leq f(\mathbf{\bar{y}})+\nabla_{\mathbf{\bar{y}}}f(\mathbf{\bar{y}})^T(\mathbf{\bar{x}}-\mathbf{\bar{y}}) +\frac{\mathtt{L}}{2}||\mathbf{\bar{x}}-\mathbf{\bar{y}}||^2$.
\end{lem}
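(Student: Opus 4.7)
The plan is to establish the descent inequality via the standard route: write $f(\bar{\mathbf{x}})-f(\bar{\mathbf{y}})$ as a one-dimensional integral along the line segment joining the two points, split off the linear term in the gradient at $\bar{\mathbf{y}}$, and then control the residual using the Lipschitz property of $\nabla f$ via Cauchy--Schwarz.

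Concretely, first I would invoke the fundamental theorem of calculus applied to the scalar function $\varphi(t):=f\bigl(\bar{\mathbf{y}}+t(\bar{\mathbf{x}}-\bar{\mathbf{y}})\bigr)$ for $t\in[0,1]$. Since $f$ is continuously differentiable, so is $\varphi$, and the chain rule gives $\varphi'(t)=\nabla f\bigl(\bar{\mathbf{y}}+t(\bar{\mathbf{x}}-\bar{\mathbf{y}})\bigr)^T(\bar{\mathbf{x}}-\bar{\mathbf{y}})$. Integrating yields
\begin{equation*}
f(\bar{\mathbf{x}})-f(\bar{\mathbf{y}}) = \int_0^1 \nabla f\bigl(\bar{\mathbf{y}}+t(\bar{\mathbf{x}}-\bar{\mathbf{y}})\bigr)^T(\bar{\mathbf{x}}-\bar{\mathbf{y}})\,dt.
\end{equation*}

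Next I would add and subtract $\nabla f(\bar{\mathbf{y}})^T(\bar{\mathbf{x}}-\bar{\mathbf{y}})$ inside the integrand to isolate the first-order term, obtaining
\begin{equation*}
f(\bar{\mathbf{x}})-f(\bar{\mathbf{y}})-\nabla f(\bar{\mathbf{y}})^T(\bar{\mathbf{x}}-\bar{\mathbf{y}}) = \int_0^1 \bigl[\nabla f\bigl(\bar{\mathbf{y}}+t(\bar{\mathbf{x}}-\bar{\mathbf{y}})\bigr)-\nabla f(\bar{\mathbf{y}})\bigr]^T(\bar{\mathbf{x}}-\bar{\mathbf{y}})\,dt.
\end{equation*}
Applying the Cauchy--Schwarz inequality under the integral bounds the right-hand side by $\int_0^1 \bigl\|\nabla f(\bar{\mathbf{y}}+t(\bar{\mathbf{x}}-\bar{\mathbf{y}}))-\nabla f(\bar{\mathbf{y}})\bigr\|\,\|\bar{\mathbf{x}}-\bar{\mathbf{y}}\|\,dt$. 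Then the Lipschitz hypothesis of Definition~\ref{mydef2}, applied with arguments $\bar{\mathbf{y}}+t(\bar{\mathbf{x}}-\bar{\mathbf{y}})$ and $\bar{\mathbf{y}}$, gives $\|\nabla f(\bar{\mathbf{y}}+t(\bar{\mathbf{x}}-\bar{\mathbf{y}}))-\nabla f(\bar{\mathbf{y}})\|\le \mathtt{L} t \|\bar{\mathbf{x}}-\bar{\mathbf{y}}\|$.

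Finally, substituting this bound and evaluating the elementary integral $\int_0^1 t\,dt=1/2$ yields
\begin{equation*}
f(\bar{\mathbf{x}})-f(\bar{\mathbf{y}})-\nabla f(\bar{\mathbf{y}})^T(\bar{\mathbf{x}}-\bar{\mathbf{y}}) \le \frac{\mathtt{L}}{2}\|\bar{\mathbf{x}}-\bar{\mathbf{y}}\|^2,
\end{equation*}
which is the claim after rearrangement. There is no real obstacle: the only subtlety worth noting is that we never need $f$ to be twice differentiable (so the ``upper bound on Hessian'' note preceding the lemma is a consequence rather than a hypothesis), and the argument is entirely one-dimensional in disguise, so it extends verbatim to $\mathbb{R}^N$.
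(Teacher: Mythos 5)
Your proof is correct and is precisely the standard line-integral argument used in the reference the paper cites for this lemma (\cite[Prop.~A.24]{Bertsekas1999}); the paper itself supplies no independent proof beyond that citation. Your closing observation that twice-differentiability is not needed, so the preceding ``upper bound on Hessian'' note is a consequence rather than a hypothesis, is also accurate.
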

\begin{proof}
See \cite[Prop.~A.24]{Bertsekas1999} and also \cite[Lem2.2]{Beck2013} relevant in general for the BCD.
\end{proof}

Consider an arbitrary $g(\mathbf{x}):= \mathbf{x}^H\mathbf{Bx}$, and $\mathbf{B}=\mathbf{B}^H$, $\mathbf{x}\in\mathbbm{C}^N$. Since $g(\mathbf{x}):\mathbbm{C}^N\rightarrow\mathbbm{R}$, a real equivalent of $g(\mathbf{x})$ could be defined as $\bar{g}(\bar{\mathbf{x}}):=\bar{\mathbf{x}}^T\bar{\mathbf{B} }\bar{\mathbf{x}}$ where 
\begin{equation*}
\bar{\mathbf{B}}:= \left[ \begin{matrix} \mbox{Re}\{ \mathbf{B} \} & -\mbox{Im} \{\mathbf{B} \}  \\
\mbox{Im} \{\mathbf{B} \} & \mbox{Re}\{ \mathbf{B} \} \end{matrix} \right] \in\mathbbm{R}^{2N\times 2N}, \;\; \bar{\mathbf{x}}=[\mbox{Re} \{ \mathbf{x} \}^T \mbox{Im} \{ \mathbf{x} \}^T ]^T \in \mathbbm{R}^{2N}.
\end{equation*}

\begin{lem}\label{lemma6}
The matrix $\bar{\mathbf{B}} :=\left[ \begin{smallmatrix} \mathrm{Re}\{ \mathbf{B} \} & -\mathrm{Im} \{\mathbf{B} \}  \\
\mathrm{Im} \{\mathbf{B} \} & \mathrm{Re}\{ \mathbf{B} \} \end{smallmatrix} \right]\in\mathbbm{R}^{2N\times 2N}$ and $ \left[ \begin{smallmatrix} \mathbf{B} & \mathbf{0} \\ \mathbf{0} & \mathbf{B}^{\ast}  \end{smallmatrix} \right] \in\mathbbm{C}^{2N\times 2N} $ have identical eigenvalues, $\tilde{\lambda}_i,i=1,2,\ldots,2N$.  Moreover, if $\mathbf{B}$ is Hermitian, then $\tilde{\lambda}_i \in \mathbbm{R}^{+},i=1,2,\ldots,2N$ are equal to twice the multiplicity of the eigenvalues of  $\mathbf{B}\in\mathbbm{C}^{N\times N}$.
\end{lem}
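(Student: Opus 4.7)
The plan is to exhibit an explicit unitary similarity between the two matrices and then read off the spectral information. Write $\mathbf{B}=\mathbf{B}_r+j\mathbf{B}_i$ with $\mathbf{B}_r=\mathrm{Re}\{\mathbf{B}\}$, $\mathbf{B}_i=\mathrm{Im}\{\mathbf{B}\}$, and introduce the $2N\times 2N$ matrix
\begin{equation*}
\mathbf{P}=\frac{1}{\sqrt{2}}\begin{bmatrix}\mathbf{I}_N & j\mathbf{I}_N\\ \mathbf{I}_N & -j\mathbf{I}_N\end{bmatrix},
\end{equation*}
which is easily checked to satisfy $\mathbf{P}^H\mathbf{P}=\mathbf{I}_{2N}$. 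The first step is a routine block multiplication showing that
\begin{equation*}
\mathbf{P}^H\begin{bmatrix}\mathbf{B} & \mathbf{0}\\ \mathbf{0} & \mathbf{B}^{\ast}\end{bmatrix}\mathbf{P}=\bar{\mathbf{B}},
\end{equation*}
where one uses $\tfrac{1}{2}(\mathbf{B}+\mathbf{B}^{\ast})=\mathbf{B}_r$ and $\tfrac{1}{2}(j\mathbf{B}-j\mathbf{B}^{\ast})=-\mathbf{B}_i$ to recover the four real blocks.

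Since unitary similarity preserves the spectrum, this immediately gives the first claim: the two matrices share the same eigenvalues $\tilde{\lambda}_i$, $i=1,\ldots,2N$. For the second claim, assume $\mathbf{B}$ is Hermitian (with the non-negativity $\tilde{\lambda}_i\in\mathbbm{R}^{+}$ read as implicitly requiring $\mathbf{B}\succeq \mathbf{0}$, which is the case of interest in this paper since $\mathbf{B}$ will be a correlation matrix). Then the eigenvalues of $\mathbf{B}$ are real, and complex conjugation of the eigenvalue equation $\mathbf{B}\mathbf{v}=\lambda\mathbf{v}$ yields $\mathbf{B}^{\ast}\mathbf{v}^{\ast}=\lambda\mathbf{v}^{\ast}$, so $\mathbf{B}^{\ast}$ has exactly the same spectrum as $\mathbf{B}$ with matching multiplicities.

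The eigenvalues of the block-diagonal matrix $\mathrm{diag}(\mathbf{B},\mathbf{B}^{\ast})$ are simply the union (with multiplicities) of those of $\mathbf{B}$ and of $\mathbf{B}^{\ast}$, and by the previous step each eigenvalue of $\mathbf{B}$ therefore contributes with twice its multiplicity. Combining with the unitary similarity from the first step, the same holds for $\bar{\mathbf{B}}$, which is exactly the second assertion. The only step requiring any care is the block multiplication that confirms $\mathbf{P}^H\mathrm{diag}(\mathbf{B},\mathbf{B}^{\ast})\mathbf{P}=\bar{\mathbf{B}}$; once that is in place, everything else follows from standard facts about unitary similarity and conjugation of the eigenvalue equation.
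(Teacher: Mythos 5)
Your proof is correct and follows essentially the same route as the paper: both exhibit an explicit unitary similarity between $\bar{\mathbf{B}}$ and $\mathrm{diag}(\mathbf{B},\mathbf{B}^{\ast})$ (you use $\mathbf{P}=\tfrac{1}{\sqrt{2}}\bigl[\begin{smallmatrix}\mathbf{I} & j\mathbf{I}\\ \mathbf{I} & -j\mathbf{I}\end{smallmatrix}\bigr]$ where the paper uses $\tfrac{1}{\sqrt{2}}\bigl[\begin{smallmatrix}j\mathbf{I} & \mathbf{I}\\ \mathbf{I} & j\mathbf{I}\end{smallmatrix}\bigr]$, a purely cosmetic difference) and then invoke preservation of the spectrum under unitary equivalence together with the fact that $\mathbf{B}^{\ast}$ shares the real spectrum of a Hermitian $\mathbf{B}$. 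Your parenthetical observation that $\tilde{\lambda}_i\in\mathbbm{R}^{+}$ implicitly presumes $\mathbf{B}\succeq\mathbf{0}$ is a fair reading of the statement and does not change the argument.
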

\begin{proof} 
Owing to the complex to real-real isomorphism, it can be shown after algebraic manipulations that
\begin{align} \label{eq49}
\left[ \begin{matrix} \mathbf{B} & \mathbf{0} \\ \mathbf{0} & \mathbf{B}^{\ast}  \end{matrix} \right] =\mathbf{P}^{H}\bar{\mathbf{B}}\mathbf{P}, \;\; \mathbf{P}=\frac{1}{\sqrt{2}}\left[ \begin{matrix} j\mathbf{I} & \mathbf{I} \\ \mathbf{I} & j\mathbf{I} \end{matrix} \right], \;\; \mathbf{P}^H=\mathbf{P}^{-1}.
\end{align}
That is \eqref{eq49} indicates that $\bar{\mathbf{B}}$ and $\left[ \begin{smallmatrix} \mathbf{B} & \mathbf{0} \\ \mathbf{0} & \mathbf{B}^{\ast}  \end{smallmatrix} \right]$ are unitary equivalent. Therefore they share the same eigenvalues. Furthermore if $\mathbf{B}$ is Hermitian its eigenvalues are purely real, and  hence trivially,  the eigenvalues of $\mathbf{B}$, $ \mathbf{B}^{\ast}$ are identical, and their eigenvectors are complex conjugates of one another. Hence the block diagonal matrix has identical eigenvalues as $\mathbf{B}$ but with multiplicity two.
\end{proof}

Consider the objective in \eqref{eq41}, \eqref{eq42}. Define $\bar{g}( \mathbf{\bar{w}},\mathbf{\bar{s}}_{k-1}),\bar{g}(\mathbf{\bar{w}}_{k-1},\mathbf{\bar{s}}_{k-1})$ as the real equivalents of $g( \mathbf{w},\mathbf{s}_{k-1}),g (\mathbf{w}_{k-1},\mathbf{s}_{k-1})$, respectively for the filter design objective as in \eqref{eq41}. In addition, denote  $\mathtt{L}_{1k-1}$ as the Lipschitz constant associated with $\bar{g}(\bar{\mathbf{w}}_{k-1},\mathbf{\bar{s}}_{k-1})$. Similarly using the same notation and for the objective in the waveform design objective  as in \eqref{eq42} consider the real equivalents, $\bar{g}( \mathbf{\bar{s}},\mathbf{\bar{w}}_{k}),\bar{g}(\mathbf{\bar{s}}_{k-1},\mathbf{\bar{w}}_{k})$ and the Lipschitz constant denoted as $\mathtt{L}_{2k-1}$. Then the following inequalities can now be shown.
\begin{equation} \label{eq50}
\begin{aligned} &\bar{g}( \mathbf{\bar{w}})+\frac{\mathtt{L}_{1k-1}}{2}||\mathbf{\bar{w}}_{k-1}-\mathbf{\bar{w}} ||^2
  \geq \bar{g}(\mathbf{\bar{w}}_{k-1}) \\
  +&\nabla \bar{g}(\mathbf{\bar{w}}_{k-1})^T(\mathbf{\bar{w}}-\mathbf{\bar{w}}_{k-1}) 
  +\frac{\mathtt{L}_{1k-1}}{2}||\mathbf{\bar{w}}_{k-1}-\mathbf{\bar{w}}||^2 \geq \bar{g}(\mathbf{\bar{w}}) \end{aligned}
\end{equation}
\begin{equation} \label{eq51}
\begin{aligned} &\bar{g}( \mathbf{\bar{s}})+\frac{\mathtt{L}_{2k-1}}{2}||\mathbf{\bar{s}}_{k-1}-\mathbf{\bar{s}} ||^2
  \geq \bar{g}(\mathbf{\bar{s}}_{k-1}) \\
  +&\nabla \bar{g}(\mathbf{\bar{s}}_{k-1})^T(\mathbf{\bar{s}}-\mathbf{\bar{s}}_{k-1}) 
  +\frac{\mathtt{L}_{2k-1}}{2}||\mathbf{\bar{s}}_{k-1}-\mathbf{\bar{s}}||^2 \geq \bar{g}(\mathbf{\bar{s}}) \end{aligned}
\end{equation}
where in \eqref{eq50}, the known's $\mathbf{\bar{s}}_{k-1}$ and  in \eqref{eq51}, the known's $\mathbf{\bar{w}}_{k}$ are respectively treated as constants, therefore suppressed in notation  for brevity. We further note that \eqref{eq50}, \eqref{eq51} are tight, i.e.  for $\mathbf{\bar{w}}_{k}=\mathbf{\bar{w}}_{k-1}$,  $\mathbf{\bar{s}}_{k}=\mathbf{\bar{s}}_{k-1}$ the inequalities are strict equality's. The Lipschitz constants, $\mathtt{L}_{1k-1}$, $\mathtt{L}_{2k-1}$ are readily derived using Lem.~\ref{lemma6}.  
\begin{remk} \label{remark4}
It is readily seen that if $\alpha_{k-1}\geq \mathtt{L}_{1k-1}$ and $\beta_{2k-1}\geq \mathtt{L}_{2k-1}$ the inequalities in \eqref{eq50}, \eqref{eq51} are valid by replacing $\mathtt{L}_{1k-1},\mathtt{L}_{2k-1}$ with $\alpha_{k-1},\beta_{k-1}$, respectively.
\end{remk}

The term in the first inequalities of \eqref{eq50}, \eqref{eq51} are the proximal minimization objectives with $\alpha_{k-1}=\mathtt{L}_{1k-1},\beta_{k-1}=\mathtt{L}_{2k-1}$. The  inequalities of \eqref{eq50}, \eqref{eq51}  are obtained from first applying the convexity Def.~\ref{mydef1}(b) (first order definition) and then subsequently adding the respective terms $\tfrac{\mathtt{L}_{1k-1}}{2}||\mathbf{\bar{w}}_{k-1}-\mathbf{\bar{w}} ||^2$, $ \tfrac{\mathtt{L}_{2k-1}}{2}||\mathbf{\bar{s}}_{k-1}-\mathbf{\bar{s}}||^2$ and then using  Lem.~\ref{lemma5}, the descent lemma. 

Additionally, it is recalled that the functions associated with the second  inequalities of \eqref{eq50}, \eqref{eq51} are the  (unconstrained) objectives which are minimized by the gradient descent with step size $\mathtt{L}_{1k-1}$, $\mathtt{L}_{2k-1}$, respectively. That is, the new iterations are then $\mathbf{\bar{w}}_{k}=\mathbf{\bar{w}}_{k-1}-\frac{1}{\mathtt{L}_{1k-1}} \nabla_{\mathbf{\bar{w}}}\bar{g}( \mathbf{\bar{w}})$, and  $\mathbf{\bar{s}}_{k}=\mathbf{\bar{s}}_{k-1}-\frac{1}{\mathtt{L}_{2k-1}} \nabla_{\mathbf{\bar{s}}}\bar{g}( \mathbf{\bar{s}})$. Therefore from \eqref{eq50}, \eqref{eq51} and Rem.~\ref{remark4} {\it we note that the proximal objective, the gradient descent objective are all surrogate albeit tight upper bounds on the true objective $\forall \alpha_{k-1}\geq\mathtt{L}_{1k-1}$ and $\forall \beta_{k-1}\geq\mathtt{L}_{2k-1}$}. This interpretation is graphically depicted in Fig.~\ref{fig3} for the filter design objective as in \eqref{eq41} but for $\alpha_{k-1}=\mathtt{L}_{1k-1}$. A similar graphic interpretation is obvious for the waveform design stage and is therefore not shown.

{\bf Tikhonov interpretation} This interpretation is immediate from \eqref{eq46}, \eqref{eq47}. In fact from \eqref{eq41}, \eqref{eq42},  the quadratic  regularizers $||\mathbf{w}-\mathbf{w}_{k-1} ||^2, || \mathbf{s}-\mathbf{s}_{k-1} ||^2$ are essentially Tikhonov regularization terms. Geometrically they are spheres centered at $\mathbf{w}_{k-1}$, $\mathbf{s}_{k-1}$ and encourage the current iterates to be in the vicinity of the previous iterates. Furthermore, since in the limit, the regularizer terms only decrease, this may be also seen as a vanishing Tikhonov regularization problem \cite{Parikh2013} for each iteration in both the waveform and the filter vectors.

{\bf Proximal minimization: A training data starved STAP solution} The regularization in \eqref{eq41}, \eqref{eq42} leads to {\it diagonally loaded} solutions \eqref{eq46}, \eqref{eq47} when compared to the constrained alternating minimization solutions as in \eqref{weightcomp} and \eqref{eq32}. In particular, the diagonal loading serves two important purposes, {\it firstly it offers a numerically stable solution by conditioning . Secondly and more importantly, it permits a weight vector solution when $\mathrm{rank}( \mathbf{R_u}(\mathbf{s}))\leq NML$}.   

Practical STAP contends with rank deficient correlation matrices due to lack of sufficient training data from neighboring range cells due to outlier contamination or heterogeneity in the data. The solution in \eqref{eq41} ameliorates over the training data starved STAP scenarios.

So far, we have considered the algorithms for waveform design without enforcing constraints such as const. modulus or sidelobe constraints. The minimum eigenvector solution belongs to this class of unconstrained waveform design. We will revisit this design by considering \eqref{eq18} and Lem.~\ref{lemma2}.
\begin{remk} \label{remark5}
The min. eigenvector solution in \eqref{eq21} is still optimal in the presence of clutter, provided $\mathbf{R_i}+\mathbf{R_n}$ and $\mathbf{R_c}(\mathbf{s})$ share the same eigenvector corresponding to their min.eigenvalues, but with $\lambda_{\min}(\mathbf{R_c}(\mathbf{s}))=0$, always.
\end{remk}

This is readily seen since the optimization in \eqref{eq18}, ignoring the constraint for now could be recast as $\max \limits_\mathbf{s} (\mathbf{v}(f_d)\otimes\mathbf{s}\otimes\mathbf{a}(\theta_t,\phi_t))^H\mathbf{R}_{\bf u}^{-1}(\mathbf{s})(\mathbf{v}(f_d)\otimes\mathbf{s}\otimes\mathbf{a}(\theta_t,\phi_t))$.  Now using Woodbury's identity \cite{Kayest1998}, we have
\begin{equation} \label{eqeigmin}
\begin{aligned}
&(\mathbf{R_i}+\mathbf{R_n}+\mathbf{R_u}(\mathbf{s}))^{-1}  =(\mathbf{R_i}+\mathbf{R_n})^{-1} \\
-&(\mathbf{R_i}+\mathbf{R_n})^{-1}\mathbf{R_c}(\mathbf{s} ) \bigl( \mathbf{I}+ (\mathbf{R_i}+\mathbf{R_n})^{-1} \mathbf{R_c}(\mathbf{s})\bigr) ^{-1}(\mathbf{R_i}+\mathbf{R_n})^{-1}.
\end{aligned}
\end{equation}
Further using the eigenvector relations, $(\mathbf{R_i}+\mathbf{R_n})(\mathbf{v}(f_d)\otimes\mathbf{s}\otimes\mathbf{a}(\theta_t,\phi_t))=\lambda_{\min}(\mathbf{R_i}+\mathbf{R_n})(\mathbf{v}(f_d)\otimes\mathbf{s}\otimes\mathbf{a}(\theta_t,\phi_t))$ and $\mathbf{R_c} (\mathbf{s}) (\mathbf{v}(f_d)\otimes\mathbf{s}\otimes\mathbf{a}(\theta_t,\phi_t))=\lambda_{\min}(\mathbf{R_c}( \mathbf{s}))(\mathbf{v}(f_d)\otimes\mathbf{s}\otimes\mathbf{a}(\theta_t,\phi_t))=\mathbf{0}$ in \eqref{eqeigmin}, it is readily seen that $(\mathbf{v}(f_d)\otimes\mathbf{s}\otimes\mathbf{a}(\theta_t,\phi_t))^H(\mathbf{R_i}+\mathbf{R_n}+\mathbf{R_u}(\mathbf{s}))^{-1} )(\mathbf{v}(f_d)\otimes\mathbf{s}\otimes\mathbf{a}(\theta_t,\phi_t))=\lambda_{\min}^{-1}( \mathbf{R_i}+\mathbf{R_n})$.

The simplest example where Rem.~\ref{remark5} is satisfied is when the noise correlation matrix is scaled identity (may not be practical for narrowband radar), clutter correlation matrix is low rank. In STAP and for ideal scenarios, insights to the clutter rank are obtained by the Brennan's rule \cite{guerci2003,klemm2002,ward1994}. A high clutter rank prevails due to the practical effects such as, the intrinsic clutter motion,velocity misalignment and crabbing, mutual coupling and antennae element mismatches as well as clutter ambiguities in Doppler resulting in aliasing \cite{ward1994}. 

\begin{figure} [tbp!]
\centering
\includegraphics[scale=0.5]{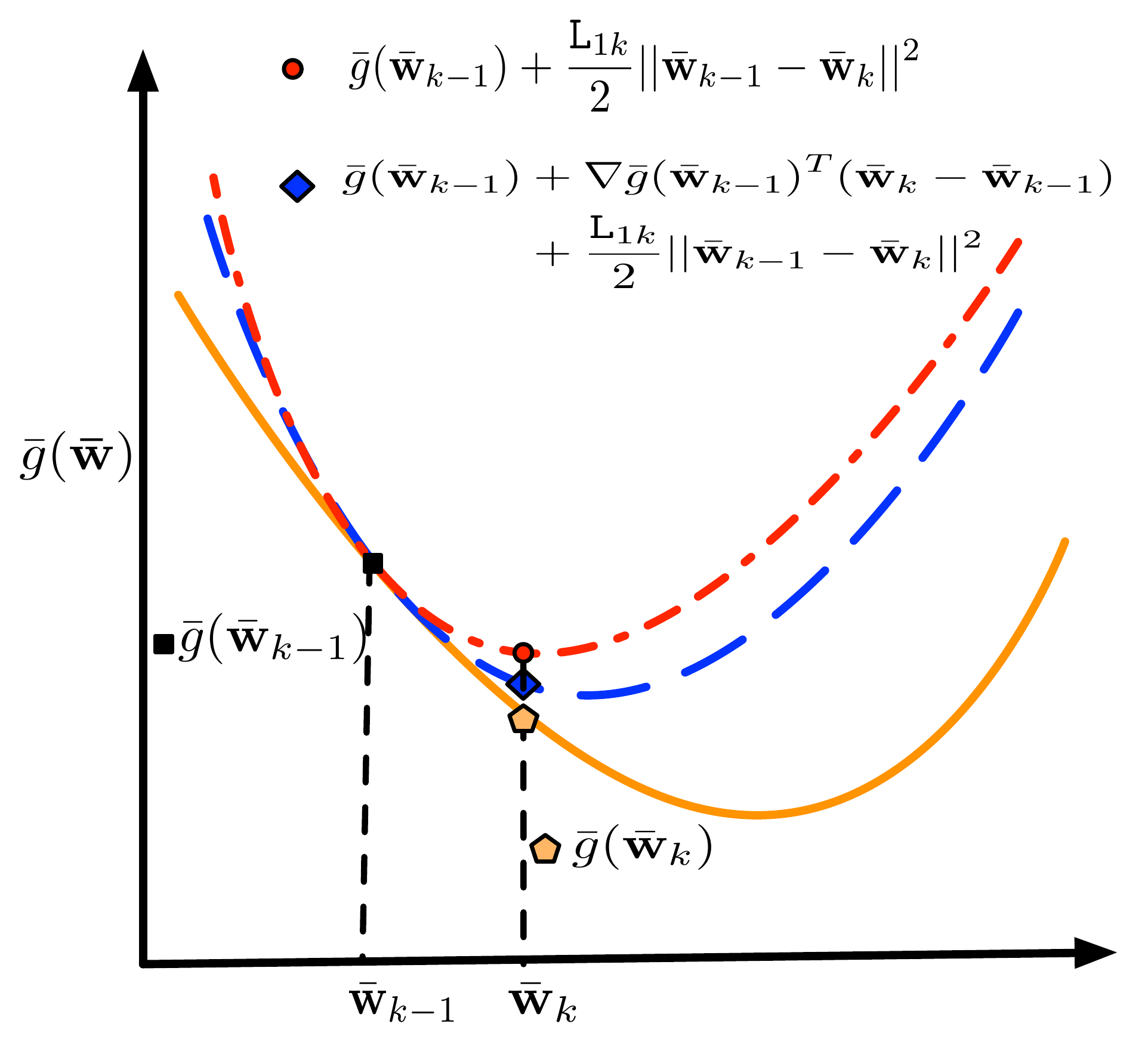}
\caption{Upper bounds on the objective for the proximal algorithm w.r.t. the filter design. A similar graphical interpretation for the waveform design but with $\mathtt{L}_{2k-1}$ is also easy depicted but not shown here.}
\label{fig3}
\end{figure}
\subsection{Constant modulus alternating minimization}
So far, the optimization problems had no specific constraints (except the power/energy constraint) on the waveform, constant modulus is a desirable property to have in a waveform \cite{Setlurradar2014}. The optimum weight vector is unchanged by introducing the const. modulus constraint, and is identical to \eqref{weightcomp} for the constrained alternating minimization \footnote{The analysis of the proximal constrained alternating minimization with the const. mod. constraint is omitted, but can be readily derived from the analysis of its non-proximal counterpart, presented here.}.

Since the optimization w.r.t. weight vector is unchanged, we only treat the optimization for $\mathbf{s}$ but with the const. mod. constraint for a fixed but arbitrary $\mathbf{w}$, formulated below
\begin{align}
\min\limits_{\mathbf{s}} \;\;\;\;\; &\mathbf{w}^H\mathbf{R_u}(\mathbf{s})\mathbf{w} \nonumber \\
\mbox{s. t. }\;\;\;\;\; & \mathbf{w}^H(\mathbf{v}(f_d)\otimes\mathbf{s}\otimes\mathbf{a}(\theta_t,\phi_t))=\kappa  \label{eq52} \\ 
\;\;\;\;\;\; & |s_i|=\rho,i=1,2,\ldots,N. \nonumber \nonumber
\end{align}
 where $s_i$ is the $i$-th component in $\mathbf{s}$. Unlike say \eqref{eq17}, notice that in \eqref{eq52}, constraining  the power of the waveform is unnecessary since $\rho$ is fixed but could be chosen arbitrarily to scale up / down the waveforms energy to satisfy hardware limitations. Therefore, the last $N$ constraints in \eqref{eq52} implicitly impose the power requirements, but more importantly also impose the constant modulus constraint.

The Lagrangian of \eqref{eq52} is expressed as
\begin{align} \label{lagraconmod}
\mathcal{L}(\mathbf{s}, \gamma_7,\boldsymbol{\gamma}_5)&=\mathbf{w}^H\mathbf{R_u}(\mathbf{s})\mathbf{w} +\mbox{Re}\{ \gamma_7^{\ast} (\mathbf{w}^H\mathbf{Qs}-\kappa)\} \nonumber  \\
&+\mathbf{s}^H\mathbf{D}_{\gamma}\mathbf{s} -\rho\mathbf{1}^T\boldsymbol{\gamma}_8
\end{align}
where the Lagrange parameter, $\gamma_7\in\mathbbm{C}$, and the Lagrange parameter vector $\boldsymbol{\gamma}_8=[\gamma_{8_1},\gamma_{8_2},\ldots,\gamma_{8_N}]^T\in\mathbbm{R}^{N}$ are for the Capon constraint and the $N$ const. mod. constraints, respectively. Furthermore in  \eqref{lagraconmod}, define $\mathbf{D}_{\gamma}= \left[ \begin{smallmatrix} \gamma_{8_1}& & \\
& \ddots & \\ & &\gamma_{8_N} \end{smallmatrix} \right]$, i.e. a diagonal matrix. The KKT conditions are expressed as
\begin{subequations}\label{kkt1}
\begin{align} 
\mathbf{s}_o(\mathbf{w})&=\frac{\kappa \bigl( \sum\limits_{q=1}^Q \mathbf{Z_q}( \mathbf{w})+\mathbf{D}_{\gamma}\bigr)^{-1}\mathbf{G}^H\mathbf{w} }{\mathbf{w}^H\mathbf{G} \bigl( \sum\limits_{q=1}^Q \mathbf{Z_q}( \mathbf{w}) +\mathbf{D}_{\gamma}\bigr)^{-1}\mathbf{G}^H\mathbf{w}} \\
|s_{oi}(\mathbf{w})|&=\rho,i=1,2,\ldots,N.
\end{align}
\end{subequations}
The waveform which simultaneously satisfies \eqref{kkt1}(a)(b) is the solution. Moreover, note that \eqref{kkt1}(a)(b) are $2N$ non-linear equations with $2N$ unknowns. The first $N$ unknowns are $s_{oi}(\mathbf{w}),i=1,2\ldots,N$ and the next $N$ unknowns are the Lagrange parameters $\gamma_{8_i}$. Unfortunately, \eqref{kkt1} is not in closed form but can be solved numerically for the $N$ parameters, $\gamma_{8_i},i=1,2,\ldots,N$ via a numerical root finder. Nonetheless we note that $\gamma_{8_i}\in (-\infty,\infty)$ and a reasonable initialization point is not forthcoming for the numerical root finding. 

{\bf Eliminating the constant modulus constraints} Instead of solving the $2N$ non-linear equations as in \eqref{kkt1}(a)(b), we take an alternative approach. One may reformulate the optimization \eqref{eq52} by eliminating the last $N$ constraints, by imposing  a structure on $\mathbf{s}$, namely, $s_i=\rho\exp(j\alpha_i)$. Other structures exists but from our experience, complex exponentials are the easiest to manipulate. The new optimization problem is now w.r.t. $\boldsymbol{\alpha}=[\alpha_1,\alpha_2,\ldots,\alpha_{N}]^T\in\mathbbm{R}^N$, expressed as
\begin{align}
\min\limits_{\boldsymbol{\alpha}} \;\;\;\;\; &\mathbf{w}^H\mathbf{R_u}(\mathbf{s})\mathbf{w} \nonumber \\
\mbox{s. t. }\;\;\;\;\; & \mathbf{w}^H(\mathbf{v}(f_d)\otimes\mathbf{s}\otimes\mathbf{a}(\theta_t,\phi_t))=\kappa  \label{eq53}
\end{align}
where in,
$\mathbf{s}=\rho[\exp(j\alpha_1),\exp(j\alpha_2),\ldots,\exp(j\alpha_{N})]^T$ and $\alpha_i \in [0,2\pi),i=1,2,\ldots, N$.
The Lagrangian corresponding to \eqref{eq53} is
\begin{align} \label{eq54}
\mathcal{L}(\boldsymbol{\alpha},\gamma_9)=\mathbf{w}^H\mathbf{R_u}(\mathbf{s})\mathbf{w} +\mbox{Re}\{ \gamma_9^{\ast}(\mathbf{w}^H\mathbf{Gs}-\kappa)\}.
\end{align}
The KKT's are expressed as, $\tfrac{\partial\mathcal{L}(\boldsymbol{\alpha},\gamma_9)}{\partial \boldsymbol{\alpha}}=\mathbf{0}$ and $\mathbf{w}^H\mathbf{Gs}=\kappa$. Noting that $\boldsymbol{\alpha}$ is purely real, we have
\begin{equation}\label{eq56}
\begin{aligned}
\frac{\partial\mathcal{L}(\boldsymbol{\alpha},\gamma_9)}{\partial \boldsymbol{\alpha}}=-&j\sum\limits_{q=1}^Q\mathbf{Z_q}\mathbf{s}\odot \mathbf{s}^{\ast}+j\sum\limits_{q=1}^Q\mathbf{Z}_{\bf q}^{\ast}\mathbf{s}^{\ast}\odot \mathbf{s} \\
+&\mbox{Im}\{ \gamma_9^{\ast}(\mathbf{w}^H\mathbf{G} )^T\odot \mathbf{s}\} =\mathbf{0}.
\end{aligned}
\end{equation}
The above equation can be simplified as, $\mbox{Im}\{ \sum\limits_{q=1}^Q\mathbf{Z}_{\bf q}^{\ast}\mathbf{s}^{\ast}\odot \mathbf{s} -\frac{\gamma_9^{\ast}}{2}(\mathbf{w}^H\mathbf{G})^T\odot \mathbf{s}\}$. Using this in \eqref{eq56}, and taking the complex conjugate, while absorbing the negative sign into the constant $\gamma_9$\footnote{new $\gamma_9$=old $-\gamma_9$.}, we have the KKTs in final form expressed as
\begin{subequations}\label{eq57}
\begin{align} 
\mbox{Im}\left\{ \left(\sum\limits_{q=1}^Q\mathbf{Z_q}(\mathbf{w}) \mathbf{s}_o+\frac{\gamma_9}{2} \mathbf{G}^H \mathbf{w} \right)\odot \mathbf{s}_o^* \right\}&=\mathbf{0} \\
\mathbf{w}^H\mathbf{G}\mathbf{s}_o&=\kappa
\end{align}
\end{subequations}
where $\mathbf{0}$ is a column vector of all zeros and of dimension $N$. The optimal solution, $\mathbf{s}_o$, is a  function of the optimal $\boldsymbol{\alpha}_o$. This relationship although evident from \eqref{eq53} is not explicitly stressed in \eqref{eq57} for notational succinctness. 
Define $\mathbf{Z}_{\bf Q}:=\sum\limits_{q=1}^Q\mathbf{Z_q}(\mathbf{w})$ and let $z_{ij},i=1,2,\ldots,N,j=1,2\ldots,N$ be the $ij$-th element of $\mathbf{Z}_{\bf Q}$. Noting that $\mathbf{Z}_{\bf Q}$ is Hermitian, we also have $\mbox{Im}\{z_{ii}\}=0,\,\forall i$, $z_{ji}=z_{ij}^{\ast}$.

\begin{prop}\label{propos6}
The Lagrange parameter $\gamma_9=0$ solves \eqref{eq57}.
\end{prop}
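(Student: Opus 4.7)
My strategy mirrors the approach taken in Proposition~\ref{propos3}. Substituting $\gamma_9 = 0$ into the first KKT equation collapses \eqref{eq57}(a) to
\[
\mathrm{Im}\left\{ \bigl( \mathbf{Z}_{\bf Q}\mathbf{s}_o \bigr) \odot \mathbf{s}_o^{\ast} \right\} = \mathbf{0}.
\]
Writing $s_{oi} = \rho e^{j\alpha_i}$, the $i$-th entry becomes $\rho^2 \sum_{j \neq i} |z_{ij}| \sin\bigl(\angle z_{ij} + \alpha_j - \alpha_i\bigr)$, since the diagonal entries $z_{ii}$ are real (as $\mathbf{Z}_{\bf Q}$ is Hermitian, being a sum of PSD matrices by Rem.~\ref{PSDremark}). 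First I would show that this reduced system is precisely the stationarity condition of the objective $\mathbf{w}^H\mathbf{R_u}(\mathbf{s})\mathbf{w}$ with respect to the free phases $\boldsymbol{\alpha}$ when the Capon term in the Lagrangian \eqref{eq54} is switched off. Because this objective is continuous on the compact torus $\{|s_i|=\rho\}_{i=1}^N$, a global minimizer exists and satisfies the reduced equation by Fermat's rule.

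Next, I would exploit two symmetries to absorb the Capon constraint \eqref{eq57}(b) without invoking a nonzero multiplier. The reduced stationarity equation is invariant under the global phase rotation $\alpha_i \mapsto \alpha_i + c$, since $\bigl(\mathbf{Z}_{\bf Q}(e^{jc}\mathbf{s}_o)\bigr)\odot (e^{jc}\mathbf{s}_o)^{\ast} = \bigl(\mathbf{Z}_{\bf Q}\mathbf{s}_o\bigr)\odot \mathbf{s}_o^{\ast}$, while $\mathbf{w}^H\mathbf{G}\mathbf{s}_o$ rotates by $e^{jc}$. Choosing $c$ to align this scalar to the positive real axis eliminates its imaginary part, and the magnitude condition $|\mathbf{w}^H\mathbf{G}\mathbf{s}_o| = \kappa$ is then accommodated by the free modulus parameter $\rho$, which (see the discussion following \eqref{eq52}) was explicitly introduced as a scaling knob for hardware limits. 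This parallels the \emph{inactive constraint} interpretation from Prop.~\ref{propos3}: the equality is absorbed by a free design scalar, so the corresponding multiplier may be taken to be zero.

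The main obstacle I anticipate is uniqueness. Unlike Prop.~\ref{propos3}, where a one-dimensional monotonicity argument pinned down $\bar{\gamma}_2 = 0$, here the analogous argument would require ruling out spurious local minima of the nonconvex phase landscape $\boldsymbol{\alpha} \mapsto \mathbf{w}^H\mathbf{R_u}(\mathbf{s}(\boldsymbol{\alpha}))\mathbf{w}$ that might force $\gamma_9 \neq 0$. Establishing global uniqueness (up to global phase) of the constant-modulus minimizer of a generic Hermitian form is notoriously hard—it is a close cousin of phase-retrieval non-convexity—so as in Rem.~\ref{propos5}, I expect the rigorous claim will rely on numerical evidence combined with the above constructive argument, rather than a closed-form analytical proof. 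Once the candidate $(\mathbf{s}_o,\gamma_9{=}0)$ is produced via the symmetry-and-scaling construction, a back-substitution into \eqref{eq57}(a)(b) finishes the verification.
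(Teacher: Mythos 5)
Your argument is a genuinely different route from the paper's, and it establishes something weaker than what the proposition is needed for. Your first reduction (the entrywise form of \eqref{eq57}(a) at $\gamma_9=0$, its identification with the phase-only stationarity condition, the gauge invariance under $\alpha_i\mapsto\alpha_i+c$, and the degree-one homogeneity of the Capon term in $\rho$) is all correct. But what it delivers is an \emph{existence} statement: one particular KKT point of \eqref{eq53} that happens to carry $\gamma_9=0$, obtained by minimizing over the compact torus, rotating $\mathbf{w}^H\mathbf{G}\mathbf{s}_o$ onto the positive real axis, and rescaling $\rho$ (and even this silently assumes $\mathbf{w}^H\mathbf{G}\mathbf{s}_o\neq 0$ at the torus minimizer, without which no choice of $\rho$ can reach $\kappa\neq 0$). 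The paper's proof is a \emph{necessity} statement about every solution of \eqref{eq57}, and that is what licenses the subsequent step of setting $\gamma_9=0$ and characterizing the entire optimal set $\mathcal{C}_{\boldsymbol{\alpha}^{\bf o}}$ in \eqref{eq59}. You concede exactly this gap when you say uniqueness would have to be settled numerically; the paper does not take that route.

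The key idea you are missing is a summation trick. Writing the $i$-th scalar equation of \eqref{eq57}(a) in the form \eqref{eq58} and summing over $i=1,\ldots,N$, the Hermitian symmetry $z_{ji}=z_{ij}^{\ast}$ gives $\mathrm{Re}\{z_{ji}\}=\mathrm{Re}\{z_{ij}\}$ and $\mathrm{Im}\{z_{ji}\}=-\mathrm{Im}\{z_{ij}\}$, so the $(i,j)$ and $(j,i)$ contributions on the left cancel pairwise (the sine factor is odd and the cosine factor is even in $\alpha_j^o-\alpha_i^o$) and the entire left-hand side vanishes identically, for arbitrary phases. What survives is $\mathrm{Im}\bigl\{\gamma_9\sum_{i}u_i e^{-j\alpha_i^o}\bigr\}=0$, and the Capon constraint \eqref{eq57}(b) forces $\sum_i u_i e^{-j\alpha_i^o}=\kappa/\rho$ to be real and nonzero, whence $\mathrm{Im}\{\gamma_9\}=0$ at \emph{every} KKT point; the paper then argues the real part must also vanish. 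None of this appears in your proposal, and without it you cannot exclude KKT points of \eqref{eq53} --- possibly including the actual minimizer --- with $\gamma_9\neq 0$, which is precisely what the proposition is invoked to rule out.
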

\begin{proof}
For any $z\in\mathbbm{C}$, and any $\theta\in [0,2\pi]$, we have $\mbox{Im}\{ z\exp(j\theta)\}=\mbox{Re}\{z\}\sin(\theta)+\mbox{Im}\{z\}\cos(\theta)$. Using this and the fact that $\mathbf{Z}_{\bf Q}=\mathbf{Z}_{\bf Q}^H$, the $i$-th equation in \eqref{eq57}(a) can be simplified as
\begin{equation}\label{eq58}
\begin{aligned}
&2\rho
\big( \sum\limits_{j=1, j\neq i}^{N} \mbox{Re}\{ z_{ij}\}\sin(\alpha_{j}^o-\alpha_{i}^o )+\mbox{Im}\{z_{ij} \} \cos( \alpha_{j}^o-\alpha_{i}^o )\big)\\
&=\mbox{Im}\{ \gamma_9u_i\exp(-j\alpha_{i}^o)\}, i=1,2,\ldots,N
\end{aligned}
\end{equation}
where $u_i$ is the $i$-th element of $\mathbf{u}=\mathbf{G}^H\mathbf{w}$  Adding the $N$ equations in \eqref{eq58}, it easily seen that $\sum \limits_{i=1}^N\mbox{Im}\{ \gamma_9u_i\exp(-j\alpha_i^o\}=0$ but we know from \eqref{eq57}(b) that $\rho\sum \limits_{i=1}^N u_i\exp(-j\alpha_i^o)=\kappa$, where $\kappa\in\mathbbm{R}$. Therefore this implies that $\mbox{Im}\{\gamma_9\}=0$ or in other words, $\gamma_9 $ is purely real. Substituting this back into \eqref{eq57}(a) and following the same arguments as before, this is possible if trivially $\rho=0$ or $\gamma_9=0$, the former is false since $\rho=0$ does not solve \eqref{eq57}(b), therefore the latter must be true.
\end{proof}

{\bf  Interpretation of $\gamma_9=0$}. With $\gamma_9=0$, from \eqref{eq57}(a) we have that
\begin{equation} \label{finalkkts_cm}
\begin{aligned}
\mbox{Im}\left\{ \sum\limits_{q=1}^Q\mathbf{Z_q}(\mathbf{w}) \mathbf{s}_o \right\}&=\mathbf{0} \\
\mathbf{w}^H\mathbf{G}\mathbf{s}_o&=\kappa.
\end{aligned}
\end{equation}
The first equation in \eqref{finalkkts_cm} does not depend on $\rho$, but the second does. Therefore $\gamma_9=0$ does not imply that the constraint in \eqref{eq53} is inactive. Rather, this implies that the KKTs enforce the Capon constraint in \eqref{eq53} for the constant modulus waveform by varying the {\it unspecified} modulus parameter $\rho$. 

The result in Prop.~\ref{propos6} has some very interesting consequences. Using $\gamma_9=0$, the $N$ equations in \eqref{eq58} and therefore \eqref{eq57}(a), can be rewritten as a some linear matrix equation $\bar{\mathbf{Z}}_{\bf Q}\mathbf{p}_{\boldsymbol{\alpha}_{\bf o}}=\mathbf{0}$, where $\bar{\mathbf{Z}}_{\bf Q}\in\mathbbm{R}^{N\times\binom{N}{2}}$ and the vector $\mathbf{p}_{\boldsymbol{\alpha}_{\bf o}}=[\sin(\alpha_{2}^o-\alpha_{1}^o) ,\sin(\alpha_{3}^o-\alpha_{1}^o) \ldots, \sin(\alpha_{N}^o-\alpha_{N-1}^o), \cos(\alpha_{2}^o-\alpha_{1}^o),\ldots,\cos(\alpha_{N}^o-\alpha_{N-1}^o) ]^T$ i.e. has $\binom{N}{2}$ components consisting of sines and cosines of all possible differences of $\alpha_i^o-\alpha_j^o,\forall i, \forall j\neq i$. In other words, $\mathbf{p}_{\boldsymbol{\alpha}^{\bf o}}\in \mbox{null}\big( \bar{\mathbf{Z}}_{\bf Q}\big)$. The rank of $\bar{\mathbf{Z}}_{\bf Q}$ is not easy to calculate here but its maximum value is $N$. Therefore from the rank-nullity theorem, $\dim(\mbox{null}(\bar{\mathbf{Z}}_{\bf Q}))\geq N(N-2)$. Clearly there could exist multiple vectors which are in this null space but we are not certain if this translates to multiple solutions of $\boldsymbol{\alpha}_o$ from this linear equation alone. Nonetheless, if multiple solutions exist to  this linear equation, they must also satisfy \eqref{eq57}(b) to be considered as  a solution to \eqref{eq53}. In any case the optimal solution(s) are in, $\mathcal{C}_{\boldsymbol{\alpha}^{\bf o} }\subset\mathbbm{R}^{N}$, with
\begin{align}\label{eq59}
\mathcal{C}_{\boldsymbol{\alpha}^{\bf o} }=\{\boldsymbol{\alpha}^o:\mathbf{p}_{\boldsymbol{\alpha}_{\bf o}} \in \mbox{null}( \bar{\mathbf{Z}}_{\bf Q}),\sum\limits_{i=1}^N u_i^{\ast}\exp(j \alpha_i^o)=\frac{\kappa}{\rho} \}. 
\end{align}
It remains to be seen if $\mathcal{C}_{\boldsymbol{\alpha}^{\bf o} }$ is  singleton, or comprises many elements, but we are optimistic that it would not turn out to be empty.
\subsection{Practical Considerations: Classical STAP v.s Waveform adaptive STAP}
Here we addresses practical considerations on the fast time-slow time model in STAP which aids in the waveform design and compare this with the classical model in STAP (slow time).

{\bf Hardware} The fast-time slow-time model in STAP does not necessitate newer hardware nor does it require any modifications to the existing hardware. It does however assume that the current state-of-art permits arbitrary waveform generation and adaptive transmitting capabilities \cite{cochran2009waveform}.

{\bf Computational complexity} The inclusion of the waveform causes the correlation matrices to have larger dimension. Inverting large matrices are computationally prohibitive. Classical STAP requires inverting a complex $ML\times ML$ matrix which has a complexity of $O((ML)^{2.373})$-$O((ML)^{3})$ \cite{VirginiaWilliams2012}. Waveform adaptive STAP requires inverting complex $NML\times NML $ complex matrices which has a computational complexity of $O((NML)^{2.373})$-$O((NML)^{3})$ \cite{VirginiaWilliams2012}.

{\bf Training data} Due to the larger dimensions of the correlation matrices by inclusion of the waveform, it suddenly appears, albeit deceivingly, that more training data (from more neighboring range cells) are needed to estimate the correlation matrices. This is not true since inclusion of waveform simply includes the fast time samples. Hence the fast-time slow-time model uses the raw data prior to pulse compression or matched filtering, hence the training data requirements is identical to that required in the classical STAP case. Note that we are not interested in resolving targets within the pulse duration but rather outside it.


\begin{figure*}[htbp!]
\centering
  \subfloat[ ] { \includegraphics[scale=0.5]{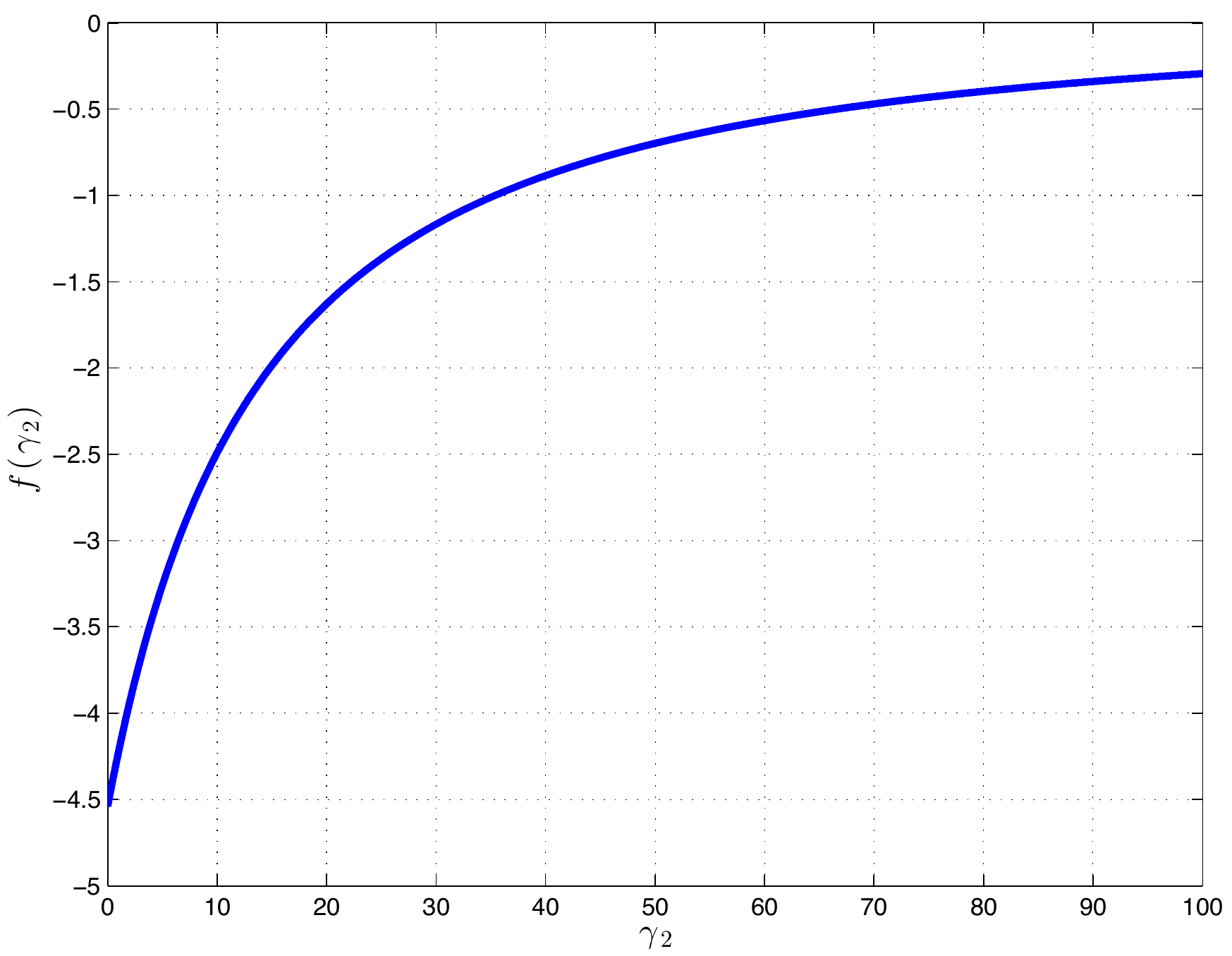} } 
  \subfloat [ ]{ \includegraphics[scale=0.5]{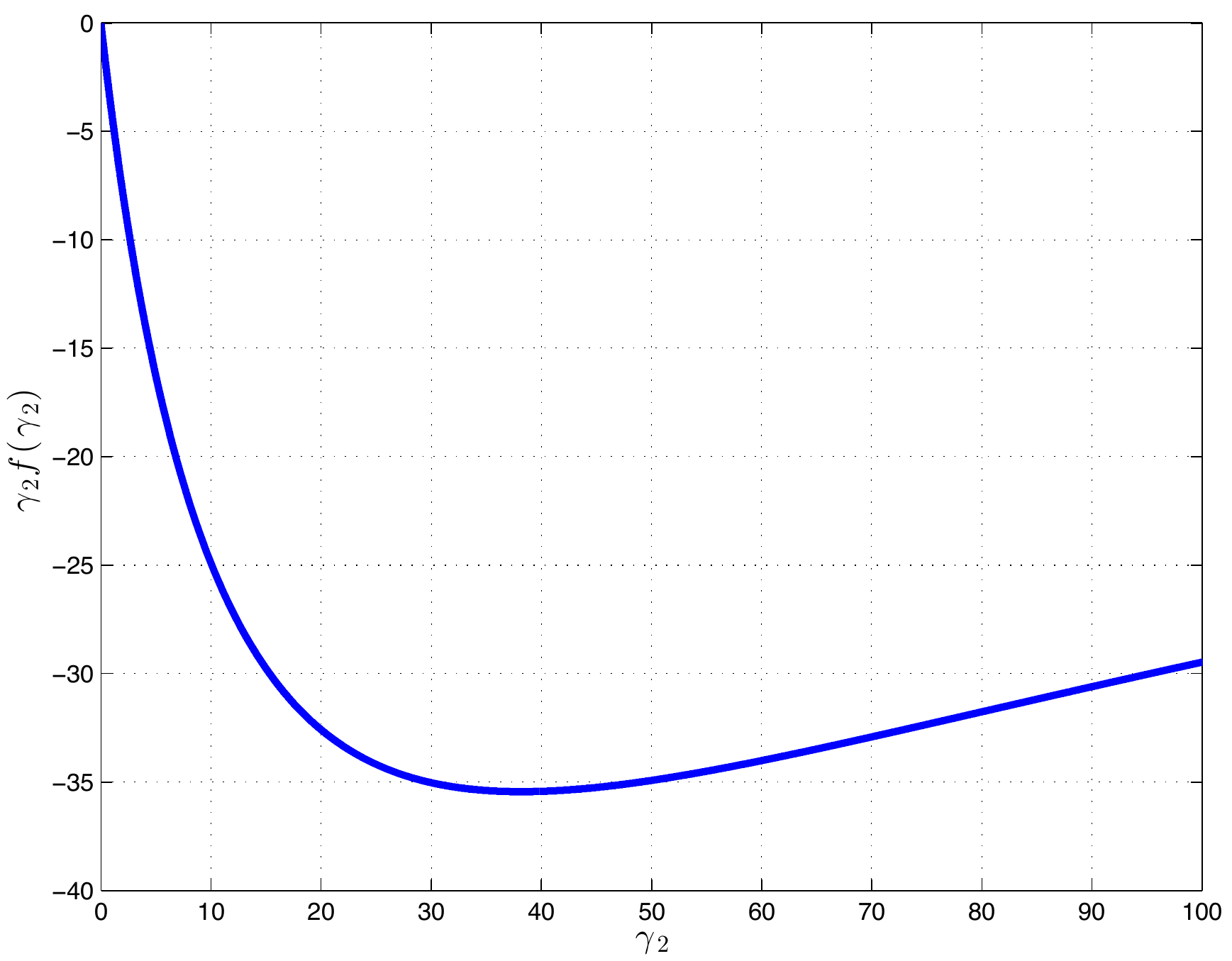} } \\
  \subfloat[ ] { \includegraphics[scale=0.5]{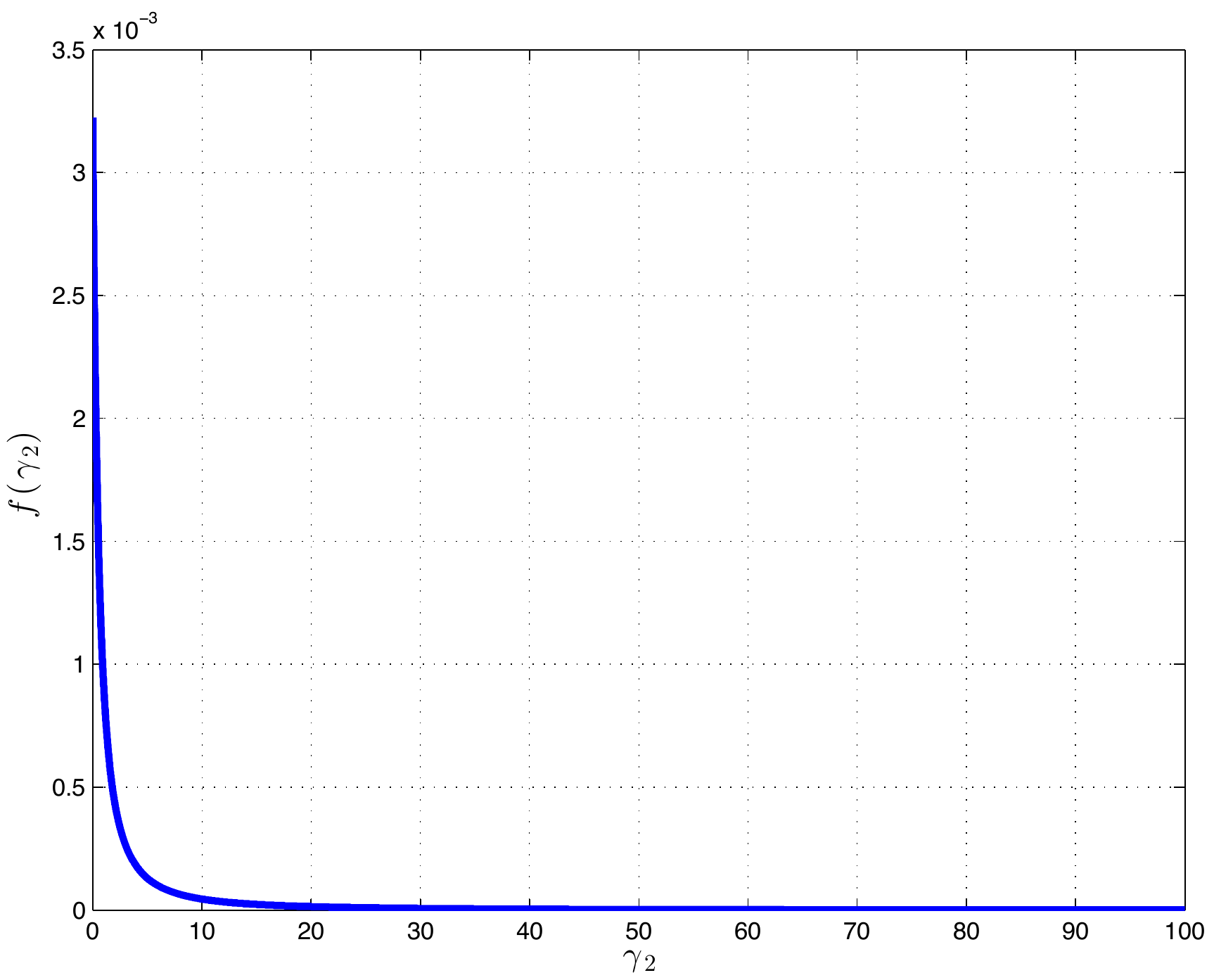} } 
  \subfloat [ ]{ \includegraphics[scale=0.5]{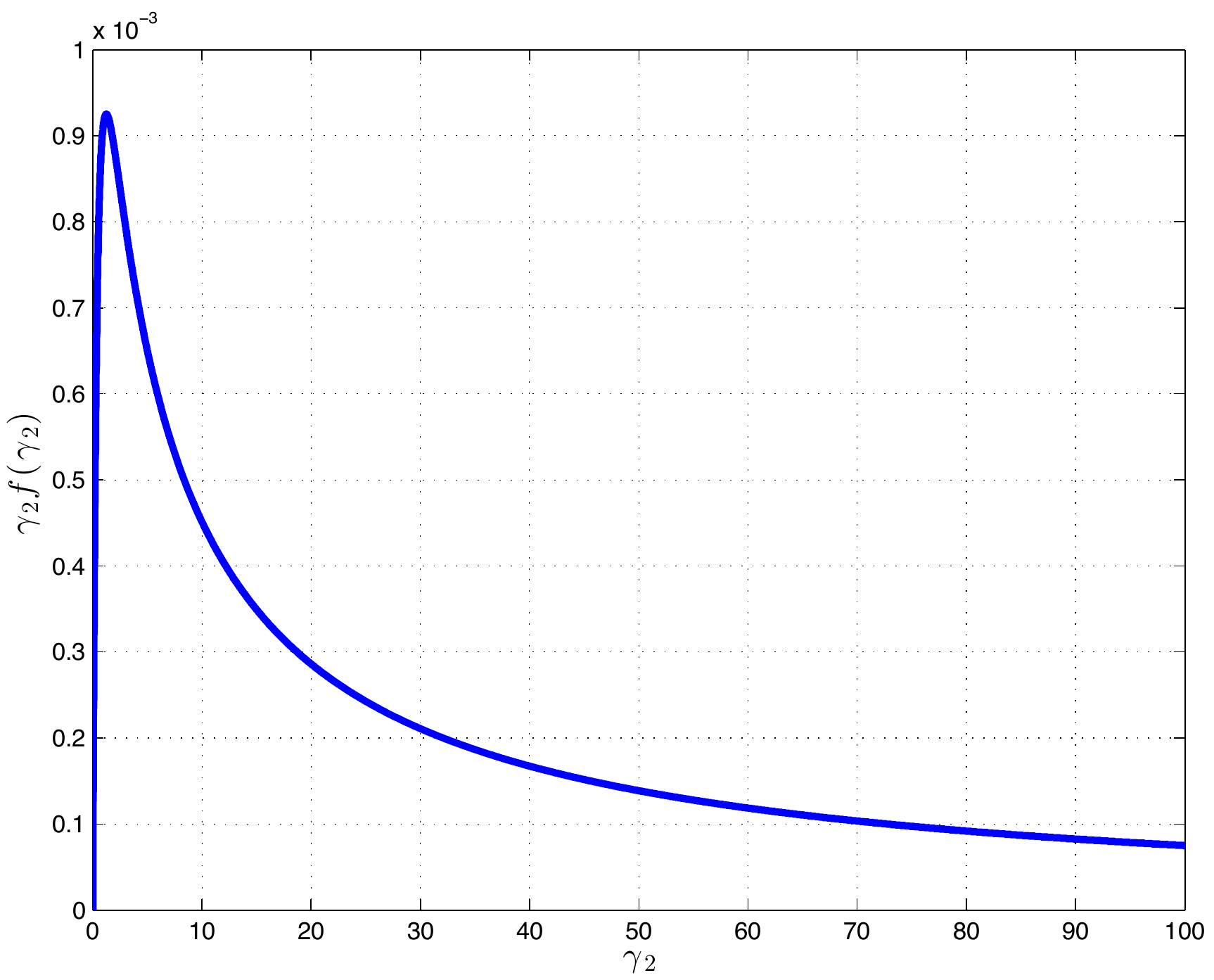} } \\
\caption{ Simulations supporting Prop.~\ref{propos3}, x-axis $\gamma_2$. Monotone increasing (a) $f(\gamma_2)$ and corresponding (b) $\gamma_2f(\gamma_2)$. Monotone decreasing (c) $f(\gamma_2)$ and corresponding (d) $\gamma_2f(\gamma_2)$.}
\label{lagrasupport1}
\end{figure*}

\begin{figure*}[htbp!]
\centering
  \subfloat[ ] { \includegraphics[scale=0.5]{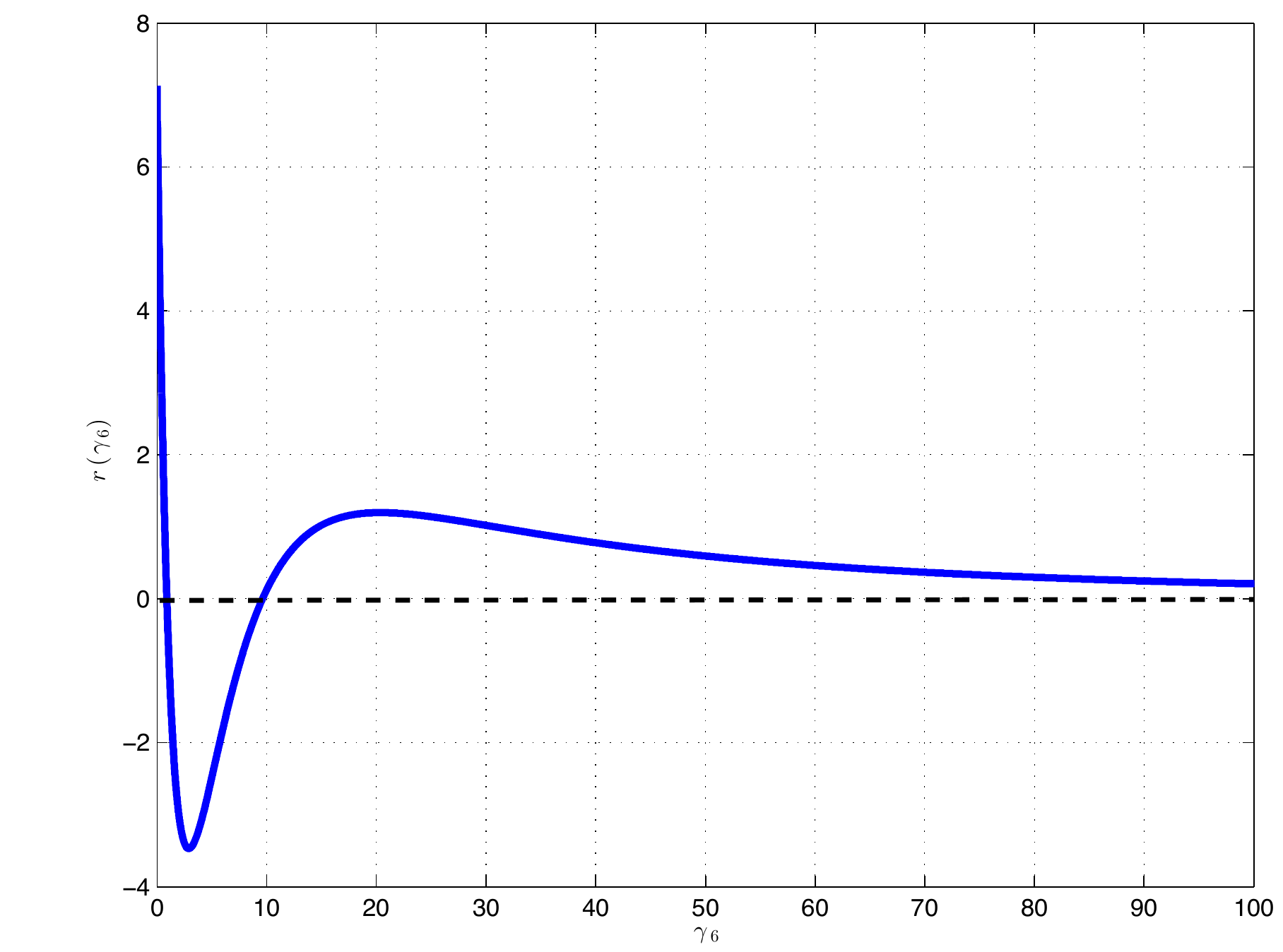} } 
  \subfloat [ ]{ \includegraphics[scale=0.5]{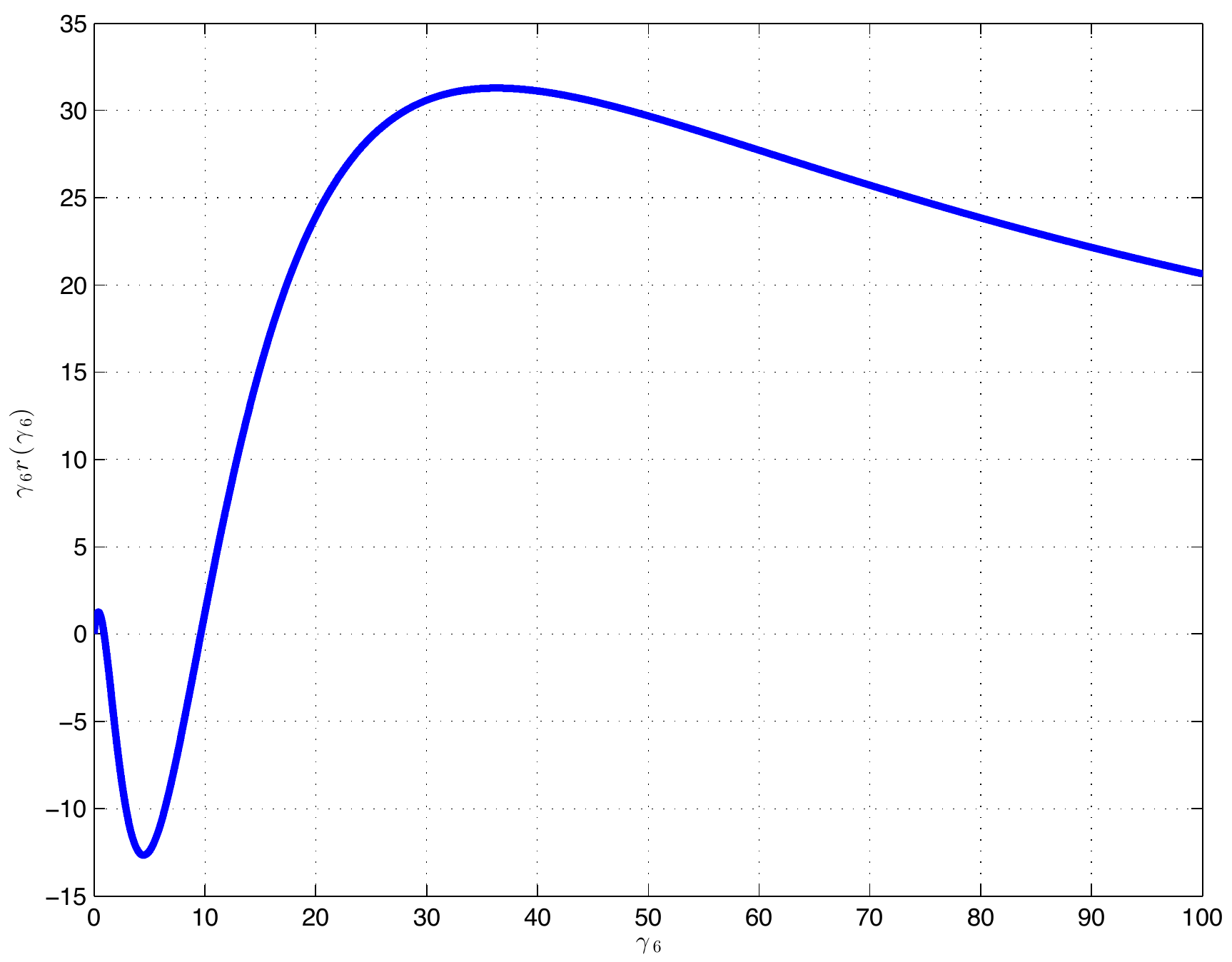} } \\
  \subfloat[ ] { \includegraphics[scale=0.5]{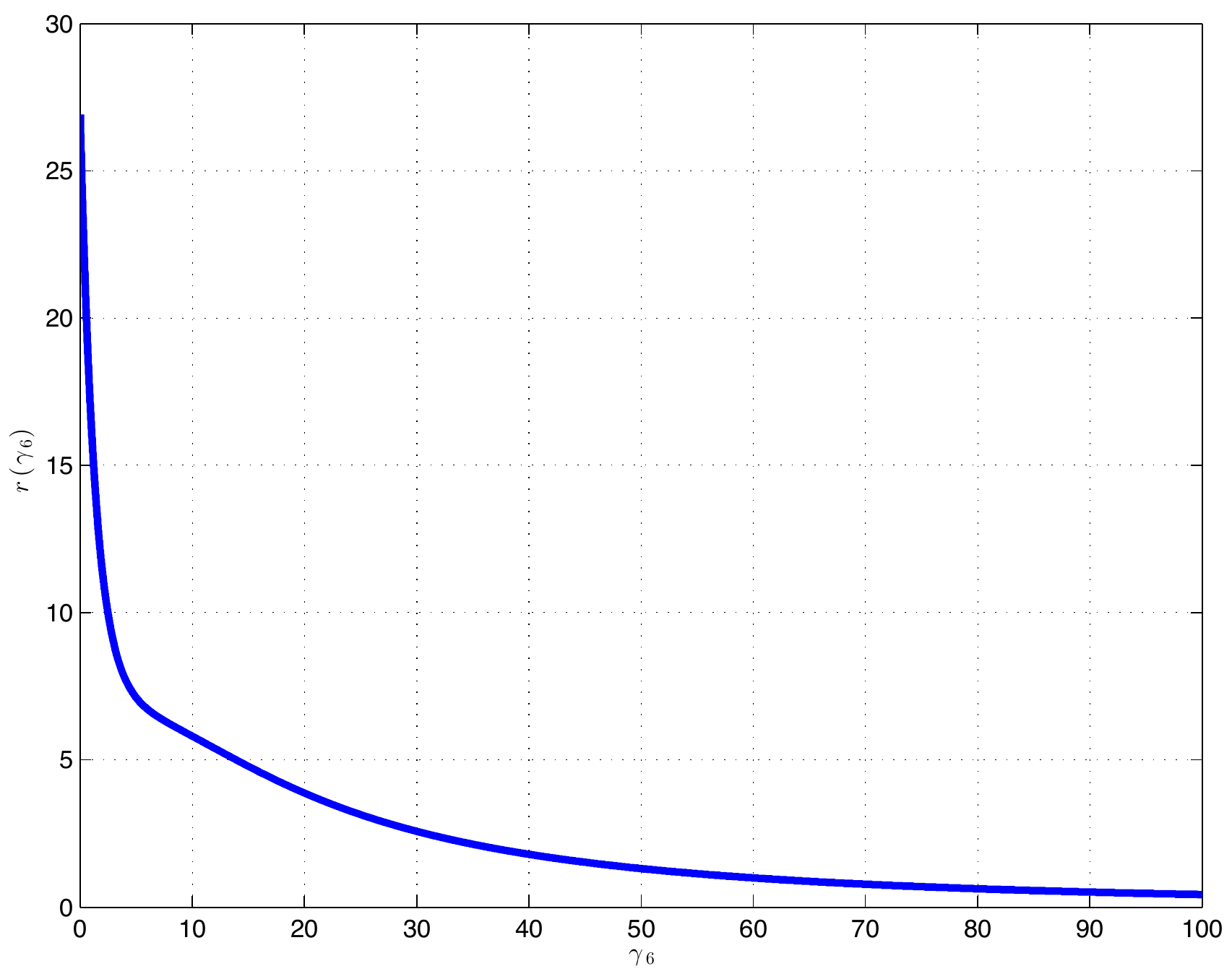} } 
\caption{ Simulations supporting Rem.~\ref{propos5}, x-axis $\gamma_6$. Example showing one zero crossing of (a) $r(\gamma_6)$ and corresponding (b) $\gamma_2r(\gamma_6)$. Monotone decreasing example for $P_o>>\kappa^2$ in (c) $r(\gamma_6)$.}
\label{lagrasupport2}
\end{figure*}

\begin{figure}[htbp!]
\centering
\includegraphics[scale=0.5]{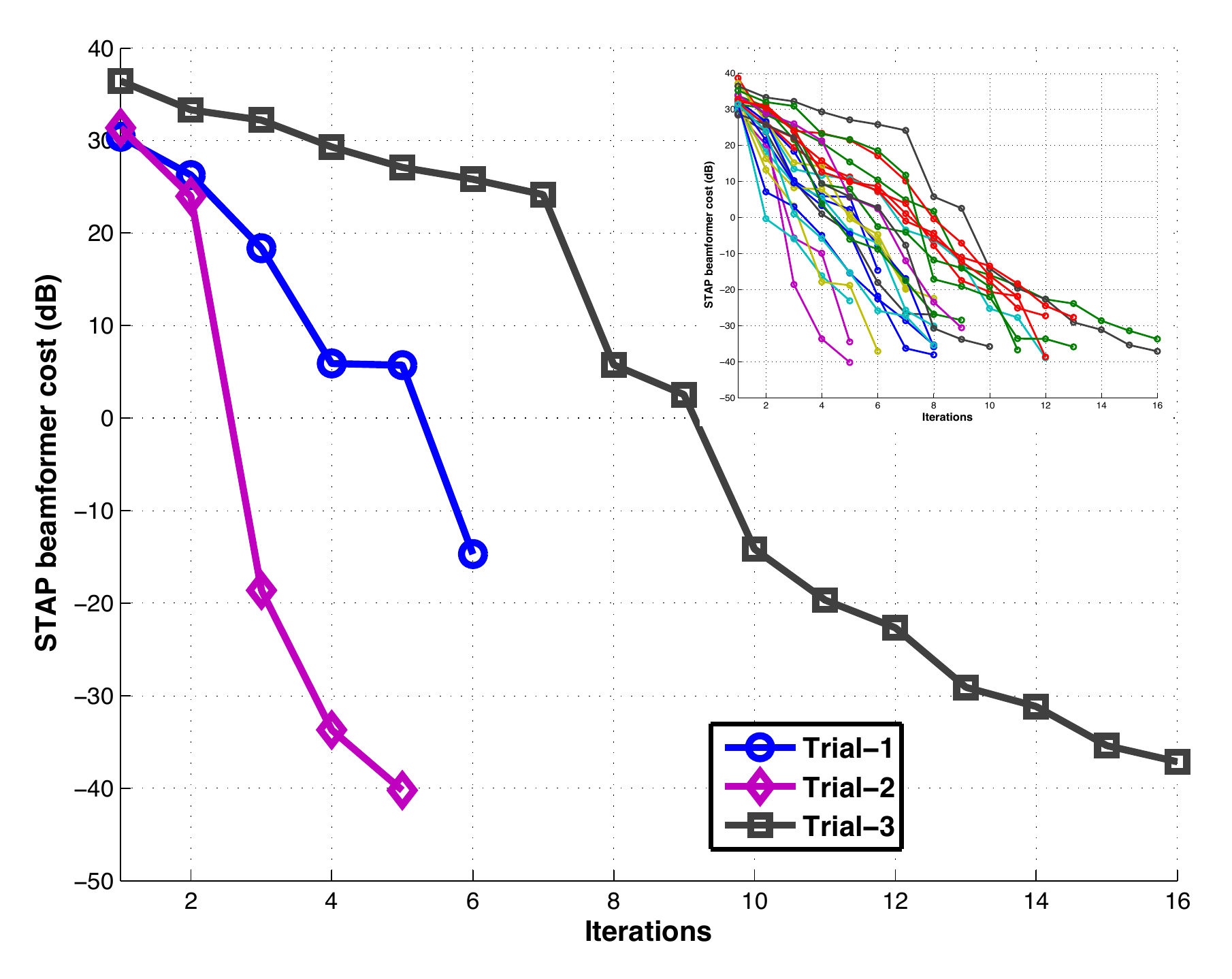}
\caption{Constrained alternating minimization: objective costs vs. iterations for 3 random, independent waveform initializations (inset: for 25 random initializations).}
\label{fig4}
\end{figure}
\begin{figure*}[htbp!]
\centering
  \subfloat[ ] { \includegraphics[scale=0.5]{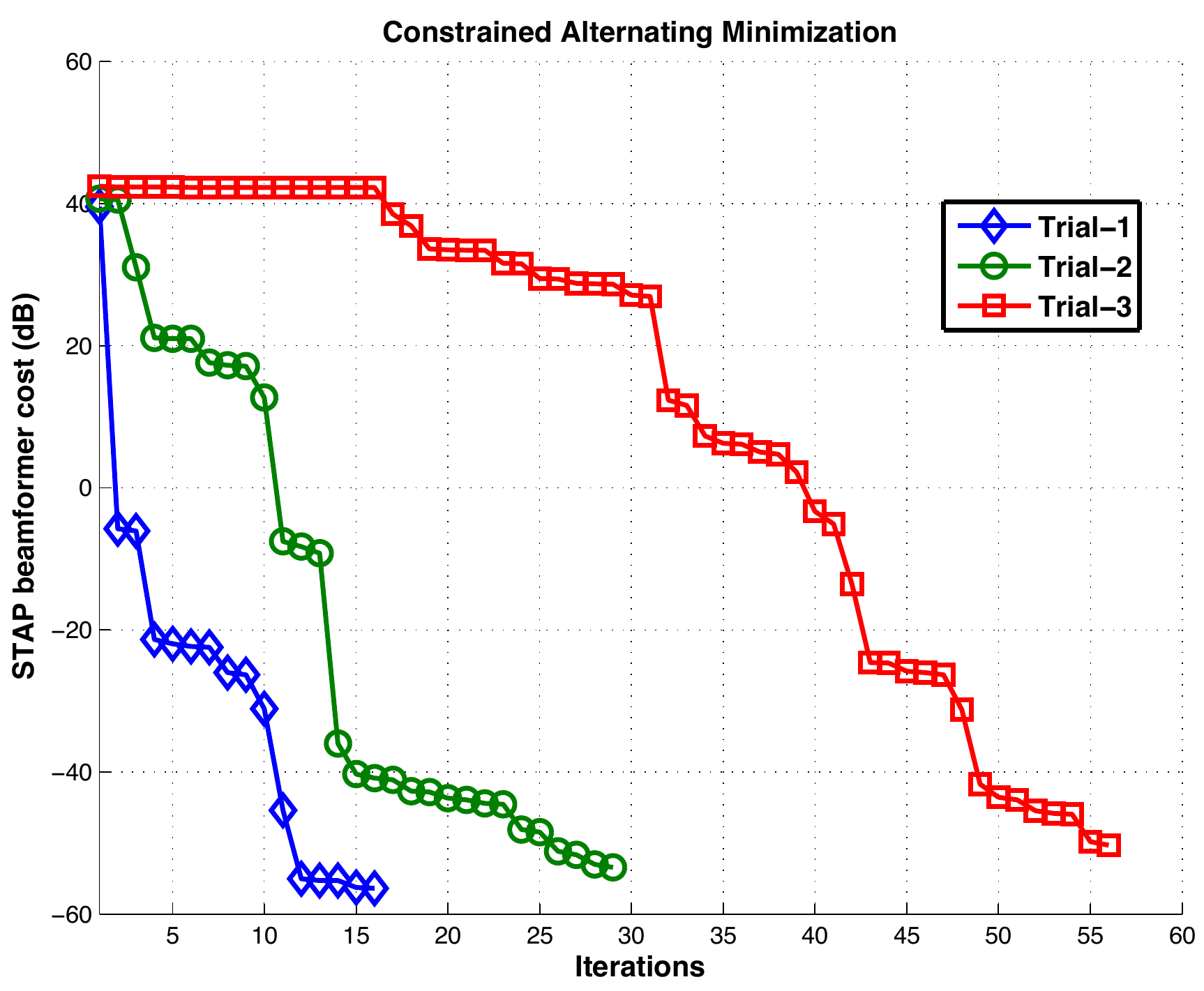} }
  \subfloat [ ]{ \includegraphics[scale=0.5]{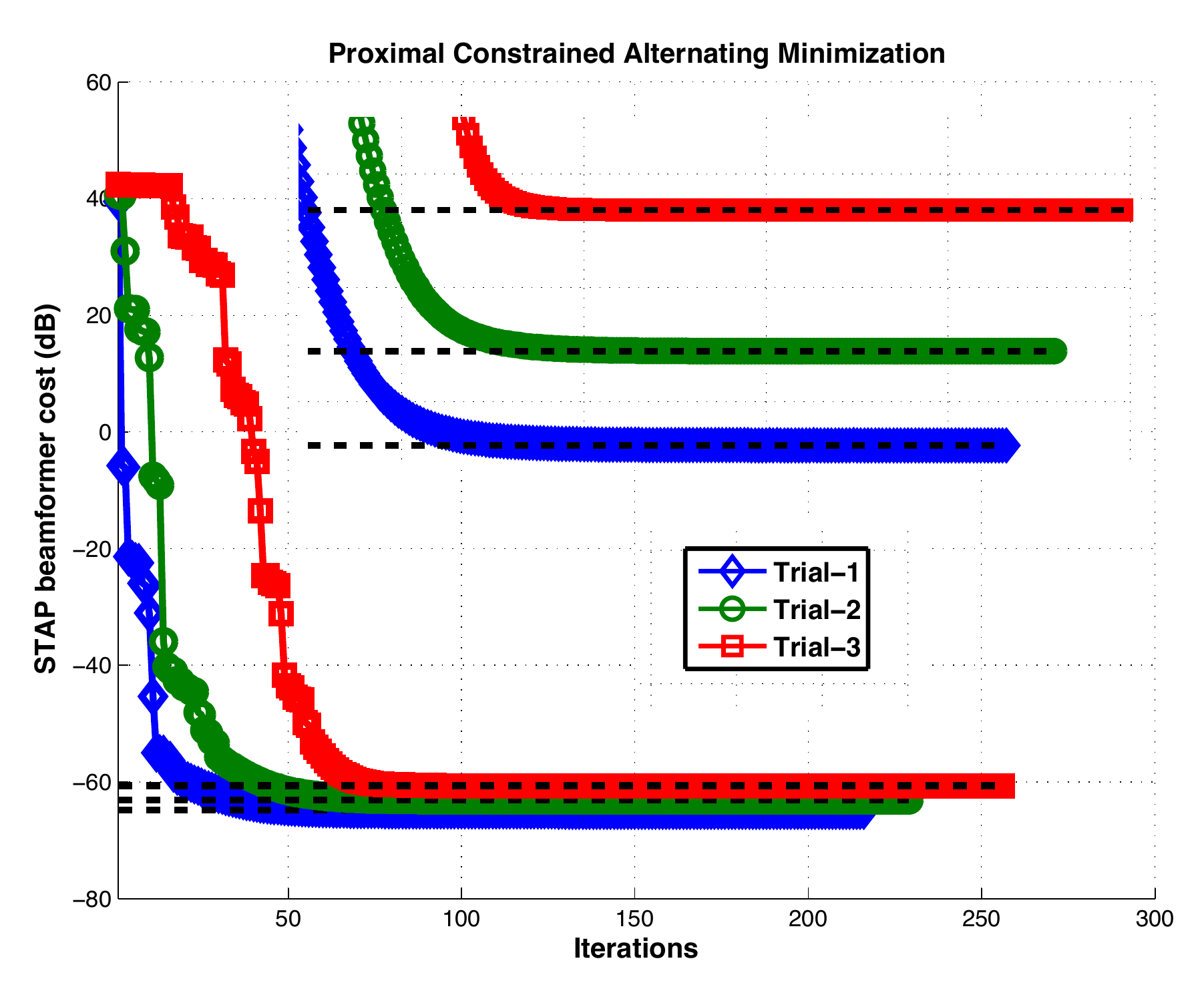} }
\caption{(a) Constrained alternating minimization, (b) Proximal constrained alternating minimization (inset: magnified), minimum eigenvector waveform (dashed black).}
\label{fig5}
\end{figure*}
\begin{figure}[htbp!]
\centering
  \includegraphics[scale=0.5]{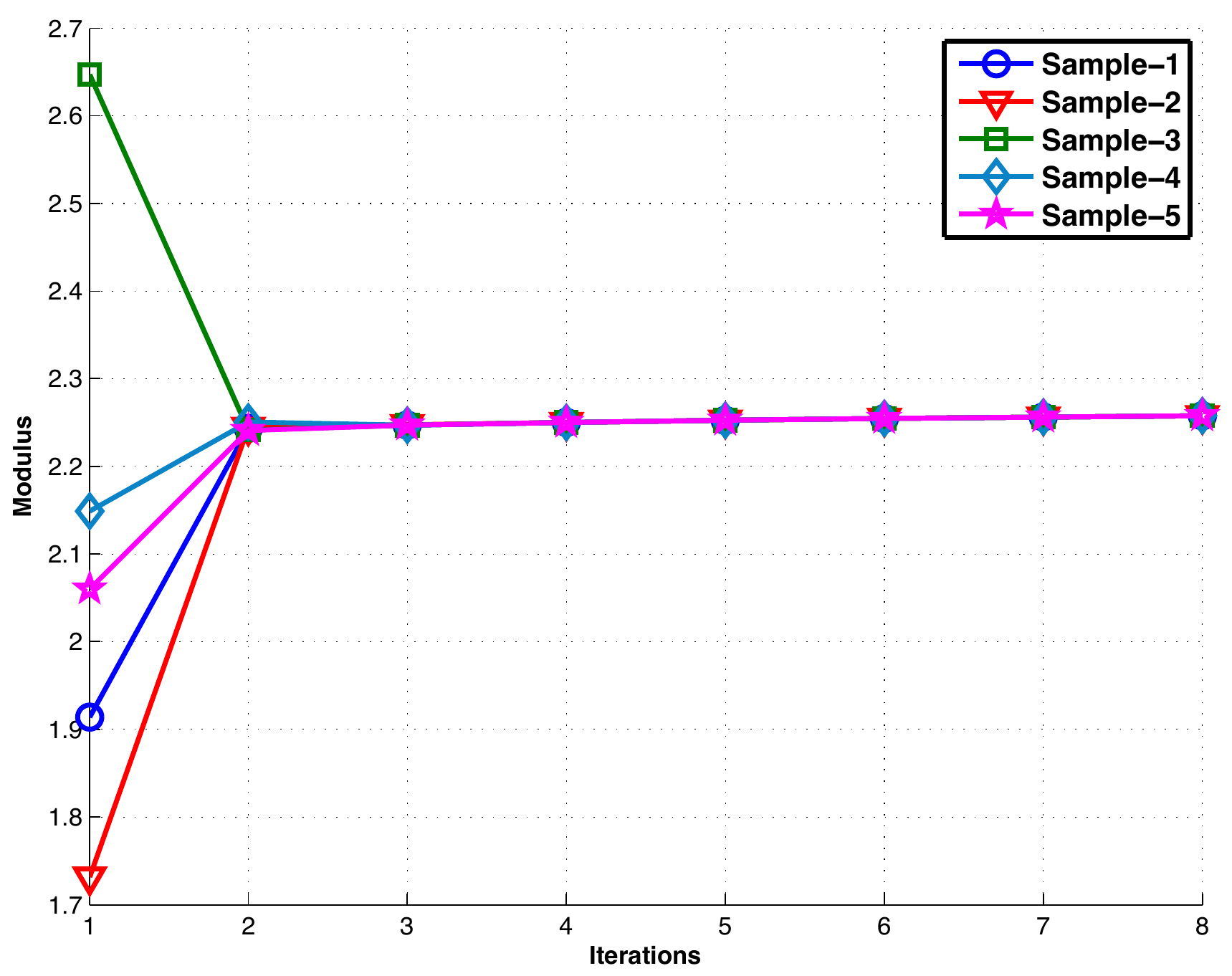} 
\caption{Convergence of non const. mod initial waveform to  a con. mod Con.  waveform:}
\label{fig6}
\end{figure}
\begin{figure*}[htbp!]
\centering
  \subfloat[ ] { \includegraphics[scale=0.5]{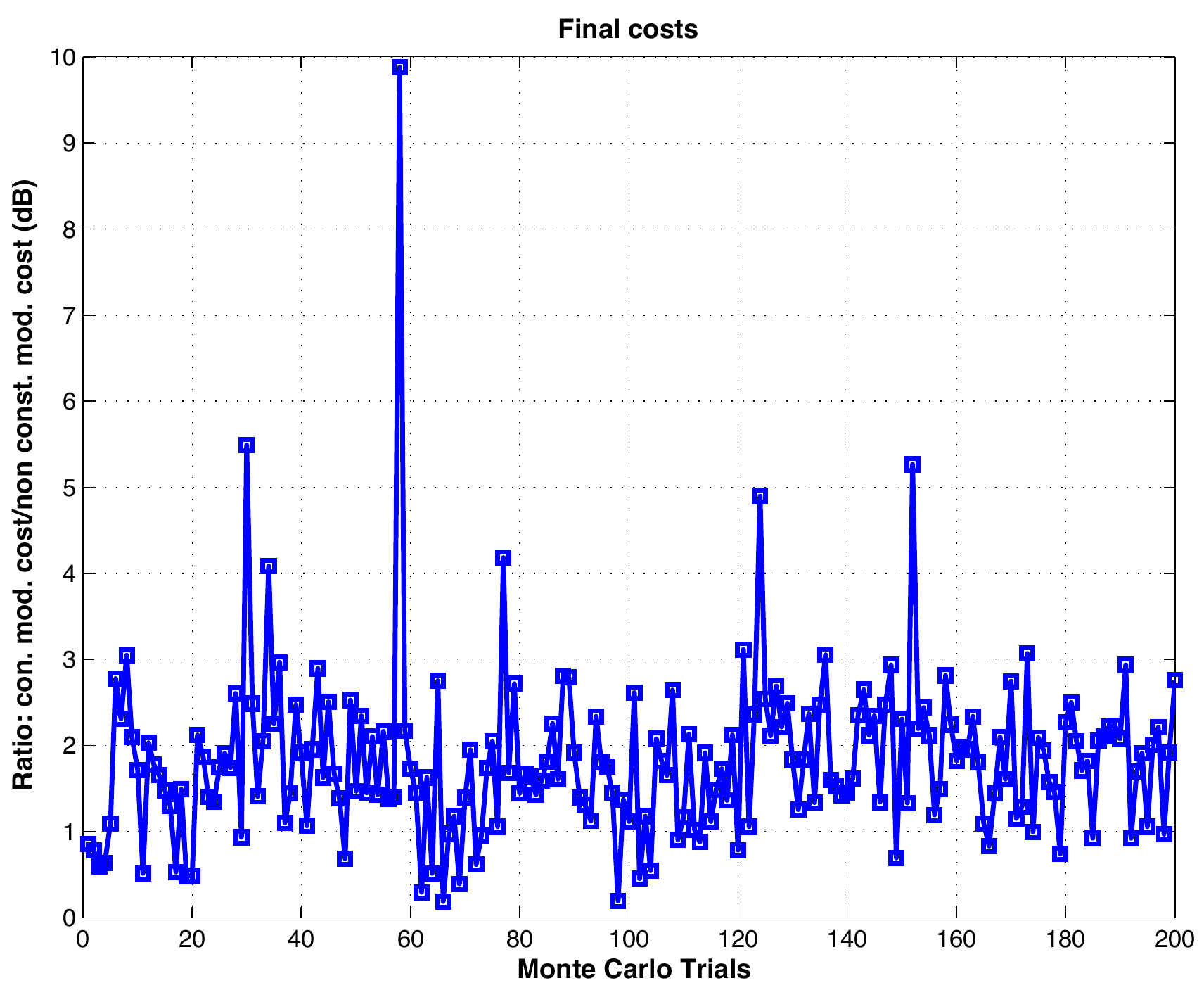}}
  \subfloat [ ]{ \includegraphics[scale=0.5]{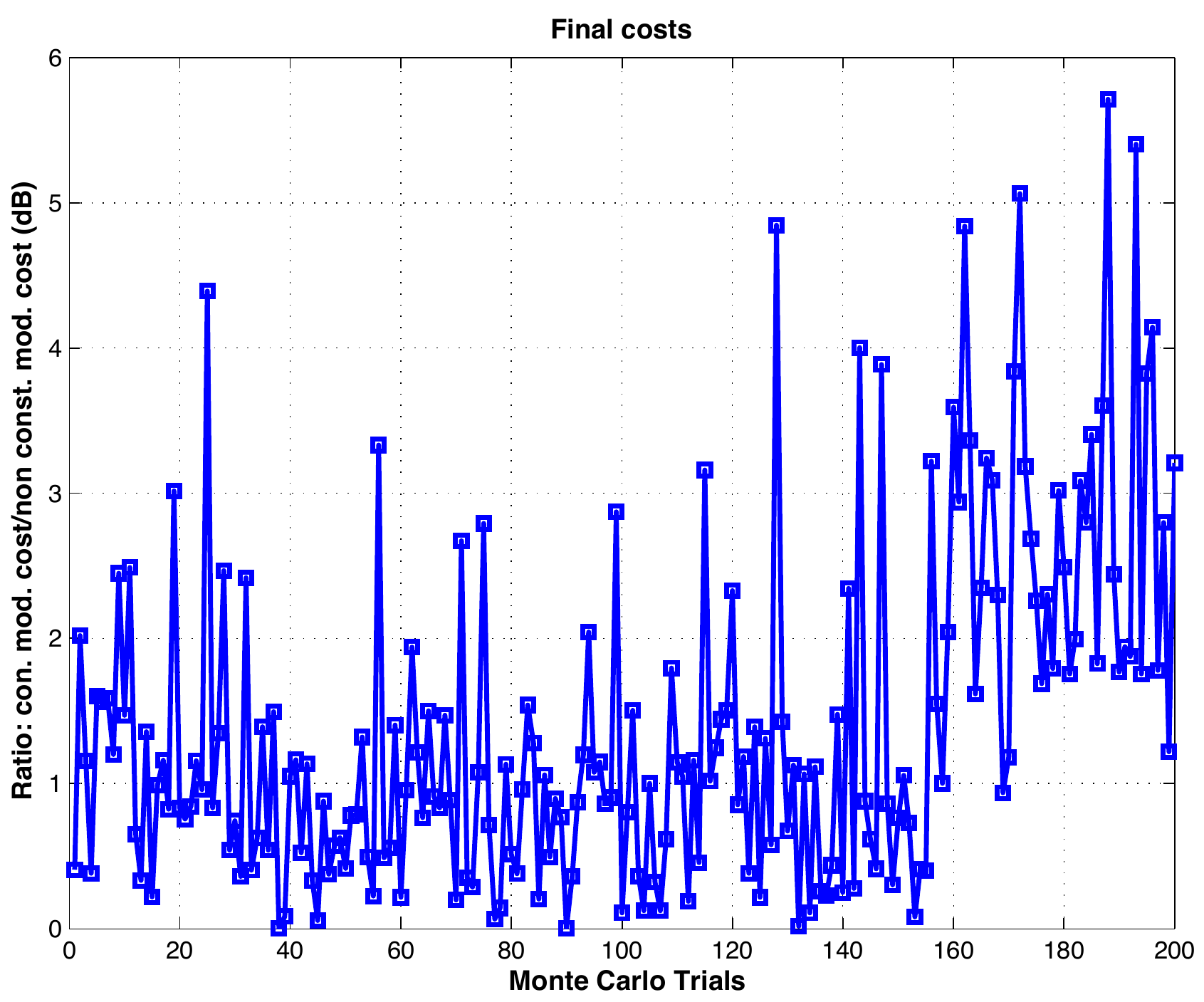} }
\caption{Constant modulus waveform design comparison with non const. mod. design, 200 trials initialized with: (a) random non-const. mod. Gaussian waveforms (b) random unit modulus waveforms, with phase drawn uniformly from $[-\pi,\pi]$.  }
\label{fig7}
\end{figure*}
\begin{figure}[htbp!]
\centering
 \subfloat[ ] { \includegraphics[scale=0.5]{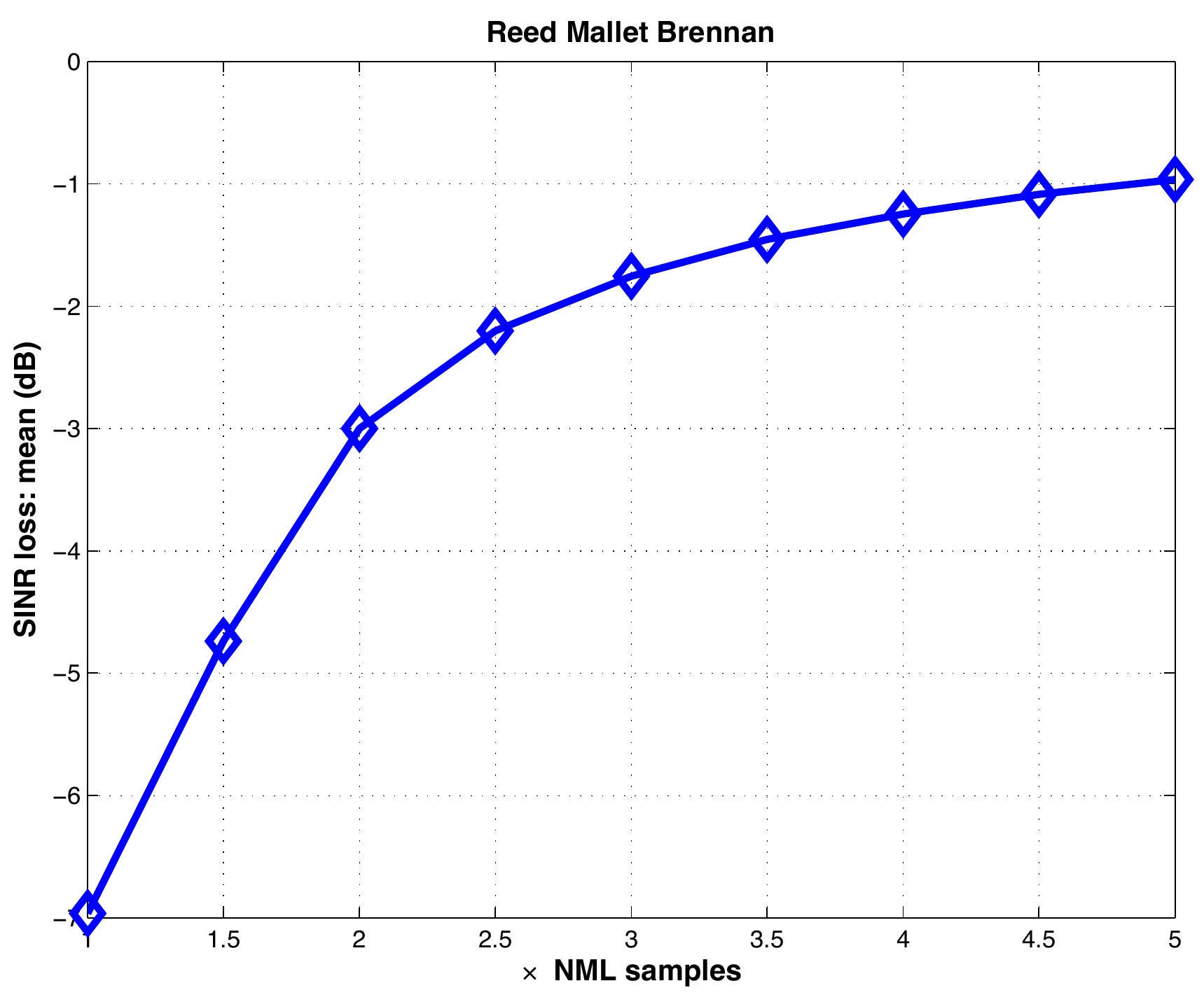} }
  \subfloat [ ]{ \includegraphics[scale=0.5]{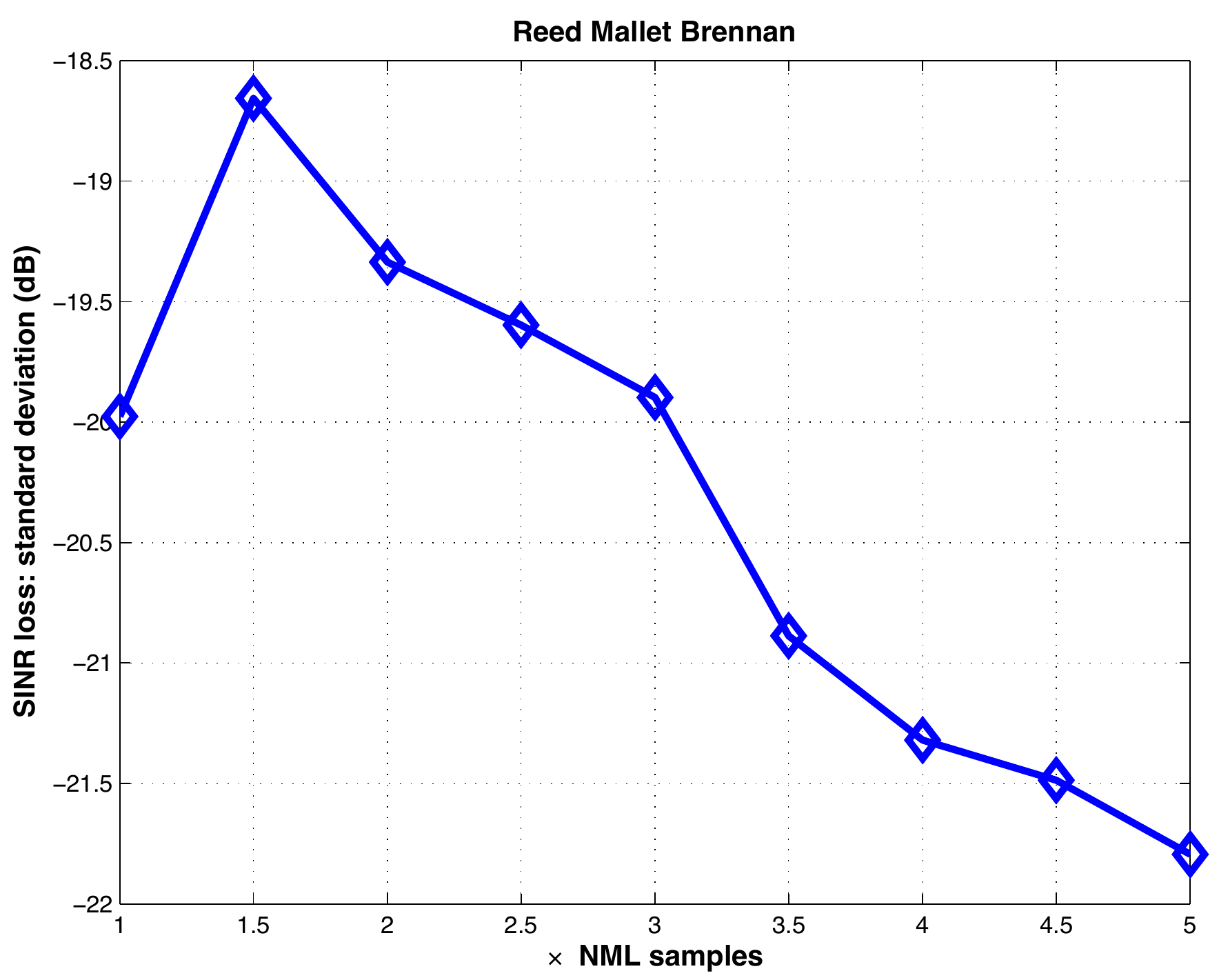} }
\caption{Oracle: Reed Mallet Brennan rule.}
\label{fig8}
\end{figure}
\begin{figure}[htbp!]
\centering
  \subfloat[ ] { \includegraphics[scale=0.5]{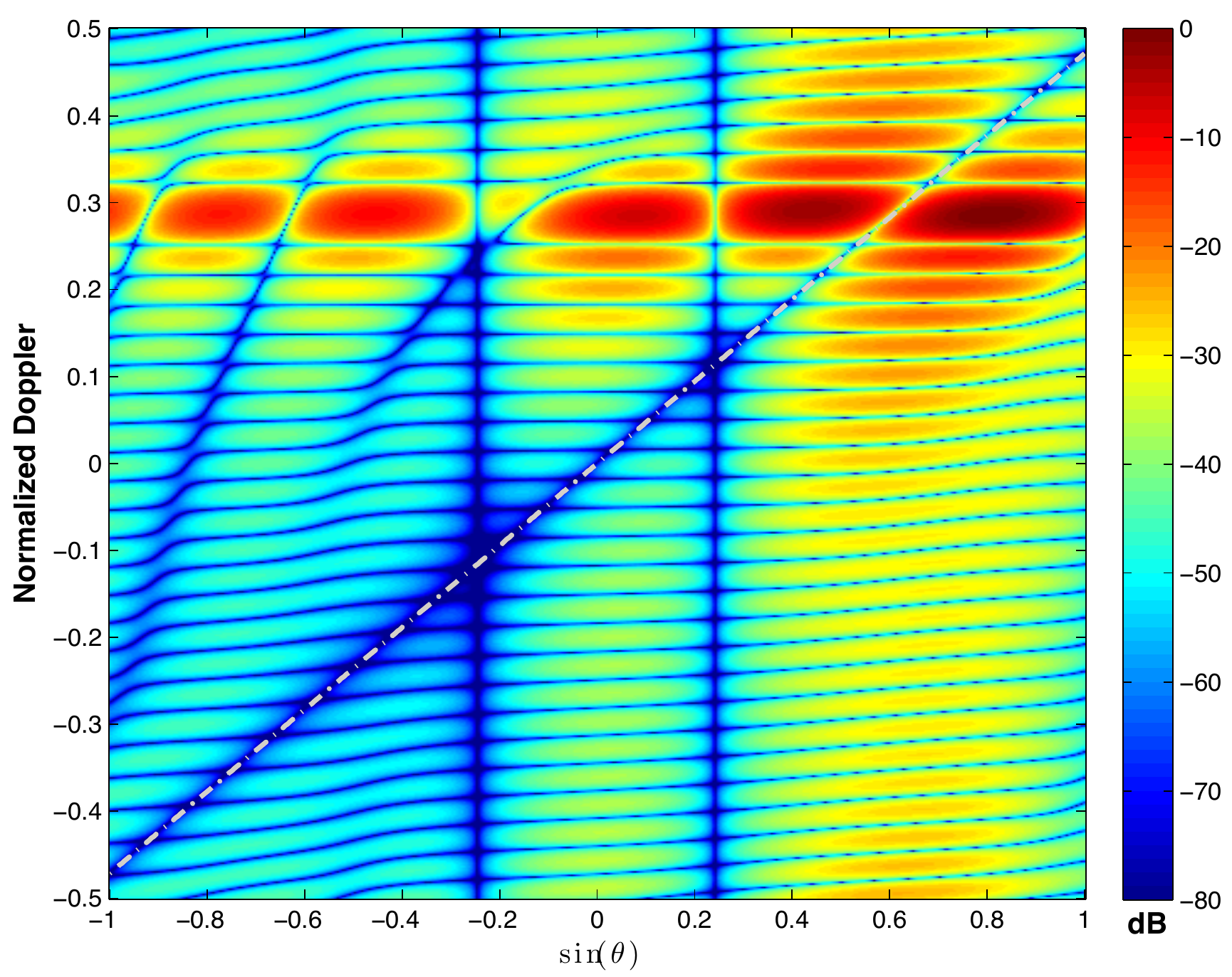} }
  \subfloat [ ]{ \includegraphics[scale=0.5]{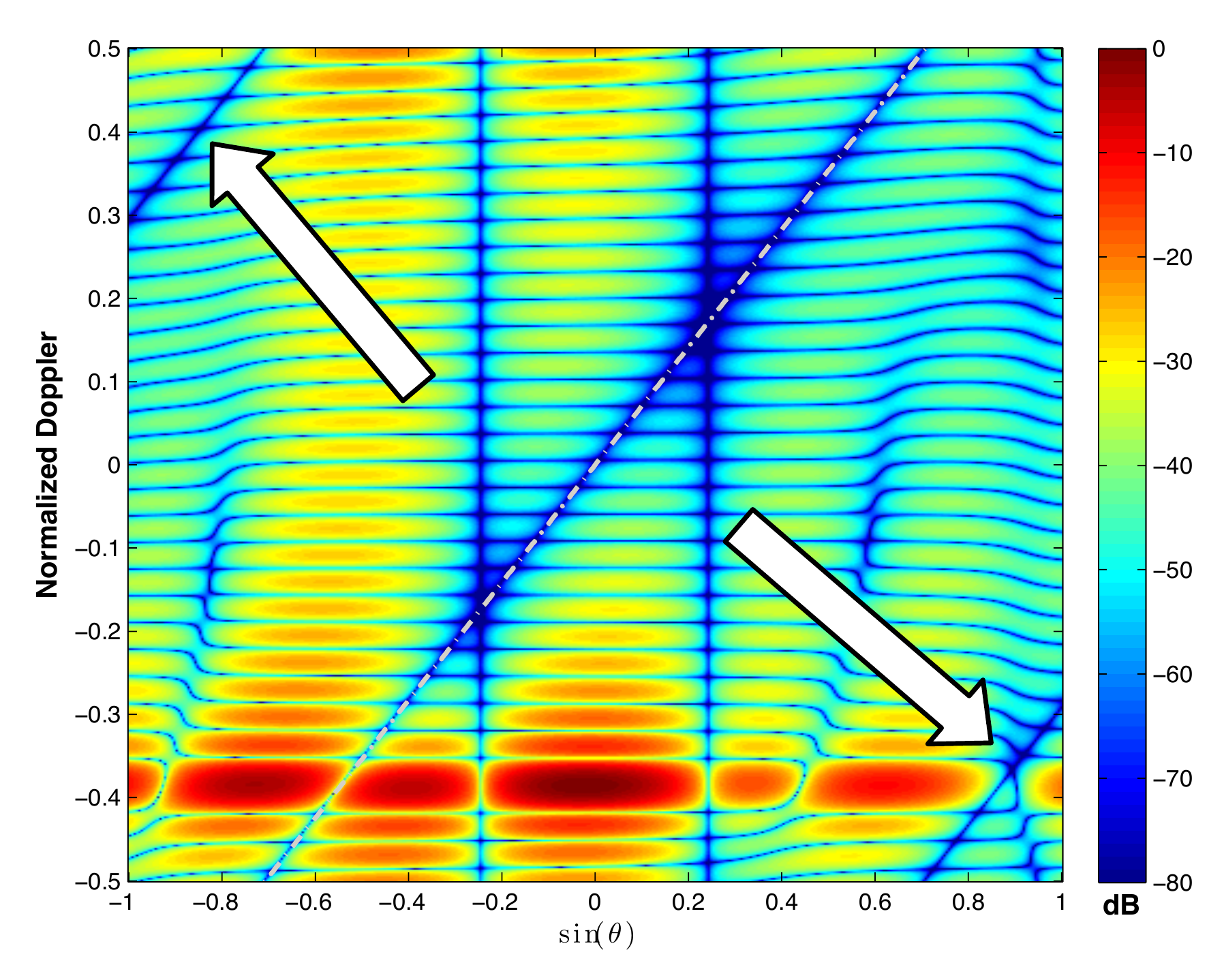} }
\caption{Adapted patterns using designed waveform from alternating minimization, dashed line is the Doppler as a function of angle predicted by theory. In, (a) no clutter Doppler ambiguities, (b) clutter Doppler ambiguities shown with arrows.}
\label{fig9}
\end{figure}
\begin{figure*}[htbp!]
\centering
  \subfloat[ ] { \includegraphics[scale=0.5]{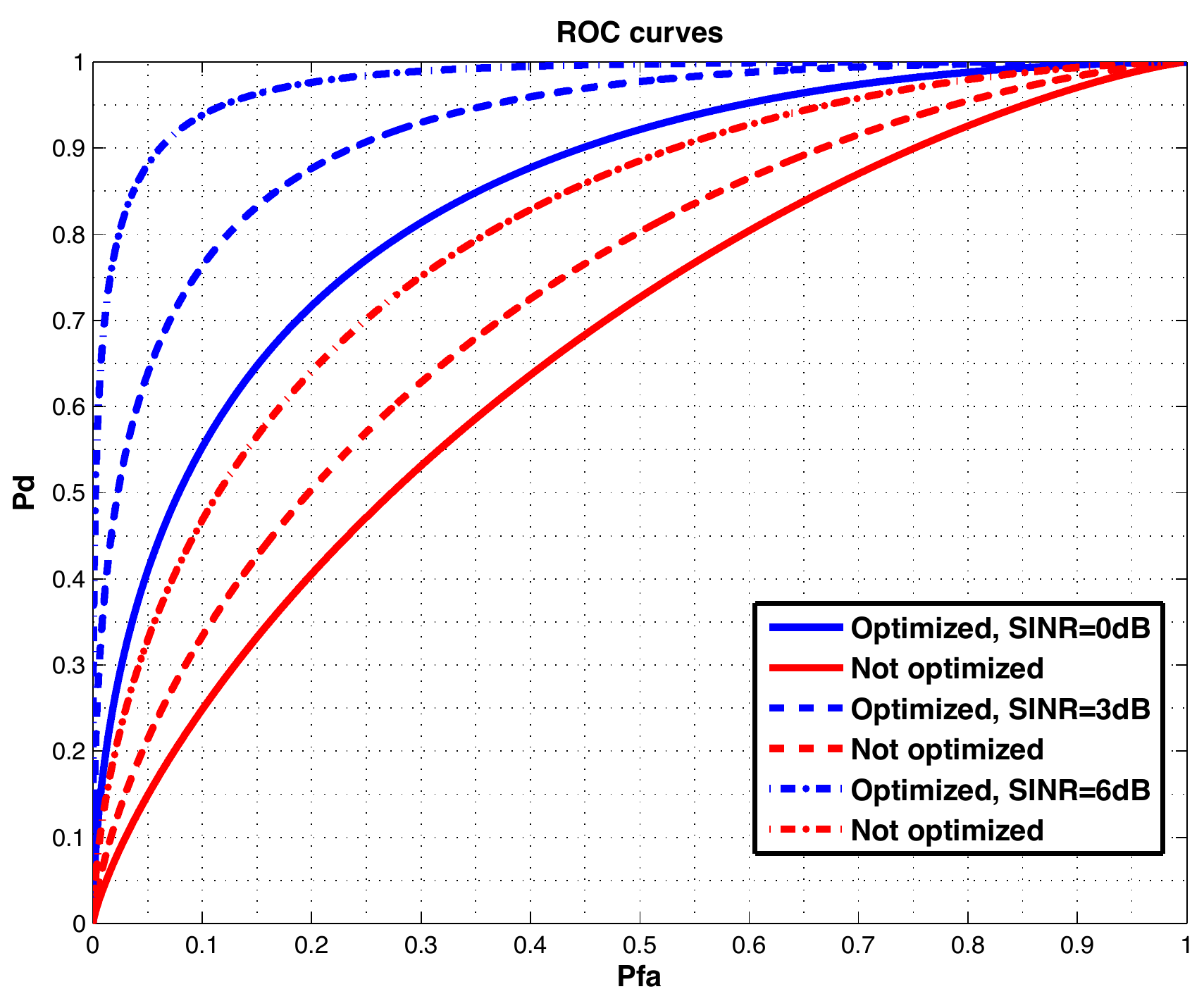} } 
  \subfloat [ ]{ \includegraphics[scale=0.5]{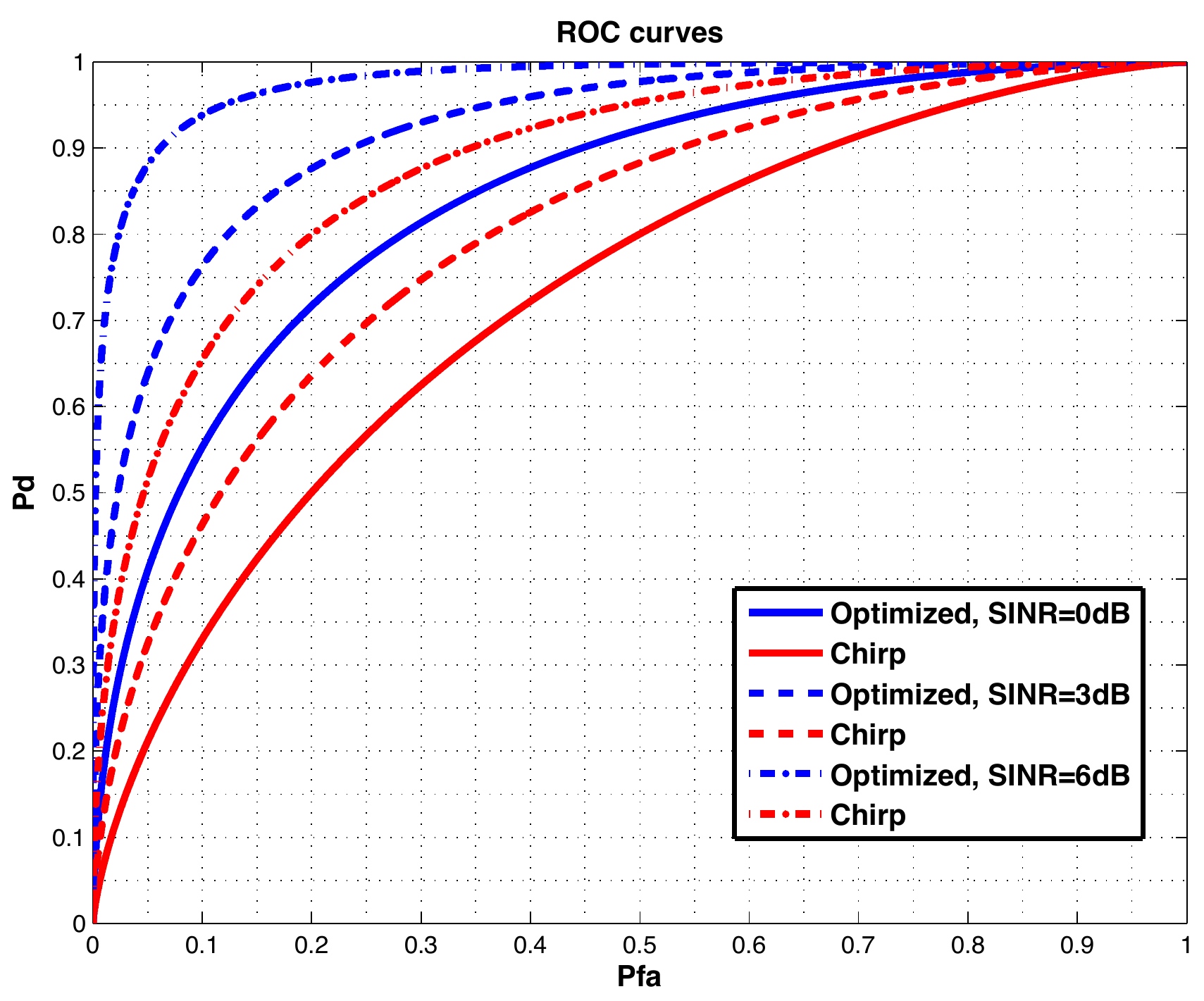} }
\caption{ROC (a) non con. mod design, (b) con. mod. design }
\label{fig10}
\end{figure*}
\begin{figure*}[htbp!]
\centering
  \subfloat[ ] { \includegraphics[scale=0.5]{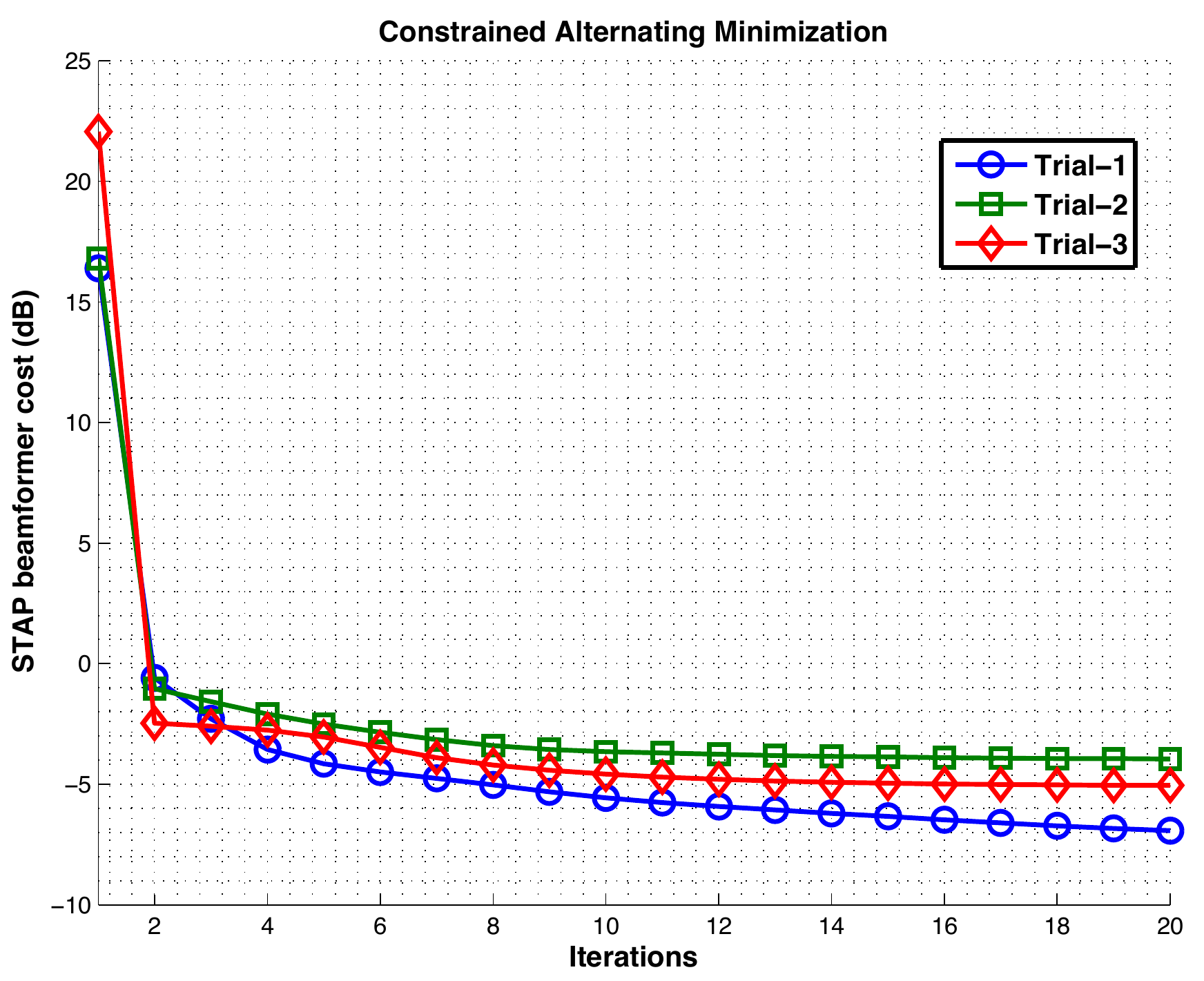} } 
  \subfloat [ ]{ \includegraphics[scale=0.5]{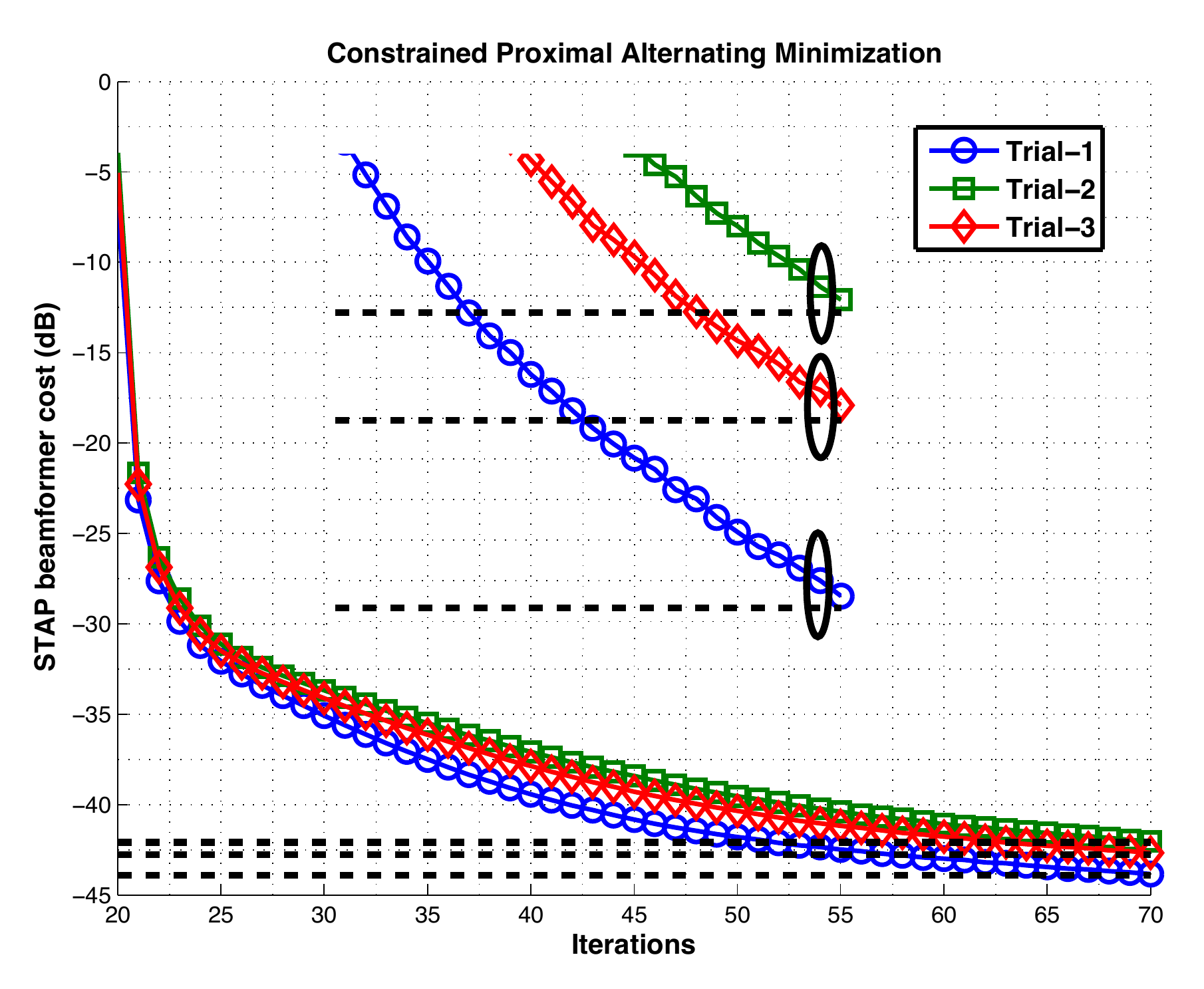} }
\caption{Rank deficient waveform adaptive STAP (a) constrained alternating minimization, (b) constrained proximal alternating minimization.}
\label{fig11}
\end{figure*}
\section{Simulations}\label{sec:simulations}
First we will addresses simulations not specific to radar.

\subsection{Simulations supporting: Prop.~\ref{propos3} and Rem.~\ref{propos5}}

We ran simulations with random $z_n$ and random $d_n$ to analyze $f(\gamma_2)$ and $\gamma_2f(\gamma_2)$ numerically.   In our extensive simulations we chose $z_n$ from complex normal distributions with different means and different variances. Since $d_n>0$ for all $n$, we used uniform distributions with different supports on the positive real axis excluding zero. We show only two representative simulation results for the monotonically increasing and decreasing cases in Fig.~\ref{lagrasupport1}(a)(c), respectively. The corresponding function $\gamma_2f(\gamma_2)$ are also shown in Fig.~\ref{lagrasupport1}(b)(d) for the two cases. 

Simulations for supporting Rem.~\ref{propos5} is presented next. Some parameters specifying the function $r(\gamma_6)$ were simulated randomly with the identical distributions used as in generating Fig.~\ref{lagrasupport2}. The parameter $\kappa=2,P_o=10$ was used in generating Fig.~\ref{lagrasupport2}(a), the function $\gamma_6r(\gamma_6)$ is also shown in Fig.~\ref{lagrasupport2}(b). As such, it is noted that $P_o=10$ is  a a contrived example, typical radar applications will require $P_o$ to be in several hundred KW or several Hundred MW. 

The zero crossing is the intersection of the dashed line (black) with the blue curve in Fig.~\ref{lagrasupport2}(a). Now using $P_o=20$ and keeping the other parameters fixed we obtain Fig.~\ref{lagrasupport2}(c) which shows that $r(\gamma_6)$ is monotonic decreasing whose limit at $\infty$ is 0.

{\bf Radar Specific simulations:} Here onward, some  parameters are common to all the simulation examples and are stated now. The simulation parameters are in SI units  unless mentioned otherwise. To reduce computation complexity while inverting large matrices and computing their eigen-decompositions, we considered the number of, sensors, waveform transmissions, and fast time samples in the waveform as $M=5,L=32, N=5$, respectively. The carrier frequency was chosen to be 1GHz, and the radar bandwidth was 50MHz. The element spacing $d=\lambda_o/2$. 
\subsection{Constrained alternating minimization}
The noise correlation matrix was assumed to have a correlation function given by $\exp(-|0.005n|),\; n=0,1,\ldots,NML$. Two interference sources were considered at $(\theta=0.3941,\phi=0.3)$ and at $(-0.4941,0.3)$. Both these interference sources had identical  discrete correlation functions given by $0.2^{|n|},\; n=\pm 0, \pm 1,\ldots$. To simulate clutter we considered two clutter patches, consisting of five scatters each. The clutter correlation functions corresponding to the two patches were $\exp(-0.2|p|) \mbox{ and } \exp(-0.1|p|), \;p=\pm 0,\pm 1,\ldots,\pm P$. The rest of the parameters are identical to those used in \cite{Setlurradar2013}. 

In Fig.~\ref{fig4}, the STAP beamformer objective vs. iterations are shown for 3 independent, random waveform initializations but the {\it inset} shows 25 independent initializations or trials. The alternating minimization was initialized with waveforms whose fast time samples are chosen independently from a standard complex Gaussian distribution. The algorithm was terminated as soon as the current waveform iterate invalidated the set power constraint. From the figure and its inset it is clear that the STAP beamformer output is non-increasing thereby validating the monotonicity property of this algorithm.  More importantly from Fig.~\ref{fig4}, we see that the final objective value and the iterations to reach it for each trial are different from one another, attributed to the joint non-convexity of the objective w.r.t. $\mathbf{w}$ and $\mathbf{s}$. Sensitivity to the random initialization is therefore duly noted.

\subsection{Constrained proximal alternating minimization}
All the simulations parameters are identical to the previous case. The constrained alternating minimization was initialized with random waveforms as in Fig.~\ref{fig4}, immediately followed by its proximal counterpart. The termination of the former algorithm was identical to the previous case, then, the latter was run for 200 iterations. Three representative trials are shown in Fig.~\ref{fig5}(a)(b), for the constrained alternating minimization and its proximal counterpart. In Fig.~\ref{fig5}(b), the dashed black lines are the final objective values obtained from the min. eigenvector waveform having the same energy as its proximal counterpart. For the three trials and not surprisingly, the proximal objective value, for all practical purposes, is identical to that obtained from the waveform derived from \eqref{lsubopt} as evidenced from the {\it inset}. Therefore validating the implementation of both the constrained as well as its proximal counterpart. From Fig.~\ref{fig5}(b) and unlike Fig.~\ref{fig4}, three accumulation points w.r.t. the objective are clearly visible for the three trials indicating {\it strong convergence}. 
\subsection{Constant modulus}
The constant modulus algorithm was implemented numerically via the KKTs (i.e. \eqref{eq57}) and using the results from Prop.~\ref{propos6}. The simulation parameters are identical to the two previous scenarios. In Fig.~\ref{fig6}, the modulus of the fast time waveform samples vs. iterations are shown for the constant modulus alternating minimization algorithm. As seen from this figure, the algorithm was initialized with a non-constant modulus waveform. For this random initialization, convergence to a constant modulus is achieved in three iterations or less. We have however encountered cases where the algorithm has not converged for several iterations. Nevertheless this problem was not encountered when the algorithm was initialized with  a random constant modulus waveform. Thus in practice, it is advocated that this algorithm be initialized with  an arbitrary constant modulus waveform, viz. a chirp, rectangular pulse, etc..

The ratio of the final objective for the constant modulus algorithm to the objective for the non-constant modulus waveform design using the constrained alternating minimization is seen in Fig.~\ref{fig7}(a)(b) for 200 random waveform initializations. After convergence, not unexpectedly, the constant modulus objective is more than the non-constant modulus objective. This trend is readily observed from Fig.~\ref{fig7}(a)(b) for the 200 trials. This is to be expected since constant modulus waveforms are a subset of their non-constant modulus counterparts. In particular, the amplitude is constrained temporally in the constant modulus design, while the phase is allowed to be optimized. Whereas, the phase and amplitude are both optimized the non-constant modulus design. From these figures we can see that on one end, this ratio is as much as 10dB, and on the other it is almost 0dB. Nonetheless on the average, the non-const. modulus waveforms have lower objective values than objective values derived from the const. modulus waveforms. 
\subsection{Oracle sample support requirements}
The ideal SINR is $\tfrac{\rho_t^2|\mathbf{w}^H_{o}(\mathbf{v}(f_d)\otimes\mathbf{s}_{o}\otimes\mathbf{a}(\theta_t,\phi_t))|^2}{\mathbf{w}^H_{o}\mathbf{R_u}(\mathbf{s}_{o})\mathbf{w}_{o}}$ where $\mathbf{w}_o,\mathbf{s}_o$ are obtained after optimization. Using the estimated covariance matrix, say the sample covariance matrix, the definition of the estimated SINR is $\tfrac{\rho_t^2|\mathbf{w}^H_{est}(\mathbf{v}(f_d)\otimes\mathbf{s}_{est}\otimes\mathbf{a}(\theta_t,\phi_t))|^2}{\mathbf{w}^H_{est}\hat{\mathbf{R}}_{\bf u}(\mathbf{s}_{est})\mathbf{w}_{est}}$, where $\hat{\mathbf{R}}_{\bf u}(\cdot)$ is the estimated sample covariance matrix, and $\mathbf{w}_{est},\mathbf{s}_{est}$ are the optimized weight and waveform vectors by using the estimated covariance in the optimization instead.

A true SINR loss can be computed  by using the estimated i.e. $\hat{\mathbf{R}}_{\boldsymbol{\gamma}}^{pq}$  in \eqref{cluteq}  and running the optimization algorithm for each Monte Carlo trial, resulting in an estimated $\mathbf{s}_{est}$. This is computationally heavy  on our current resources, therefore not reported here.
However, we will assume that an oracle has provided the optimal waveform to be transmitted. Then the oracle loss of SINR due to the estimated covariance  is a  random variable, captured by,
\begin{align*}
SINR_{\mathrm{loss}}=\frac{\mathbf{w}^H_{o}\mathbf{R_u}(\mathbf{s}_{o})\mathbf{w}_{o}}{\mathbf{w}^H_{est}\hat{\mathbf{R}}_{\bf u}(\mathbf{s}_{o})\mathbf{w}_{est}} .
\end{align*}
Random data is now generated from zero mean multivariate complex Gaussian distributions to compute the sample covariance matrices, i.e. $\hat{\mathbf{R}}_{\bf i}, \hat{\mathbf{R}}_{\bf n}$ and $\hat{\mathbf{R}}_{\boldsymbol{\gamma}}^{pq}$. Two hundred Monte Carlo trials were run with differing sample supports. The mean and standard deviation of the oracle  $SINR_{\mathrm{loss}}$ are shown in Fig.~\ref{fig8}(a)(b). Not surprisingly the RMB rule is followed perfectly. For the same sample support, the standard deviation is  a few orders less than the mean. 
\subsection{Adapted patterns}
The adapted pattern for the waveform dependent STAP objective function is expressed as
\begin{align}
\mathcal{P}(f_d,\theta)=\lvert\mathbf{w}_o^H ( \mathbf{v}(f_d)\otimes\mathbf{s}_o \otimes \mathbf{a}(\theta,\phi) ) \rvert^2,\;\mbox{ for a fixed }\phi. 
\label{eq60}
\end{align}
The adapted pattern in \eqref{eq60} is a function of angle, Doppler, the optimal weight and the waveform vectors, $\mathbf{w}_o,\mathbf{s}_o$, respectively. Two examples are shown in Fig.~\ref{fig9}(a)(b). Two interferers at $(\theta=-0.2,\phi=\pi/3)$ and at $(-0.2,\pi/3)$ were chosen. We modeled the clutter discretely from all azimuth angles from $-\pi/2\mbox{ to } \pi/2$ in discrete increments of $-0.005\pi/2$ radians. The clutter patches were fixed at an elevation angle of $\pi/4$ radians. The target was assumed to be at $\theta_t=0.7,\phi_t=\pi/4$ with normalized Doppler equal to 0.31 and $\theta_t=0,\phi_t=\pi/4$ with normalized Doppler equal to -0.4 in Fig\ref{fig9}(a)(b), respectively.  The adapted patterns in Fig.~\ref{fig9} are identical (upto a scaling) to those obtained from the classical STAP adapted pattern. This is not a surprise but is rather reassuring since the waveform in \eqref{eq60} affects all the Doppler frequencies and the azimuths identically. Moreover, we can always consider $\mathbf{s}_o \otimes \mathbf{a}(\theta,\phi)$ as a new /modified spatial steering vector. Hence as expected the inclusion of the optimal waveform will not alter the shape of the classical STAP adapted pattern.
\subsection{Detection}
Here, we investigate the impact of detection using the optimized waveforms and randomly selected waveforms. The detection test for the presence of a target at a particular range cell is cast as a binary hypothesis test,
\begin{equation}\label{eq61}
\begin{aligned}
&\mathcal{H}_0 \,\, : \mathbf{w}^H\bar{\mathbf{y}}=\mathbf{w}^H\mathbf{y_u} \,\,\,
&\mathcal{H}_1\,\, : \mathbf{w}^H\bar{\mathbf{y}}=\mathbf{w}^H\mathbf{y}+\mathbf{w}^H\mathbf{y_u}
\end{aligned}
\end{equation}
where $\mathbf{y},$ $\mathbf{y_u}$ have been been defined in \eqref{eq9}, \eqref{eq10}. Assuming that $\mathbf{y_u}$ is complex normal distributed, the test in \eqref{eq61} is readily evaluated. The weight vector is obtained after the optimization. The ROC curves for SINRs 0dB, 3dB and 6dB are shown in Fig.~\ref{fig10}(a)(b) for the non const. modulus and const. modulus design, respectively. For generating Fig.~\ref{fig10}(a),  a random waveform was used having the same energy as that obtained after the alternating minimization algorithm. The waveform samples were drawn independently from a complex Gaussian distribution. In Fig.~\ref{fig10}(b), a chirp waveform was used having the same bandwidth and energy as its optimized constant modulus counterpart. From these figures and as expected, from a  detection standpoint, an optimized waveform performs much better than transmitting an un-optimized waveform. 
\subsection{Realistic STAP waveform design}
We consider  a scenario frequently encountered in STAP, the sample covariance matrix is rank deficient due to the paucity of training data. The simulation parameters are identical to those used as in Fig.~\ref{fig4}, except that we considered ground clutter from all azimuths in $[-\pi/2,\pi.2]$, similar to those used in generating Fig.~\ref{fig9}. Furthermore, we constrained the rank of the resulting correlation matrices to be 30, equal to the numerical rank of the clutter correlation matrix for generating Fig.~\ref{fig11}. The alternating minimization is first used for 20 iterations assuming an arbitrary diagonal loading factor equal to 100. After termination of this algorithm, the proximal algorithm was employed for 50 iterations. The results are shown in Fig.~\ref{fig11}(a)(b). It is noted that in practice the `true' min. eigenvector cannot be computed due to the rank deficiency. Interestingly nonetheless, the designed waveforms after the proximal optimization result in a STAP objective value which is close to that obtained from the waveform estimated from the 'true' min. eigenvector. However, extensive simulations for the rank deficient STAP are needed to verify if this behavior is seen for other classes of noise plus interference, and clutter correlation matrices. 

\section{Conclusions}
Waveform design in STAP was the focus of this report assuming the dependence of the clutter response on the transmitted waveform. Our preliminary simulations indicate that the objective function was jointly non-convex in the weight and waveform vectors. However, we showed analytically that the objective function is individually convex in the waveform and the weight vector. This motivated a constrained alternating minimization technique which iteratively optimizes one vector while keeping the other fixed. 
A constrained proximal alternating minimization technique was propose to handle rank deficient STAP correlation matrices. To addresses practical design constraints we incorporated constant modulus constraints in our alternating minimization formulation. Simulations were chosen to demonstrate the monotonic decrease  of the MVDR objective function using this alternating minimization algorithm. Preliminary simulations were presented to validate the theory.

\appendices




\section*{Acknowledgment}
This work was sponsored by US AFOSR under project 13RY10COR. All views and opinions expressed here are the authors own and does not constitute endorsement from the Department of Defense or the USAF. 

%
%

\ifCLASSOPTIONcaptionsoff
  \newpage
\fi



%


\bibliographystyle{IEEEtran}
\bibliography{refs}


%




\end{document}